\numberwithin{equation}{section}
\theoremstyle{plain}
\newtheorem{theorem}{Theorem}[section]
\newtheorem{proposition}[theorem]{Proposition}
\newtheorem{lemma}[theorem]{Lemma}
\numberwithin{theorem}{section}
\newcommand{\upbar}[1]{\,\overline{\! #1}}
\newcommand{\nn}{\nonumber}
\DeclareMathOperator \sign{sign}
\newcommand{\mc}[1]{{\mathcal #1}}
\newcommand{\mf}[1]{{\mathfrak #1}}
\newcommand{\bb}[1]{{\mathbb #1}}
\renewcommand{\epsilon}{\varepsilon}
\newcommand{\id}{{1 \mskip -5mu {\rm I}}}
\newcommand{\bam}{\,\overline {\!m\!}\,}
\newcommand{\rme}{\mathrm{e}}
\newcommand{\rmi}{\mathrm{i}}
\newcommand{\rmd}{\mathrm{d}}
\begin{document}

\title[Front fluctuations for the Stochastic C-H equation]{Front fluctuations for the stochastic Cahn-Hilliard equation}

\author [L.\ Bertini] {Lorenzo Bertini}
\address{Lorenzo Bertini, Dipartimento di Matematica, SAPIENZA Universit\`a di Roma, P.le Aldo Moro 5, 00185 Roma, Italy} \email{bertini@mat.uniroma1.it}

\author [S.\ Brassesco] {Stella Brassesco}
\address{Stella Brassesco, Departamento de Matem\'aticas, Instituto Venezolano de Investigaciones Cient\'{\i}ficas, Apartado Postal 20632, Caracas 1020-A, Venezuela} \email{sbrasses@ivic.gob.ve}

\author [P.\ Butt\`a] {Paolo Butt\`a} 
\address{Paolo Butt\`a, Dipartimento di Matematica, SAPIENZA Universit\`a di Roma, P.le Aldo Moro 5, 00185 Roma, Italy} \email{butta@mat.uniroma1.it}

\subjclass[2010]{Primary 60H15, 35R60; secondary 82C24.}


\begin{abstract}
We consider the Cahn-Hilliard equation in one space dimension, perturbed by the derivative of a space and time white noise of intensity $\epsilon^{\frac 12}$, and we investigate the effect of the noise, as  $\epsilon \to 0$, on the solutions when the initial condition is a front that separates the two stable phases. We prove that, given $\gamma< \frac 23$, with probability going to one as $\epsilon \to 0$, the solution remains close to a front for times of the order of $\epsilon^{-\gamma}$, and we study the fluctuations of the front in this time scaling. They are given by a one dimensional continuous process, self similar of order $\frac 14$ and non Markovian, related to a fractional Brownian motion and for which a couple of representations are given.  \end{abstract}

\keywords{Cahn-Hilliard equation, interface dynamics.}

\maketitle

\section {Introduction}
\label{sec:1}
           
The kinematics of phase segregation for binary alloys can be described by the Cahn-Hilliard equation \cite{CH1,CH2,C},
\begin{equation}
\label{p1}
\partial_t u = -\Delta\Big(\frac 12 \Delta u - V'(u) \Big)\;,
\end{equation}
where $\Delta$ is the Laplacian and $V\colon \bb R \to \bb R$ is a symmetric double well potential which, for the sake of concreteness, is chosen as 
\begin{equation}
\label{V}
V(u) = \frac 14(u^2-1)^2\;.
\end{equation}
The scalar field $u=u(x,t)$ is an order parameter and represents the relative concentrations of the two species. The space homogeneous stationary solutions $u=\pm1$ are to be interpreted as  the pure phases of the alloy. In contrast with the Allen-Cahn flow \cite{AC}, the evolution governed by \eqref{p1} locally conserves the mass $\int\!\rmd x\, u(x,t)$. Indeed, \eqref{p1} can be viewed as the $H^{-1}$ gradient flow of the van der Waals free energy functional,
\begin{equation}
\label{AC}
\int\!\rmd x\, \Big[ \frac 14 |\nabla u|^2 + V(u) \Big]\;.
\end{equation}
In particular, the critical points of \eqref{AC} with mass conservation constraint are stationary solutions to \eqref{p1}. Moreover, when \eqref{p1} is considered in a bounded domain, its solutions converge, apart from exceptional initial conditions, to a minimizer of \eqref{AC} with the mass fixed by the initial datum and the yet unspecified boundary condition. By introducing an appropriate scaling parameter, the main issues from a heuristic point of view are the following ones. First, an early, relatively fast, stage of the dynamics (referred to as spinodal decomposition), in which the flow \eqref{p1} approaches to a critical point of \eqref{AC} by realizing a local separation of the pure phases $\pm 1$. A later, slow, stage of the evolution toward the minimizer, taking place in a small neighborhood of the unstable manifold of the critical point.

Let us focus on the one dimensional case. For a bounded domain, a detailed analysis of the slow evolution of patterns of the rescaled Cahn-Hilliard equation is given in \cite{ABF} and \cite{BX1,BX2}. More precisely, in a neighborhood of a stationary solution to \eqref{p1} having a given number of transition layers, the exponentially slow speed of the layer motion is determined. In addition, the existence of an unstable invariant manifold attracting solutions exponentially fast in the scaling parameter is established. 

Analogous issues can be posed when the equation \eqref{p1} is considered on the whole line without scaling parameter. By interpreting \eqref{AC} as an action functional, it is easy to show that \eqref{p1} admits ``droplet-shaped'' stationary solutions, i.e., profiles of the form $u_\mathrm{droplet}(x) = g(x-x_0)$ with $x_0\in \bb R$ and $g$ symmetric, monotone for $x>0$, and exponentially approaching its asymptotic value. The function $g$ can be determined up to an arbitrary parameter which plays the role of the mass constraint. The linear stability of such stationary solutions is determined by the spectrum of the linearization of \eqref{p1} around $u_\mathrm{droplet}$. By translation invariance, zero is an eigenvalue, which is responsible for the exponentially small motion in finite but large domains \cite{ABF,BX1,BX2}. Those results suggest that the remainder of the spectrum is bounded away from zero, but we are not aware if this has been proven.

Simpler stationary solutions are the ``kink-shaped'' profiles $\bam_{x_0}$, which describe a  transition between the pure phases $\pm 1$ at $\pm\infty$ with ``center'' $x_0\in \bb R$. By translation invariance $\bam_{x_0}(x) = \bam(x-x_0)$ with $\bam=\bam_0$. For the  specific choice \eqref{V} of the potential, $\bam(x) = \tanh(x)$. Its linear stability has been analyzed in  \cite{BKT} and \cite{H}. Again zero is a simple eigenvalue  but, in contrast both to the droplet and to the kink for the Allen-Cahn dynamics, it is an accumulation point of the spectrum. Hence, the manifold 
\begin{equation}
\label{mcm}
\mc M = \left\{\bam_{x_0},\; x_0\in\bb R\right\}
\end{equation}
is not exponentially attracting for the Cahn-Hilliard flow. Sharp estimates on the actual rate of convergence are proven in \cite{BKT} and \cite{H}: roughly speaking, an initial datum close to $\mc M$ relaxes to a front in $\mc M$ with a diffusive behavior. We remark that this is due to the fact that the domain is unbounded.

Both from a conceptual and a modelling point of view, the addition of a small random forcing term to \eqref{p1} appears natural. Clearly, the random force should preserve the local mass and, under suitable assumptions, can be taken to be Gaussian. We thus consider the stochastic partial differential equation,
\begin{equation}
\label{p2}
\partial_t u = -\Delta\Big(\frac 12 \Delta u - V'(u) \Big) + \sqrt {\epsilon}\, \nabla ( a_\epsilon \dot W)\;,
\end{equation}
where $\dot W$ is a space-time white noise, $a_\epsilon$ is a convenient space cut-off,  and $\epsilon>0$. In the framework of dynamical critical phenomena this is, with the choice \eqref{V}, a model for the evolution with conserved order parameter (Model B in the review \cite{HH}). In spite of the short-scale singularity of the forcing term and of the unbounded domain, in the one dimensional case this equation has a meaningful mild formulation for suitable $a_\epsilon$ (see the next section). We refer to the works \cite{AK,Cw,DD} for existence results on stochastic perturbations of Cahn-Hilliard equation in bounded domains. The effect of the noise on the motion of the transition layers analyzed in the aforementioned works \cite{ABF,BX1,BX2}, is studied in  \cite{ABK}. It is there shown that the random fluctuations dominate the exponentially slow deterministic dynamics and an effective system of stochastic ordinary differential equations for the motion of the layers is derived. 

The purpose of the present paper is to analyze the fluctuations of the kink profile $\bam$ due to the random noise in \eqref{p2}, in the limit $\epsilon \to 0$. Let us first review the corresponding results for the stochastic non-conservative Allen-Cahn equation $\partial _tu = \frac 12 \Delta u - V'(u) + \sqrt {\epsilon}\, \dot W$. If the initial datum is $\bam$, in \cite{BDP,BBDP,F}, it is shown that the solution at times $\epsilon^{-1}t$ stays close to $\bam_{\zeta_\epsilon(t)}$ for some random process $\zeta_\epsilon(t)$ which converges to a Brownian motion as $\epsilon\to 0$. To explain heuristically this result, let us regard the random forcing term as a source of independent small kicks, which we decompose along the directions parallel and orthogonal to $\mc M$. The orthogonal component is exponentially damped by the deterministic drift, while the parallel component, associated to the zero eigenvalue of the linearization around $\bam_\zeta$, is not contrasted and, by independence, sums up to a Brownian motion. 

For the Cahn-Hilliard dynamics this picture has to be completely modified, taking into account the following two related effects. The local mass conservation, which implies that fluctuations of the interface center can occur only in infinite volume, as the extra mass has ``to come from infinity'', and the absence of a spectral gap, which implies that perturbations in the direction orthogonal to $\mc M$ cannot be neglected. More precisely, the projection of the noise in the direction parallel to $\mc M$ vanishes, but its perpendicular component is not exponentially damped and gives rise, with a suitable delay, to the front fluctuations. On heuristic grounds, we expect that the increments of the resulting process are not independent but negatively correlated. Indeed, after a fluctuation in a given direction, the extra mass is reabsorbed, causing a successive fluctuation in the opposite direction. This mechanism is slower than that for the Allen-Cahn dynamics, and therefore a finite displacement of the interface should occur at times of the order $\epsilon^{-2}$. On the other hand, it is not at all clear whether the kink-like shape of the solution survives until so long times. If this is the case, as the limiting process is anyway the sum of approximately Gaussian increments, we expect it to be a self-similar of order $\frac 14$, non-Markovian  Gaussian process.

In this paper we show that, given $\gamma<\frac 23$,  with probability going to 1 as $\epsilon \to 0$, the solution to the equation \eqref{p2} with initial condition $\bam$
stays close to $\bam_{Z_\epsilon(t)}$ up to times of order  $\epsilon^{-\gamma}$, and we  describe the statistics of the kink fluctuations in this regime. More precisely, we prove that 
\[
u(\cdot,\epsilon^{-\gamma}t) \approx \bam_{Z_\epsilon(t)}\;, \qquad Z_\epsilon(t) \approx \epsilon^{\frac 12-\frac \gamma4} Z(t)\;,
\]
where the process $Z$ is the odd part of  a two sided  fractional Brownian motion with self-similarity parameter $\frac 14$, which is indeed a non-Markovian process, with negatively correlated increments.   As previously remarked, the above result suggests that the order of the time needed to obtain a finite displacement of the front should be $\epsilon^{-2}$, but the analysis requiered to reach this scaling is not clear.

Actually, the heuristic picture discussed before has been substantiated for the stochastic phase field equation \cite{BBBP}. In that case, due to the weak coupling between the order parameter and the phase field, there is a sharp separation between the instantaneous noise kicks and their delayed contribution to the front propagation; the analysis can thus be carried out up to the longer time scale. 

We notice that, in contrast to the kink fluctuations, the picture describing the fluctuations of the droplet for \eqref{p2} on the whole line should be analogous to that of the kink for the Allen-Cahn equation. In particular, the droplet fluctuations should become of order one at times $\epsilon^{-1}t$ and converge to a Brownian motion, due to the fact that the mass conservation is not present here, and the droplet can move freely. 

\section {Notation and results} 
\label{sec:2}

We consider, for each $\epsilon>0$, the process $u(x,t)$, $x\in \bb R$ and $t>0$, solution to the initial value problem for the  Cahn-Hilliard equation with a conservative stochastic perturbation,
\begin{equation}
\label{sche}
\begin{cases} \partial_t u= -\partial_x^2\Big(\frac 12 \partial_x^2 u - V'(u) \Big) + 
\sqrt {\epsilon}\, \nabla\big( a_{\epsilon} \dot W\big)\;, \\ u(x,0)=u_0(x)\;, \end{cases}      
\end{equation}
where $\partial_x^j=\frac{\partial^j}{\partial x^j}$ ($j\in\bb N$),  $V(u)$ as in \eqref{V}, and $a_{\epsilon}(x) =a(x\epsilon ^{\beta})$, for $\beta>2 $ and $a$ a $C^\infty$ positive function with $\mathrm{supp}(a) \subset [-1,1]$,  $a(0)=1$, and $\|a\|_{\infty}=1$. Finally, $\dot W=\dot W_{x,t}$ is a space-time white noise, defined on a probability space $\big( \Omega,\mf F,\bb P\big)$. The initial condition $u_0$ is taken  close to the centered interface $\bam(x)=\tanh(x)$.

The precise sense of the above equation is given by the integral equation obtained in terms of the Green function $G(x,y,t)$ corresponding to the operator $\rme^{-\frac 12 t\, \Delta^2}$,  where $\Delta^2 f=\partial _x^4 f$ when restricted to $f\in C^{\infty}$, 
\begin{equation}
\label{freeie}
\begin{split}
u(x,t) & = \int\! \rmd y\,  G(x,y,t)\,u_0(y) \\ & \quad + \int_0^t\! \rmd s \int\!\rmd y\, \partial_y^2 G(x,y,t-s) V'(u(y,s)) +  \sqrt{\epsilon}\, Y(x,t)\;.
\end{split}
\end{equation}
The last term above is the Gaussian process obtained formally as 
\[
Y(x,t)= \int_0^t \!\rmd s \int\!\rmd y\, G(x,y,t-s) \partial_y\big(a_{\epsilon}(y)\dot W_{y,s}\big)\;.
\]
Precisely, $Y(x,t)$ is the centred Gaussian process with covariance,
\begin{equation}
\label{cov}
\bb E\big(Y(x,t)Y(x',t')\big) = \int^{t\wedge t'}_0\!\rmd s \int\!\rmd y\,\partial _y G(x,y,t-s) \partial _y G(x',y,t'-s) a_{\epsilon}(y)^2\;.
\end{equation}

As usual, for $g\in L_2(\bb R \times [0,\infty))$, we denote $\dot W(g) = \int\! g(x,t)\, \rmd W_{x,t}$, omitting the variables $x,t$ in the integral when clear otherwise. The elements in the space $\Omega$ will be denoted by $\omega$. We also consider the filtration,
\[
\mf F_t = \sigma\{\dot W(A\times [0,t])\,;\; A\in \mf B(\bb R)\},\quad \mf B(\bb R) \mbox{ the Borel sets in $\bb R$.}
\]

In the sequel, we will denote by $C$ a generic positive constant, whose numerical value may change from line to line and from one side to the other in an inequality. The notation $a\land b$ $(a\lor b )$  stands for the minimum (maximum) between the real numbers $a$ and $b$. Given $p\in [1,\infty]$, we let $\|\cdot\|_p$ be the norm in $L_p(\bb R,\rmd x)$. We consider $C(\bb R_+)$ equipped with the (metrizable) topology of uniform convergence in compacts, and denote by $\Longrightarrow$   weak convergence of processes in that space.  Finally, to simplify the writing, we assume that the scaling parameter $\epsilon \in (0,1)$.

\smallskip
Our main results are stated as follows.

\begin{theorem}
\label{t1}
Given $0<\gamma<\frac 23$ and $T>0$, there exists a set $B(\epsilon,\gamma,T) \in \mf F$ such that $\bb P\big( B(\epsilon,\gamma,T)\big)\longrightarrow1$ as $\epsilon \to 0$, and, for $\omega\in B(\epsilon,\gamma,T)$, the stochastic Cahn-Hilliard equation \eqref{sche} with initial condition $\bam$ has a unique bounded  continuous solution $u(x,t)$, for $x\in \bb R$ and $t\le \epsilon^{-\gamma}T$. Moreover, there exists a one dimensional $\mf F _t$-adapted process $\zeta_\epsilon (t)$ such that
\begin{itemize}
 \item[\rm{(i)}] For each $\eta >0$,
\[
 \lim _{\epsilon \to 0} \bb P \Big(\sup_{t\le \epsilon ^{-\gamma}T} \left\| u(\cdot,t)-\bam_{\zeta_\epsilon(t))}\right\|_\infty > \epsilon^{(1-\gamma) \land \frac 12  - \eta} \Big)=0 \;.
 \]
\item[\rm{(ii)}] Consider the real process $X_\epsilon(t) := \epsilon^{-\frac 12 +\frac\gamma4 }\zeta_\epsilon(\epsilon^{-\gamma}t)$ . Then, 
\[
X_\epsilon(t)  \Longrightarrow (8\pi)^{\frac 14} r(t)\quad \mbox{as } \epsilon \to 0,
\] 
 where $r(t)$ is the one dimensional centered Gaussian process with covariance, 
\begin{equation}
\label{dcovr}
\bb E\big(r(t)\,r(t')\big) = \sqrt{t+t'}-\sqrt{t-t'}\;,  \qquad t\ge t'\;.
\end{equation}
\end{itemize}   
\end{theorem}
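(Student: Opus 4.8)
The plan is to adapt the moving-frame strategy used for the stochastic Allen--Cahn equation in \cite{BDP,BBDP,F} to the conservative dynamics, the essential novelty being dictated by the two features stressed in the Introduction: the vanishing of the parallel projection of the noise (local mass conservation) and the absence of a spectral gap for the linearization $\mc L = -\partial_x^2\big(\frac 12\partial_x^2 - V''(\bam)\big)$, for which $\mc L\,\partial_x\bam = 0$. First I would write the mild solution \eqref{freeie} in the moving frame as $u(x,t) = \bam_{\zeta_\epsilon(t)}(x) + v(x,t)$ and fix the center $\zeta_\epsilon(t)$ by an orthogonality constraint of the form $\langle v(\cdot,t),\psi_{\zeta_\epsilon(t)}\rangle = 0$, where $\psi_{x_0}$ is the translate of the zero mode of the formal adjoint $\mc L^*$, normalized so that $\langle\partial_x\bam_{x_0},\psi_{x_0}\rangle = 1$. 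Since the constraint is preserved in time, differentiating it and inserting the equation for $u$ yields an It\^o equation $\rmd\zeta_\epsilon = b_\epsilon(v)\,\rmd t + \rmd M_\epsilon$ for the center, in which $\langle\mc L v,\psi\rangle$ drops out because $\mc L^*\psi = 0$. As emphasized in the Introduction, the structure of the conservative forcing $\sqrt\epsilon\,\nabla(a_\epsilon\dot W)$ makes its projection $M_\epsilon$ onto the translation direction vanish to leading order; hence the front is driven only through the drift $b_\epsilon(v)$, that is, through the orthogonal fluctuation $v$, with the delay announced heuristically before.

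I would then treat $v$ perturbatively. To leading order $v$ obeys the linear SPDE $\partial_t v = \mc L v + \sqrt\epsilon\,\nabla(a_\epsilon\dot W)$, projected off the translation mode, whose solution is the Gaussian stochastic convolution against the Green function $G$ of $\rme^{-\frac 12 t\Delta^2}$. The self-similar scaling $G(x,y,t) = t^{-\frac 14}g\big((x-y)t^{-\frac 14}\big)$ is the origin of the order-$\frac 14$ behaviour: using \eqref{cov} with $a_\epsilon^2\to 1$, and the dissipativity of $\mc L$ on the orthogonal complement (the slowly relaxing near-zero modes being precisely the ones reorganized into $\zeta_\epsilon$), one checks that $\|v(\cdot,t)\|_\infty = O(\epsilon^{\frac 12})$ up to subpolynomial factors, whereas the time-integrated feedback $b_\epsilon(v)$ makes $\zeta_\epsilon$ accumulate to size $\epsilon^{\frac 12-\frac\gamma4}$ at $t = \epsilon^{-\gamma}$. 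Being a linear functional of the Gaussian field $v$, the rescaled center $X_\epsilon(t) = \epsilon^{-\frac 12+\frac\gamma4}\zeta_\epsilon(\epsilon^{-\gamma}t)$ is asymptotically Gaussian; the heart of the matter is then to evaluate its covariance explicitly from \eqref{cov}, to pass to the limit $\epsilon\to 0$, and to identify it with \eqref{dcovr}. The negative correlation of the increments --- and thus the replacement of the Brownian exponent $\frac 12$ by the self-similarity $\frac 14$ --- comes exactly from the delayed-feedback kernel, while the normalizations of $\partial_x\bam$ and of the profile $g$ produce the prefactor $(8\pi)^{\frac 14}$.

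The main obstacle, and the reason for the threshold $\gamma<\frac 23$, is the long-time control of the corrections in the absence of a spectral gap. Because zero is an accumulation point of the spectrum of $\mc L$, the orthogonal modes relax only diffusively, so exponential damping is unavailable; instead I would rely on the sharp semigroup and Green-function decay estimates of \cite{BKT,H} to bound $v$ and the nonlinear remainder in the Duhamel expansion of \eqref{freeie}. Expanding $V'(\bam+v)$, the leading nonlinear term is of size $\epsilon$, and its feedback on the center accumulates to order $\epsilon\cdot\epsilon^{-\gamma}=\epsilon^{1-\gamma}$; requiring this to be negligible against the linear front displacement $\epsilon^{\frac 12-\frac\gamma4}$ gives $1-\gamma > \frac 12-\frac\gamma4$, that is $\gamma<\frac 23$. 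This balance also accounts for the error exponent in (i): the distance $\|u(\cdot,t)-\bam_{\zeta_\epsilon(t)}\|_\infty = \|v(\cdot,t)\|_\infty$ is controlled by the larger of the linear orthogonal fluctuation $\epsilon^{\frac 12}$ and the nonlinear correction $\epsilon^{1-\gamma}$, i.e.\ by $\epsilon^{(1-\gamma)\wedge\frac 12}$, the loss $\epsilon^{-\eta}$ absorbing logarithmic factors and the supremum over $t\le\epsilon^{-\gamma}T$.

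Finally I would assemble the argument. Defining $B(\epsilon,\gamma,T)$ as the event on which the above a priori bounds on $v$, on the nonlinear remainder, and on the center hold, a Gaussian concentration (Borell--TIS) estimate, or a direct high-moment bound on the stochastic convolution, shows $\bb P\big(B(\epsilon,\gamma,T)\big)\to 1$; on this event a contraction and continuation argument applied to \eqref{freeie} yields the existence and uniqueness of the bounded continuous solution together with the estimate (i), the $\mf F_t$-adaptedness of $\zeta_\epsilon$ being clear from its defining constraint. For (ii), tightness of $X_\epsilon$ in $C(\bb R_+)$ follows from Kolmogorov's criterion applied to the increment moments $\bb E|X_\epsilon(t)-X_\epsilon(t')|^{2p}$, uniformly in $\epsilon$, and the limit is identified through the covariance \eqref{dcovr} computed above, the limiting process being Gaussian and hence determined by its covariance alone.
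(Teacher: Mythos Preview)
Your scaling heuristics --- the balance $\epsilon^{1-\gamma}$ versus $\epsilon^{\frac12-\frac\gamma4}$ giving $\gamma<\tfrac23$, and the error exponent $(1-\gamma)\wedge\tfrac12$ in (i) --- agree with the paper. The route you propose, however, differs from the paper's and has two concrete gaps.

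First, the moving-frame It\^o SDE built on the adjoint zero mode does not close. With $\mc L^*=-L\partial_x^2$, $L=\tfrac12\partial_x^2-V''(\bam)$, any $\psi$ with $\mc L^*\psi=0$ satisfies $\partial_x^2\psi\in\ker L$; the bounded solutions of $Lf=0$ are multiples of $\bam'$, so either $\partial_x^2\psi=c\,\bam'$ and $\psi$ grows linearly at infinity, or $c=0$ and $\psi$ is constant, in which case $\langle v,\psi\rangle=0$ is just mass conservation and carries no information about the center. In either case the pairings $\langle v,\psi_{\zeta}\rangle$ and $\langle\nabla(a_\epsilon\dot W),\psi_\zeta\rangle$ are not usable as you need them. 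The paper avoids this entirely: it works in the \emph{fixed} frame $v=u-\bam$, proves the a priori bounds $\|v\|_\infty\lesssim\epsilon^{\frac12-\frac\gamma4}$ and $\|v-\sqrt\epsilon\,H\|_\infty\lesssim\epsilon^{1-\gamma}$ directly from the integral equation \eqref{linie} (Proposition~\ref{gapprox}), and only afterwards defines $\zeta_\epsilon$ through $\langle u-\bam_{\zeta},\bam'_{\zeta}\rangle=0$, with the elementary expansion $\zeta_\epsilon\approx-\tfrac34\langle v,\bam'\rangle\approx-\tfrac34\sqrt\epsilon\,\langle H,\bam'\rangle$ (Lemmas~\ref{32}--\ref{33}). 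No SDE for $\zeta_\epsilon$ is ever written.

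Second, the limit cannot be read off from the free covariance \eqref{cov}. The projection $\langle Y(\cdot,t),\bam'\rangle$ built from $G$ has variance only of order $t^{1/4}$ and is subleading; the order-$\tfrac14$ self-similarity of $r$ does \emph{not} come from the $t^{-1/4}$ scaling of $G$. What the paper uses is the kernel $K$ of the linearized semigroup $\rme^{-t\,\partial_xL\partial_x}$ and its explicit long-time expansion $K=K^\ast+k$ from \cite{BKT} (Proposition~\ref{BKT}). The single term of $\partial_xK^\ast$ that survives the scaling is the \emph{heat-kernel} piece $-\tfrac12\,\bam'(x)(2\pi s)^{-1/2}\rme^{-y^2/(8s)}\sign(y)$, which yields the decomposition $H=H_1+H_2$ with $H_2(x,t)=-\bam'(x)\,h_2(t)$ and $\sup|H_1|=O(\epsilon^{-\xi})$; the covariance \eqref{dcovr} and the prefactor are then obtained by a direct computation on the one-dimensional process $h_2$ (Proposition~\ref{asymh}, Lemma~\ref{hwcr}). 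So the exponent $\tfrac14$ reflects the heat-equation structure hidden in the long-time linearized kernel $K^\ast$, and the potential $V''(\bam)$ is essential --- replacing $K$ by $G$ loses the leading term.
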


\begin{theorem}
\label{t2}
The one dimensional Gaussian  process $r(t)$ with covariance given by \eqref{dcovr} is a continuous process, self-similar of order $\frac 14 $. It admits  the following three representations, in terms of well known processes:
\begin{itemize}
\item  Let $\nu^{(H)}(t)$ be the usual two sided fractional Brownian motion with Hurst parameter $H$, and  $\nu_O^{(H)}$  its odd part. Then, 
\[  
r(t)= 2\,\nu_O^{(\frac 14)}(t)\;.
\]
 \item Let  $h(x,t)$ be  the solution to the heat equation in $\bb R$ with additive space-time white noise and initial condition $0$. Then,
 \[
r(t)= (2\pi)^{\frac 14} h(0,t)\;.
\]
\item Consider $b(s)$ a standard Brownian motion. Then, 
\[
r(t)=c\int_0^t\! \frac{s^{\frac14}}{(t^2-s^2)^{\frac14}}\, \rmd b(s)\;.
\] 
where the constant $c = \big(\frac 12 B(\frac 34, \frac 34)\big)^{-\frac12}$, with $B(\cdot,\cdot)$ the usual  Euler beta  function. 
\end{itemize}
\end{theorem}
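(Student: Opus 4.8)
The common thread is that $r$ and each of the three candidate processes are centered and Gaussian, so every claim reduces to a covariance computation: two centered Gaussian processes with the same covariance have the same law, and self-similarity and continuity are themselves covariance-level statements. For self-similarity I would simply observe that
\[
\bb E\big(r(\lambda t)\,r(\lambda t')\big) = \sqrt{\lambda}\,\big(\sqrt{t+t'}-\sqrt{t-t'}\big) = \lambda^{\frac12}\,\bb E\big(r(t)\,r(t')\big)\qquad(\lambda>0),
\]
so $\lambda^{-\frac14}r(\lambda\,\cdot)$ has the same covariance, hence the same law, as $r$; this is self-similarity of order $\frac14$. For continuity I would compute, for $0\le t'\le t$,
\[
\bb E\big(|r(t)-r(t')|^2\big) = \sqrt{2t}+\sqrt{2t'}-2\sqrt{t+t'}+2\sqrt{t-t'}\le 2\sqrt{t-t'}\;,
\]
the inequality coming from concavity of the square root, which gives $\sqrt{2t}+\sqrt{2t'}\le 2\sqrt{t+t'}$. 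Since $r$ is Gaussian, the $2n$-th moment of an increment is bounded by $C_n|t-t'|^{n/2}$, and taking $n\ge 3$ the Kolmogorov criterion produces a continuous modification (indeed $\alpha$-H\"older for every $\alpha<\frac14$).

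For the first representation I would recall the two sided fBm covariance $\bb E\big(\nu^{(H)}(a)\nu^{(H)}(b)\big)=\frac12\big(|a|^{2H}+|b|^{2H}-|a-b|^{2H}\big)$ and expand the covariance of the odd part $\nu_O^{(H)}(t)=\frac12\big(\nu^{(H)}(t)-\nu^{(H)}(-t)\big)$. The $|a|^{2H}$ and $|b|^{2H}$ terms cancel, leaving, for $t,t'>0$,
\[
\bb E\big(\nu_O^{(H)}(t)\,\nu_O^{(H)}(t')\big) = \tfrac14\big((t+t')^{2H}-|t-t'|^{2H}\big)\;;
\]
setting $H=\frac14$ and multiplying by $4$ reproduces \eqref{dcovr}, so $r=2\nu_O^{(\frac14)}$ in law. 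For the second representation I would write the mild solution of the linear stochastic heat equation with the same $\frac12$-normalization of the Laplacian as in \eqref{sche}, so that $h(0,t)=\int_0^t\!\int p_{t-s}(-y)\,\rmd W_{y,s}$ with $p_\tau(x)=(2\pi\tau)^{-\frac12}\rme^{-x^2/(2\tau)}$, and use $\int\!\rmd y\,p_a(y)p_b(y)=p_{a+b}(0)=(2\pi(a+b))^{-\frac12}$ to get $\bb E\big(h(0,t)h(0,t')\big)=\int_0^{t\wedge t'}(2\pi(t+t'-2s))^{-\frac12}\rmd s$. An elementary substitution evaluates this to $(2\pi)^{-\frac12}\big(\sqrt{t+t'}-\sqrt{t-t'}\big)$, whence $(2\pi)^{\frac14}h(0,\cdot)$ has covariance \eqref{dcovr}.

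The third representation is the one needing real work. By the It\^o isometry its covariance is $c^2\int_0^{t'}s^{\frac12}\big[(t^2-s^2)(t'^2-s^2)\big]^{-\frac14}\rmd s$ for $t\ge t'$. I would substitute $s^2=t'^2 v$ to bring this to
\[
\frac{c^2 t'}{2}\int_0^1 v^{-\frac14}(1-v)^{-\frac14}(1-kv)^{-\frac14}\,\rmd v\;,\qquad k=\Big(\frac{t'}{t}\Big)^2,
\]
recognize Euler's integral for $B\big(\tfrac34,\tfrac34\big)\,{}_2F_1\big(\tfrac14,\tfrac34;\tfrac32;k\big)$, and use the closed evaluation ${}_2F_1\big(\tfrac14,\tfrac34;\tfrac32;\sin^2\theta\big)=\sec(\theta/2)$ with $\sin\theta=t'/t$. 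The half-angle identity collapses the result to $\sqrt2\,t'/\sqrt{t+\sqrt{t^2-t'^2}}$ once the constant $c=\big(\tfrac12 B(\tfrac34,\tfrac34)\big)^{-\frac12}$ is inserted, and squaring the target shows $\big(\sqrt{t+t'}-\sqrt{t-t'}\big)^2=2t-2\sqrt{t^2-t'^2}$ equals the square of that expression, so \eqref{dcovr} holds; the normalization is consistent with the variance, where the reflection formula $\Gamma(\tfrac14)\Gamma(\tfrac34)=\pi\sqrt2$ reconciles $B(\tfrac34,\tfrac34)$ with $B(\tfrac34,\tfrac12)$. The main obstacle is precisely this hypergeometric evaluation. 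If one prefers to avoid special functions, an alternative is to identify the kernel with a Molchan--Golosov finite-interval representation of fractional Brownian motion and then appeal to the first representation; but the direct computation above is self-contained.
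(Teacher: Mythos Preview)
Your proposal is correct and follows essentially the same route as the paper. Continuity and self-similarity are deduced from the covariance (the paper only states this and leaves the details implicit; your Kolmogorov argument fills it in), the fBm and stochastic-heat representations are obtained by the same direct covariance computations, and the third representation is handled via the Euler integral for ${}_2F_1\big(\tfrac14,\tfrac34;\tfrac32;\cdot\big)$ together with its closed evaluation—exactly the paper's method, only walked in the reverse direction (you start from the It\^o isometry and land on \eqref{dcovr}, whereas the paper starts from \eqref{dcovr} using the Gradshteyn--Ryzhik identity $(t+t')^{\rho}-(t-t')^{\rho}=2\rho\,t'\,t^{\rho-1}F\big(\tfrac{1-\rho}{2},\tfrac{2-\rho}{2};\tfrac32;(t'/t)^2\big)$ at $\rho=\tfrac12$ and then the Euler integral). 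One small slip: after the substitution $s^2=t'^2v$ your displayed integral should carry an extra factor $t^{-1/2}$, i.e.\ $\dfrac{c^2 t'}{2\sqrt t}\int_0^1 v^{-1/4}(1-v)^{-1/4}(1-kv)^{-1/4}\,\rmd v$; your subsequent expression $\sqrt2\,t'/\sqrt{t+\sqrt{t^2-t'^2}}$ and the final identification are nonetheless correct, so this is only a typographical lapse.
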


The paper is organized as follows. In the next section, we discuss the properties of the Gaussian process $Y$, obtaining some \emph{sup}-norm estimates that allow to establish a local existence and uniqueness theorem for equation \eqref{freeie}. In Section \ref{sec:4}, we introduce a new integral equation, in terms of the kernel of the linearized equation around $\bam$. With the aid of estimates for this kernel derived in \cite{BKT}, we obtain estimates that are used in Section \ref{sec:5} to show stability of the front in a convenient time-scaling. In Section \ref{sec:6}, we study the Gaussian process $H$ appearing in the linearization about the front under proper time scaling, and prove Theorem \ref{t1}. Finally,  Theorem \ref{t2} is proved in Section \ref{sec:7}. Some technical proofs are reported in Appendix \ref{sec:A}. 
   
\section {Local existence and uniqueness of solutions}
\label{sec:3}

\subsection{The process $Y$}
\label{sec:3.1}

We investigate first the properties of the Gaussian process $Y$, which are deduced from that of its covariance \eqref{cov}. In the next lemma, we provide some properties of the Green function $G$ appearing in \eqref{freeie}, that will be useful later.
\begin{lemma}
\label{propfi}
The Green function $G(x,y,t)$ associated to $\rme^{-\frac 12 t \,\Delta^2}$ is given by 
\begin{equation}
\label{defphi}
G(x,y,t)=\frac{1}{t^{\frac 14}}\,\phi\Big(\frac{x-y}{t^{\frac 14}}\Big)\;, \quad \mbox{ where}\quad \phi(x)=\frac{1}{2\pi} \int\!\rmd \omega\, \rme^{-\frac 12 \omega^4} \rme^{\rmi \omega x}\;.
\end{equation}
Therefore, 
\begin{equation}
\label{dyg}
\partial_y G(x,y,t) = - \frac{1}{t^{\frac 12}}\, \phi'\Big(\frac{x-y}{t^{\frac 14}}\Big)\;, \;\;\, \partial_y^2 G(x,y,t) = \frac{1}{t^{\frac 34}}\,\phi''\Big(\frac{x-y}{t^{\frac 14}}\Big) \;.
\end{equation} 
Moreover, for each $\lambda>0$ and nonnegative integer $k$,
\begin{equation}
\label{norma}
\upbar c=\upbar c(\lambda,k) := \sup_{x\in\bb R} \Big|\rme^{\lambda|x|}\,\frac{\rmd^k\phi}{\rmd x^k}(x)\Big| < + \infty\;.
\end{equation}
Also, for any $H>0$ and $A\ge 1$, 
\begin{align}
\label{px}
\Big|\frac{\rmd^k\phi}{\rmd x^k}(x+H)-\frac{\rmd^k\phi}{\rmd x^k}(x)\Big| & \le  C H\Big( \rme^{-|x|} \id_{\{|x|>2H\}} + \frac{1}{1+H} \id_{\{|x|\le 2H\}}\Big)\;,
 \\
  \label{pt}  \Big|\frac{\rmd^k\phi}{\rmd x^k}(Ax)-\frac{\rmd^k\phi}{\rmd x^k}(x)\Big|
   & \le C \frac{A-1}{1+(A-1)|x|}\, |x|\rme^{-|x|} \le \frac{A-1}{A}\,\rme^{-|x|} \;.
\end{align}
For the second derivative, we have also that 
\begin{equation}
\label{2d1}
|\phi''(x+H)-\phi''(x)| \le C \frac{H^2+H|x|}{1+H^2+H|x|}\;.
\end{equation}
\end{lemma}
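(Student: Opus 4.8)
The plan is to derive the explicit self-similar form \eqref{defphi} by solving the evolution in Fourier space, and then to reduce every remaining assertion to a single exponential-decay estimate for the derivatives of $\phi$. First I would note that $G(\cdot,\cdot,t)$ depends only on $x-y$ and solves the constant-coefficient equation $\partial_t G=-\frac12\partial_x^4 G$ with $G(\cdot,0)=\delta$. Taking the Fourier transform in the first variable turns this into the ODE $\partial_t\hat G=-\frac12\omega^4\hat G$ with $\hat G|_{t=0}=1$, whence $\hat G(\omega,t)=\rme^{-\frac12 t\omega^4}$ and, upon inverting, $G(x-y,t)=\frac1{2\pi}\int\!\rmd\omega\,\rme^{-\frac12 t\omega^4}\rme^{\rmi\omega(x-y)}$. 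The substitution $\omega\mapsto\omega/t^{\frac14}$ absorbs the time dependence into the spatial variable and yields \eqref{defphi}; formulas \eqref{dyg} are then immediate from the chain rule.

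The heart of the lemma is the exponential bound \eqref{norma}. Differentiating under the integral sign gives $\frac{\rmd^k\phi}{\rmd x^k}(x)=\frac1{2\pi}\int\!\rmd\omega\,(\rmi\omega)^k \rme^{-\frac12\omega^4}\rme^{\rmi\omega x}$, whose integrand is entire in $\omega$ and, crucially, decays like $\rme^{-\frac12\operatorname{Re}(\omega^4)}$ along horizontal lines: for $\omega=\xi+\rmi\eta$ with $|\eta|\le\lambda$ one has $\operatorname{Re}(\omega^4)=\xi^4-6\xi^2\eta^2+\eta^4\ge\xi^4-6\lambda^2\xi^2\to+\infty$ as $|\xi|\to\infty$. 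This quartic decay lets me apply Cauchy's theorem to shift the contour to $\operatorname{Im}\omega=\lambda\,\sgn(x)$, the vertical segments at $\operatorname{Re}\omega=\pm R$ vanishing as $R\to\infty$. On the shifted contour the oscillatory factor becomes $\rme^{\rmi\omega x}=\rme^{\rmi\xi x}\rme^{-\lambda|x|}$, which produces the gain $\rme^{-\lambda|x|}$, while the remaining $\xi$-integral is finite uniformly in $x$; this gives \eqref{norma} for every $\lambda>0$ and $k$.

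With \eqref{norma} in hand, the difference estimates follow by elementary calculus together with a split on the size of the increment. For \eqref{px} I would use the mean value theorem, $\frac{\rmd^k\phi}{\rmd x^k}(x+H)-\frac{\rmd^k\phi}{\rmd x^k}(x)=H\,\frac{\rmd^{k+1}\phi}{\rmd x^{k+1}}(\xi)$ with $\xi$ between $x$ and $x+H$: when $|x|>2H$ the point $\xi$ stays comparable to $x$, so \eqref{norma} (with, say, $\lambda=2$) gives the factor $H\rme^{-|x|}$, whereas for $|x|\le 2H$ one bounds the difference by $CH$ if $H\le1$ and by the trivial constant $C$ if $H>1$, both absorbed into $CH/(1+H)$. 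Estimate \eqref{pt} is the dilation analogue: the same argument over $[x,Ax]$ yields $C(A-1)|x|\rme^{-|x|}$, valid when $(A-1)|x|$ is small, while \eqref{norma} applied to both terms gives the crude bound $C\rme^{-|x|}$; interpolating between the two through $\min(u,1)\le 2u/(1+u)$ with $u=(A-1)|x|$ produces the stated form.

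The refined bound \eqref{2d1} needs one extra observation: since $\phi$ is even, $\phi'''$ is odd, so $\phi'''(0)=0$ and hence $|\phi'''(u)|\le C|u|\rme^{-\lambda|u|}$. Writing $\phi''(x+H)-\phi''(x)=\int_0^H\phi'''(x+s)\,\rmd s$ and inserting this bound gives $\le C(H|x|+H^2)$ when $H(|x|+H)\le1$, while for $H(|x|+H)>1$ the integrability of $\phi'''$ keeps the difference bounded; the two regimes combine, again via $\min(u,1)\le 2u/(1+u)$ with $u=H^2+H|x|$, into \eqref{2d1}. I expect the genuine obstacle to be the analyticity and contour-shifting argument behind \eqref{norma} — justifying the deformation and extracting a constant uniform in $x$ is the one non-routine point, after which all the difference estimates are bookkeeping with the mean value theorem and case analysis.
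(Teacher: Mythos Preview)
Your proposal is correct and follows the paper's proof almost verbatim: the Fourier derivation of \eqref{defphi}, the contour shift for \eqref{norma}, and the mean-value/splitting arguments for \eqref{px} and \eqref{pt} are exactly what the paper does (the paper phrases \eqref{px}--\eqref{pt} as integrals $\int_x^{x+H}$ and $\int_x^{Ax}$ of the exponential bound, which is your mean value theorem). The only genuine difference is in \eqref{2d1}: the paper works directly in Fourier, writing $\phi''(x+H)-\phi''(x)=\frac{1}{2\pi}\int\rmd\omega\,\omega^2\rme^{-\omega^4/2}\rme^{\rmi\omega x}(\rme^{\rmi\omega H}-1)$ and bounding the real and imaginary parts of $\rme^{\rmi\omega x}(\rme^{\rmi\omega H}-1)$ by $\min(\omega^2 H^2,\,\text{const})$ and $\min(\omega^2 H|x|,\,\text{const})$ respectively, whereas you exploit the parity of $\phi$ to get $\phi'''(0)=0$ and hence the improved pointwise bound $|\phi'''(u)|\le C|u|\rme^{-|u|}$. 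Both routes give the same $\min(H^2+H|x|,1)$ estimate with the same effort; your argument has the mild advantage of staying entirely on the real side once \eqref{norma} is known, while the paper's avoids appealing to higher derivatives of $\phi$.
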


\begin{proof}  
The expression \eqref{defphi} follows by standard Fourier analysis. To prove \eqref{norma}, we observe that, as the entire function $f(z) =\exp\left(-\frac12 z^2+\rmi x z\right)$ satisfies $\max_{0\le\eta\le\lambda} |f(\pm R+\rmi \eta)| \to 0$ as $R\to \infty$, by  Cauchy's  theorem we have
\[
\phi(x) = \frac1{2\pi} \int\!\rmd \omega\, \rme^{-\frac 12 (\omega+\rmi\lambda) ^4} \rme^{\rmi (\omega+\rmi \lambda) x}\;,
\]
and hence
\[
\frac{\rmd^k\phi}{\rmd x^k}(x) = \frac{\rme^{-\lambda x}}{2\pi} \int\!\rmd \omega\, (\rmi \omega -\lambda)^k \rme^{-\frac 12 (\omega+\rmi\lambda) ^4} \rme^{\rmi\omega x}\;.
\]
As the integral on the right hand side is uniformly bounded in $x$ for each $\lambda$, the previous expression implies the bound \eqref{norma} for $x\ge 0$, and then also for $x<0$ as $\phi$ is an even function.

The estimates \eqref{px} and \eqref{pt} follow from \eqref{norma} with $\lambda=2$, as
\[
\begin{split}
\Big|\frac{\rmd^k\phi}{\rmd x^k}(x+H) -\frac{\rmd^k\phi}{\rmd x^k}(x)\Big| \le \upbar c \int_x^{x+H}\!\rmd y\, \rme^{-2 |y|} \le \begin{cases} \upbar c H\rme^{-|x|} & \mbox{if } |x|>2H\;, \\ \bar c (H\land 1) & \mbox{if } |x|\le 2H\;, \end{cases}
\end{split}
\]
and 
\[
\Big|\frac{\rmd^k\phi}{\rmd x^k}(Ax)-\frac{\rmd^k\phi}{\rmd x^k}(x)\Big|  \le \upbar c \, \Big|\int_x^{Ax}\!\rmd y\, \rme^{-2|y|}\Big| \le \upbar c\, \big((A-1)|x| \land 1\big)  \rme^{-|x|}
\]
(note that $a\land 1 \le 2a(1+a)^{-1}$ $\forall\, a\ge 0$). 

Finally, to prove \eqref{2d1}, we consider the integral expression for $\phi$ given in \eqref{defphi} and observe that 
\[
\left|\cos (\omega x) (\cos(\omega H)-1)\right| \le\frac{(\omega H)^2}{2}\land 2\;, \quad \left|\sin(\omega x)\sin(\omega H) \right| \le \omega^2 |x| H \land 1\;.
\]
Therefore,
\[
\begin{split}
& |\phi ''(x+H)-\phi''(x)| = \frac{1}{2\pi}\, \Big| \int\!\rmd \omega\, \rme^{-\frac 12 \omega^4} \omega^2 \rme^{\rmi\omega x} \left(\rme^{\rmi\omega H}-1\right)\Big|
 \\
& \le \frac{1}{2\pi} \int\!\rmd \omega\, \rme^{-\frac 12 \omega^4}\omega^2 \left(\left|\cos (\omega x) (\cos(\omega H)-1)\right|  + \left|\sin(\omega x)\sin(\omega H) \right| \right) 
\\
& \le C \int\!\rmd \omega\, \rme^{-\frac 12 \omega^4} \omega^2 \big[(H^2 \omega^2
+ \omega^2 |x| H) \land 1\big] \le C \big[ (H^2+H|x|) \land 1\big]\;,
\end{split}
\]
which proves \eqref{2d1} (using, as before, that $a\land 1 \le 2a(1+a)^{-1}$ $\forall\, a\ge 0$).
\end{proof}

The following proposition, whose proof is given in Appendix \ref{sec:A}, is a consequence of the estimates detailed in Lemma \ref{propfi}.

\begin{proposition}
\label{hoy}
Let $Y(x,t)$ be the Gaussian process with covariance \eqref{cov}. Then, for any $h>0$,\begin{align}
\label{vary}
&\bb E\,  Y(x,t)^2\le C t^{\frac 14}\;, \\ \label{hx} & \bb E \big( Y(x+h,t)- Y(x,t)\big)^2\le C h \log\big(1+h^{-1}t^{\frac 14}\big) \;, \\ \label{ht} &\bb E \big( Y(x,t+h)- Y(x,t)\big)^2\le C h^{\frac 14}\;.
\end{align}
\end{proposition}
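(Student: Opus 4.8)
The plan is to exploit that $Y$ is a centered Gaussian field, so each of the three quantities is read off directly from the covariance \eqref{cov}. The key simplification is that $\|a\|_\infty=1$, so bounding $a_\epsilon^2\le 1$ removes the cutoff and makes every estimate uniform in $x$; what remains are $L^2$ integrals of $\partial_y G$ and of its increments. Writing $\tau=t-s$ and using \eqref{dyg}, the basic building block is the rescaling identity $\int\!\rmd y\,[\partial_y G(x,y,\tau)]^2=\tau^{-1}\!\int\!\rmd y\,\phi'((x-y)\tau^{-\frac14})^2=\tau^{-\frac34}\|\phi'\|_2^2$, where $\|\phi'\|_2<\infty$ by \eqref{norma}. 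Then \eqref{vary} is immediate: $\bb E\,Y(x,t)^2\le\|\phi'\|_2^2\int_0^t\tau^{-\frac34}\,\rmd\tau=Ct^{\frac14}$.

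For the spatial increment \eqref{hx}, the same rescaling ($w=(x-y)\tau^{-\frac14}$, $H=h\tau^{-\frac14}$) reduces the inner integral to $\tau^{-\frac34}F(h\tau^{-\frac14})$, where $F(H):=\int\!\rmd w\,[\phi'(w+H)-\phi'(w)]^2$. I would bound $F(H)\le C\min(H^2,1)$: the bound $F(H)\le 4\|\phi'\|_2^2$ holds for all $H$ by the triangle inequality in $L^2$, while \eqref{px} with $k=1$ gives $F(H)\le CH^2$ for $H\le 1$. Splitting the $\tau$-integral at the crossover $\tau=h^4$ then yields $\bb E(Y(x+h,t)-Y(x,t))^2\le C\min(h,t^{\frac14})$, and one checks elementarily that $\min(h,t^{\frac14})\le Ch\log(1+h^{-1}t^{\frac14})$, which is \eqref{hx}. (The logarithm is not sharp, but it is a convenient way to merge the two regimes.)

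The temporal increment \eqref{ht} is the delicate one. Expanding $\bb E(Y(x,t+h)-Y(x,t))^2$ from \eqref{cov} and using $t\wedge(t+h)=t$, the cross term combines with the diagonal terms into a single perfect square, leaving two pieces: (a) the ``new slab'' $s\in[t,t+h]$, equal to $\int_0^h\!\rmd\tau\int\!\rmd y\,[\partial_y G]^2\le\|\phi'\|_2^2\int_0^h\tau^{-\frac34}\,\rmd\tau=Ch^{\frac14}$; and (b) the perfect square $\int_0^t\!\rmd\tau\int\!\rmd y\,[\partial_y G(x,y,\tau+h)-\partial_y G(x,y,\tau)]^2$. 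Part (b) is the main obstacle, since the kernel difference couples a change of amplitude ($\tau^{-\frac12}$ versus $(\tau+h)^{-\frac12}$) with a change of spatial scale ($\tau^{\frac14}$ versus $(\tau+h)^{\frac14}$). I would split the $\tau$-integral at $\tau=h$. For $\tau\le h$ the triangle inequality and the building block give the crude bound $\int\!\rmd y\,[\cdots]^2\le C\tau^{-\frac34}$, contributing $\int_0^h\tau^{-\frac34}\,\rmd\tau=Ch^{\frac14}$. For $\tau>h$, with $A=(1+h/\tau)^{\frac14}\ge 1$ and $v=(x-y)(\tau+h)^{-\frac14}$, I would write the difference as $\tau^{-\frac12}[\phi'(Av)-\phi'(v)]+[\tau^{-\frac12}-(\tau+h)^{-\frac12}]\phi'(v)$, estimate the first term by the scaling bound \eqref{pt} (so $|\phi'(Av)-\phi'(v)|\le\frac{A-1}{A}\rme^{-|v|}$ with $A-1\le h/(4\tau)$) and the second by $\tau^{-\frac12}-(\tau+h)^{-\frac12}\le\frac12 h\tau^{-\frac32}$ together with \eqref{norma}. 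This gives $\int\!\rmd y\,[\cdots]^2\le Ch^2\tau^{-3}(\tau+h)^{\frac14}\le Ch^2\tau^{-\frac{11}{4}}$, whence $\int_h^t h^2\tau^{-\frac{11}{4}}\,\rmd\tau\le Ch^2\cdot h^{-\frac74}=Ch^{\frac14}$.

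The decisive point, and the reason the split at $\tau=h$ is unavoidable, is that the refined bound $Ch^2\tau^{-\frac{11}{4}}$ diverges as $\tau\to0$ (there the linearizations $A-1\approx h/(4\tau)$ are useless), so on the small-$\tau$ slab one must revert to the crude amplitude bound; the bookkeeping of the powers of $h$, namely $h^2\cdot h^{-\frac74}=h^{\frac14}$ matching the slab contribution, is exactly what produces the exponent $\tfrac14$. As a cross-check, the whole computation can also be done on the Fourier side, where $\partial_y G$ has symbol $|k|\,\rme^{-\frac12 k^4\tau}$: after integrating in $\tau$ the temporal increment becomes $\int\frac{\rmd k}{2\pi}\frac{(1-\rme^{-\frac12 k^4 h})^2}{k^2}$, and the substitution $k=h^{-\frac14}\kappa$ gives exactly $Ch^{\frac14}$.
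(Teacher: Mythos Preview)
Your proof is correct and shares the paper's architecture: drop $a_\epsilon^2\le 1$, use the scaling identity $\int[\partial_y G(x,y,\tau)]^2\,\rmd y=\|\phi'\|_2^2\,\tau^{-3/4}$ for \eqref{vary}, and for \eqref{ht} split into the new slab plus the perfect-square integral, then decompose the kernel increment into an amplitude piece $(\tau^{-1/2}-(\tau+h)^{-1/2})\phi'$ and a scale piece $\phi'(A\,\cdot)-\phi'(\cdot)$.

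Two tactical differences are worth noting. For \eqref{hx}, the paper uses the pointwise bound $|\phi'(z+H)-\phi'(z)|\le C H/(1+H)$ (derived from \eqref{px}) together with the integrability of $\phi'$ to obtain $\int_0^t s^{-3/4}\frac{h}{s^{1/4}+h}\,\rmd s=4h\log(1+h^{-1}t^{1/4})$ in one stroke; your intermediate bound $C\min(h,t^{1/4})$ is in fact sharper, but requires the extra check that it is dominated by the logarithm. For \eqref{ht}, your remark that the split at $\tau=h$ is ``unavoidable'' is not quite right: the paper avoids it entirely by keeping $A-1$ and $\tau^{-1/2}-(\tau+h)^{-1/2}$ exact (rather than Taylor-linearizing them, which is what produces your non-integrable $\tau^{-11/4}$ near $0$) and substituting $s=h\tau$. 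This pulls out $h^{1/4}$ and leaves the dimensionless integrals $\int_0^\infty \tau^{-3/4}(1+\tau)^{-1}\,\rmd\tau$ and $\int_0^\infty \tau^{-1/2}(1+\tau)^{-3/4}\,\rmd\tau$, both convergent on all of $(0,\infty)$. Your split works, but the scaling substitution is cleaner and makes the origin of the exponent $\tfrac14$ transparent without any regime bookkeeping.
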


Using well known results for Gaussian processes, \eqref{hx} and \eqref{ht} imply H\"older  continuity of the paths of  $Y(x,t)$ in both variables. From \eqref{vary}, it follows at once that $Y(x,t)\in L^2(\bb R\times [0,T], \rmd\mu)$ with probability one, for any $T>0$ given, and $\mu$ a finite measure  on $\bb R\times [0,T]$. The next results provide more precise global information on the paths, that will be useful in establishing uniqueness and existence of solutions to \eqref{freeie}. Recall that, although not explicit  in the notation, the process $Y$ depends on $\epsilon$.

Let us define, for each positive $\epsilon$, $\gamma$, and $T$, the set 
\begin{equation}
\label{defmct}
\mc T_\epsilon =\{(x,t)\colon x\in \bb R,\; t\in [0,T\epsilon ^{-\gamma}]\}\;.
\end{equation}

\begin{proposition}
\label{supy}
Consider the set $\mc T_\epsilon$ as defined in \eqref{defmct}. Then the process $Y(x,t)$ satisfies the following properties.   
\begin{itemize}
\item[\rm{(i)}]
For each $\xi>0$ there exists a constant $C>0$ such that, for any $\epsilon\in(0,1)$, 
\begin{equation}
\label{esup}
\bb E \sup_{(x,t)\in \mc T_\epsilon} | Y(x,t)| <C \epsilon ^{-\frac{\gamma}{8} -\xi}\;. 
\end{equation}
\item[\rm{(ii)}]
 Given $0<\gamma <4$,
\begin{equation}
\label{i} 
\bb P\big(\sup_{(x,t)\in \mc T_\epsilon}\sqrt \epsilon \,| Y(x,t)|<\infty\big)=1\;.
\end{equation}
\item[\rm{(iii)}]
Given $0<\gamma <4$, for any $0<\xi<\frac{4-\gamma}{16}$ there exist $\zeta>0$ and a constant $C>0$ such that,   
\begin{equation}
\label{ii}
\bb P \big(\sup_{(x,t)\in \mc T_\epsilon}\sqrt \epsilon \,| Y(x,t)|>\epsilon ^{\xi}\big) \le C \rme^{-\epsilon^{-\zeta}}\;.
\end{equation}
\end{itemize}
\end{proposition}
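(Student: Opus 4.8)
The plan is to regard $Y$ as a centered Gaussian field indexed by the set $\mc T_\epsilon$ and to lean on the two standard pillars of Gaussian process theory: Dudley's metric-entropy bound for the expected supremum in part (i), and the Borell--TIS concentration inequality for the tail estimate in part (iii), with part (ii) following at once from the finiteness obtained in (i). All the metric data I need is already packaged in Proposition \ref{hoy}. The uniform variance bound \eqref{vary} gives
\[
\sigma^2 := \sup_{(x,t)\in\mc T_\epsilon} \bb E\, Y(x,t)^2 \le C\,(T\epsilon^{-\gamma})^{\frac14} = C\,\epsilon^{-\frac\gamma4}\;,
\]
so the relevant standard-deviation scale is $\sigma\sim\epsilon^{-\gamma/8}$, exactly the exponent appearing in \eqref{esup}, while \eqref{hx} and \eqref{ht} control the canonical pseudometric $d\big((x,t),(x',t')\big)=\|Y(x,t)-Y(x',t')\|_2$. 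The one genuinely non-routine feature is that the spatial index runs over all of $\bb R$; this is what forces a preliminary reduction to a bounded box.

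For (i) I would first exploit $\mathrm{supp}(a_\epsilon)\subset[-\epsilon^{-\beta},\epsilon^{-\beta}]$. Inserting \eqref{dyg} and the exponential bound \eqref{norma} (with $k=1$) into the covariance \eqref{cov}, for $|x|\ge 2\epsilon^{-\beta}$ and $y$ in the support of $a_\epsilon$ one has $|x-y|\ge|x|/2$ and $(t-s)^{1/4}\le C\epsilon^{-\gamma/4}$, whence, up to a polynomial-in-$\epsilon^{-1}$ prefactor, $\bb E\,Y(x,t)^2\le C\,\rme^{-c\,|x|\,\epsilon^{\gamma/4}}$. Thus the variance is exponentially small once $|x|\gtrsim\epsilon^{-\gamma/4}$, and a box $R_\epsilon=K\,\epsilon^{-\gamma/4}\log\epsilon^{-1}$ makes it polynomially small at the boundary, with $K$ at our disposal; a dyadic decomposition of $\{|x|>R_\epsilon\}$ into annuli, on each of which Dudley's bound and the decaying variance yield a geometrically small expected supremum, shows the tail is negligible. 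On the box I run Dudley's theorem, $\bb E\sup Y\le C\int_0^{\mathrm{diam}}\sqrt{\log N(\delta)}\,\rmd\delta$, with $\mathrm{diam}\le 2\sigma\sim\epsilon^{-\gamma/8}$. By \eqref{hx}--\eqref{ht} a $\delta$-ball in $d$ contains a rectangle of sidelengths $\sim\delta^2/\log\delta^{-1}$ in $x$ and $\sim\delta^8$ in $t$, so $N(\delta)$ is polynomial in $R_\epsilon$, $T\epsilon^{-\gamma}$ and $\delta^{-1}$, giving $\log N(\delta)\le C(\log\epsilon^{-1}+\log\delta^{-1})$. Hence the integral is bounded by $C\,\sigma\sqrt{\log\epsilon^{-1}}\le C\,\epsilon^{-\gamma/8-\xi}$ for every $\xi>0$ (integrability at $\delta=0$ holds since $\log N(\delta)\sim\log\delta^{-1}$ there), which is \eqref{esup}. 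Since \eqref{esup} forces $\bb E\sup_{\mc T_\epsilon}|Y|<\infty$ for each fixed $\epsilon$, the supremum is finite almost surely, proving (ii).

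For (iii) I would combine Borell--TIS on the box with the same dyadic tail control. Writing $m=\bb E\sup_{\{|x|\le R_\epsilon\}}|Y|\le C\epsilon^{-\gamma/8-\xi'}$ from (i), Borell--TIS gives $\bb P(\sup|Y|>m+u)\le 2\exp(-u^2/2\sigma^2)$. Taking the threshold $\epsilon^{\xi-1/2}$ (since $\sqrt\epsilon\,|Y|>\epsilon^\xi$ means $|Y|>\epsilon^{\xi-1/2}$) and $u=\epsilon^{\xi-1/2}-m$, the condition $\xi<\tfrac{4-\gamma}{8}$, which in particular covers the stated range $\xi<\tfrac{4-\gamma}{16}$, guarantees both $u\ge\tfrac12\epsilon^{\xi-1/2}$ and $u^2/\sigma^2\ge c\,\epsilon^{2\xi-1+\gamma/4}=c\,\epsilon^{-\zeta}$ with $\zeta=1-\tfrac\gamma4-2\xi>0$, yielding the stretched-exponential bound $\rme^{-\epsilon^{-\zeta}}$ on the box. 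The annuli in $\{|x|>R_\epsilon\}$ carry super-polynomially small variance, so a union bound over the countably many dyadic annuli contributes an even smaller term, and the two add to \eqref{ii}. The existence of an admissible $\xi>0$ requires $4-\gamma>0$, which is precisely the hypothesis $\gamma<4$.

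The step I expect to be the real obstacle is the quantitative bookkeeping over the unbounded spatial domain: converting the pointwise variance decay into a clean dyadic tail estimate that is uniform in $\epsilon$, and verifying that the covering numbers on the box grow only polynomially in $\epsilon^{-1}$ so the logarithmic factor in the Dudley integral is harmless; everything else is a mechanical application of Proposition \ref{hoy} and the classical Gaussian inequalities. I note that my balancing actually yields the slightly larger range $\xi<\tfrac{4-\gamma}{8}$, so the more conservative $\tfrac{4-\gamma}{16}$ in the statement presumably leaves room for the extra slack incurred in the expected-supremum and tail estimates.
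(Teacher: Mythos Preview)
Your strategy coincides with the paper's: derive (i) via Dudley's entropy integral after localizing in space using the compact support of $a_\epsilon$ and the exponential decay \eqref{norma}, then obtain (ii) and (iii) from (i) together with the global variance bound \eqref{vary} and Borell's inequality.

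One slip in your sketch: the variance-decay estimate you write requires $|x|\ge 2\epsilon^{-\beta}$ so that $|x-y|\ge|x|/2$ for $y\in\mathrm{supp}(a_\epsilon)$, yet you then take the box radius $R_\epsilon=K\epsilon^{-\gamma/4}\log\epsilon^{-1}$, which is far smaller than $\epsilon^{-\beta}$ since $\beta>2>\gamma/4$. The box must have radius at least of order $\epsilon^{-\beta}$; this is harmless, since the covering numbers remain polynomial in $\epsilon^{-1}$ and the Dudley integral is unchanged.

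Two simplifications the paper makes that spare you the bookkeeping you flag as the ``real obstacle'': instead of a fixed box plus dyadic annuli, the paper lets the spatial cutoff $R=R(\delta)$ depend on the entropy radius, chosen so that $\{|x|>\epsilon^{-\beta}+R\}$ lies in a single $d$-ball of radius $\delta$; this folds everything into one covering-number estimate $N(\delta)\le 1+C\epsilon^{-A}\delta^{-21}$ and eliminates the annulus summation. And for (iii), once (i) yields a finite expected supremum over all of $\mc T_\epsilon$, Borell's inequality applies directly to the full index set, so no second box/annuli split is needed. Your remark that the argument actually gives the wider range $\xi<(4-\gamma)/8$ is correct; the paper's $(4-\gamma)/16$ is simply the slack left after fixing a convenient bound for the median.
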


\begin{proof} 
We first show that items (ii) and (iii) follow  from item (i). Indeed, by \eqref{esup} we have, 
\begin{equation}
\label{esupe}
\bb E \big(\sup_{(x,t)\in \mc T_\epsilon}\sqrt \epsilon \, | Y(x,t)|\big)\,<\,\epsilon ^{\frac{4-\gamma}{16}},
\end{equation}
which implies \eqref{i}. Next, from \eqref{vary},
\begin{equation}
\label{supvar}
\sigma^2 _{\mc T_\epsilon}:=\sup_{(x,t)\in \mc T_\epsilon}{\rm Var} \big(Y(x,t)\big)\le C\epsilon^{-\frac \gamma 4}\;, 
\end{equation}
so that, from \eqref{esupe}, \eqref{supvar}, and Borell's inequality \cite{adler},  
\begin{equation}
\begin{split}
\label{bineq}
\bb P\big(\sup_{(x,t)\in \mc T_\epsilon} \sqrt \epsilon | Y(x,t)|>\epsilon^{\xi}\big) & \le 4  \exp \Big(-\frac{\big(\epsilon^{-\frac 1 2+\xi} -\epsilon^{-\frac 12+\frac{4-\gamma} {16}}\big)^2}{2 \sigma^2_{\mc T_\epsilon}}\Big) \\  & \le 4  \exp \big(-C\,\epsilon^{-1+2\xi+\frac \gamma 4}\big)\;, 
\end{split}
\end{equation}
which implies \eqref{ii}. 

We are left with the proof of item (i). Without loss of generality we can suppose $T=1$ and $\epsilon\in \big(0,\frac 12\big)$, what we do. Consider the pseudo-metric $d$ on $\mc T_\epsilon$  defined by 
\[
d\big((x,t),(y,s)\big)=\big(\bb E \big(Y(x,t)-Y(y,s)\big)^2\big)^{\frac 12}\;,
\]
let $B_\delta (x,t)$ denote the $d$-ball of radius $\delta$ and center $(x,t)$, and  $N(\delta)$ the minimum number of $d$-balls of radius $\delta$ needed to cover $\mc T_\epsilon$. From entropy estimates \cite{adler} we know that there is a universal positive  constant $\kappa$ such that
\begin{equation}
\label{ent}
\bb E \sup_{(x,t)\in \mc T_\epsilon} | Y(x,t)| < \kappa \int_0^{\mbox{diam } \mc T_\epsilon}\!\rmd\delta\, \big(\log N(\delta)\big)^{\frac 12}\;.
\end{equation}
Recall that $Y(x,0)\equiv 0$. It follows from \eqref{vary} that the diameter of $\mc T_\epsilon$ satisfies
$\mbox{diam } \mc T_\epsilon \le C_0 \epsilon^{-\frac{\gamma}{8}}$. To estimate $N(\delta)$ we let $R$ be a positive parameter, to be fixed later as a function of $\delta$ and $\epsilon$. Proceeding as in \eqref{ey2}, with the aid of \eqref{norma} with $\lambda=1$ and after recalling the definition of $a_{\epsilon}$, if $|x|>\epsilon^{-\beta}+R$ and $t\le \epsilon^{-\gamma}$, we have,
\[
\begin{split}
\bb E & Y(x,t)^2 \le \int_0^t\!\rmd s\, s^{-\frac 34}\int_{s^{-\frac 14}(x-\epsilon^{-\beta})}^{s^{-\frac 14}(x+\epsilon^{-\beta})}\!\rmd y\,
\phi'(y)^2 \le \frac{\upbar c}2 \int_0^{\epsilon^{-\gamma}}\!\rmd s\, s^{-\frac 34} \, \rme^{-2\,s^{-\frac 14}R} \\ & \le \frac{\upbar cR}2 \rme^{-\epsilon^{\frac \gamma 4}R} \int_0^{R^{-4}\epsilon^{-\gamma}}\!\rmd\tau\, \tau^{-\frac 34}\, \rme^{-\tau^{-\frac 14}}
 \le \frac{\upbar cR}2 \rme^{-\epsilon^{\frac \gamma 4}R} C \big(R^{-1}\epsilon^{-\frac \gamma 4}\big) 
 \\
  & = C_1 \rme^{-\epsilon^{\frac \gamma 4}R}\,\epsilon^{-\frac \gamma 4}\;,
\end{split}
\]    
with $C_1>0$.  

 Let $R$ be the unique non negative  solution to 
 $ C_1 \rme^{-\epsilon^{\frac \gamma 4}R}\,\epsilon^{-\frac \gamma 4}=\delta^2$, which clearly  exists  if $\delta^2 \le C_1 \,\epsilon^{-\frac \gamma 4}$, and it is 
\begin{equation}
\label{R}
R=\epsilon^{-\frac \gamma 4}\Big|\log \frac{\delta^2 \,\epsilon^{\frac \gamma 4}}{C_1}\Big|
\end{equation}
In particular, for each $\delta$ such that $\delta^2 < C_1 \,\epsilon^{-\frac \gamma 4}$ and 
$R$ as in \eqref{R},  the set 
$\mc R_1=\mc T_\epsilon\cap \{x\colon |x|>\epsilon ^{-\beta}+R\}$ is contained in $B_\delta(0,0)$, so it is covered with one ball of radius $\delta$. If
 $\delta^2 > C_1 \,\epsilon^{-\frac \gamma 4}$, the corresponding set with $R=0$ is covered by 
 one ball of radius $\delta$.   Let us consider the rest of the parameter set, $\mc R_2=\mc T_\epsilon\cap \{x\colon |x|\le\epsilon ^{-\beta}+R\}$. Let $(x_0,t_0)\in \mc R_2$, and denote by $Q(x_0,t_0)=\{(x,t) :|x-x_0|\le b,|t-t_0|\le b^4\}$ the rectangle of sides $b$ and $b^4$ in the usual metric, for $b$ that will be conveniently chosen. If $(x,t)\in Q(x_0,t_0)\cap \mc T_\epsilon$, from \eqref{hx} and \eqref{ht} we get that, 
\[
\begin{split}
\bb E \big(Y(x,t)  - & Y(x_0,t_0)\big)^2 \le C\big(|x-x_0|\log(1+ |x-x_0|^{-1}t^{\frac 14})
 + |t-t_0|^{\frac 14}\big) \\ & \le  C \big(|x-x_0|\log(1+ |x-x_0|^{-1}) + |x-x_0| \log \epsilon^{-1} + b\big) 
\\ & \le C_2 \log\epsilon^{-1} (\sqrt b\,\id_{\{b\le 1\}} + b\,\id_{\{b> 1\}}\big)\;,
\end{split}
\]
with $C_2>0$ (recall we are assuming $\epsilon^{-1} \ge  2$). Then, choosing
\begin{equation}
\label{b}
b  = \Big(\frac{\delta^2}{C_2\log\epsilon^{-1}}\Big)^2 \id_{\{\delta^2\le C_2\log\epsilon^{-1}\}} + \frac{\delta^2}{C_2\log\epsilon^{-1}} \id_{\{\delta^2> C_2\log\epsilon^{-1}\}}\;,
\end{equation}
we obtain $Q(x_0,t_0)\subset B_\delta(x_0,t_0)$. It is now clear that $\mc R_2$ is covered by $b^{-5} \epsilon^{-\gamma}(\epsilon^{-\beta}+R)$ rectangles (convenient translations of $Q(x_0,t_0)$). We conclude that $N(\delta)\le 1+b^{-5}\epsilon^{-\gamma}(\epsilon^{-\beta}+R)$. Then, noticing that \eqref{R} implies 
$R\le C  \epsilon^{-\frac \gamma 4}\,(|\log\epsilon|+ |\log\delta|)$, by  \eqref{b} we get,
\[
\begin{split}
& N(\delta) \le 1 + C\epsilon^{-\gamma} \big(\epsilon^{-\beta} 
+\epsilon^{-\frac\gamma 4}( |\log\epsilon|+ |\log\delta|)\big) \\ & \qquad \times \Big(\frac{\big(C_2\log\epsilon^{-1}\big)^{10}}{\delta^{20}}\id_{\{\delta^2\le C_2\log\epsilon^{-1}\}} + \frac{\big(C_2\log\epsilon^{-1}\big)^5}{\delta^{10} }\id_{\{\delta^2> C_2\log\epsilon^{-1}\}}\Big)\;.
\end{split}
\]
For $\delta \le \mbox{diam } \mc T_\epsilon \le C \epsilon^{-\frac{\gamma}{8}}$, the last estimate gives 
\[
N(\delta)\le 1 + C\, \epsilon^{-A}\delta^{-21}\;,
\] 
with $A>0$ sufficiently large. Substitution of this in  \eqref{ent} yields
\begin{equation}
\label{enti} 
\begin{split}
& \int_0^{\mbox{diam } \mc T_\epsilon}\!\rmd\delta\, \big(\log N(\delta)\big)^{\frac 12} \le  \int_0^{C_0\epsilon^{-\frac \gamma 8}}\!\rmd\delta\, \big(\log(1 + C\epsilon^{-A}\delta^{-21}) \big)^{\frac 12} \\ & \qquad \qquad \qquad = \frac{C^{\frac{1}{21}}\epsilon^{-\frac{A}{21}}}{21}  \int_{CC_0^{-21}\epsilon^{\frac{21\gamma}8-A}}^\infty\!\rmd u\, \big(\log (1+u)\big)^{\frac 12} u^{-\frac{22}{21}} \\ & \qquad \qquad \qquad \le C \epsilon^{-\frac\gamma 8 -A\rho}\;,
\end{split}
\end{equation}
where in the last inequality we estimated the integrand  by $u^{-\frac{22}{21}+\rho}$ with $\rho\in(0,\frac{1}{21})$. The estimate \eqref{esup} now follows from \eqref{ent} and \eqref{enti} by choosing $\rho<A^{-1}\xi$ for the given $\xi$. 
\end{proof} 

\subsection{Existence and uniqueness of solutions.}
\label{sec:3.2}

From the previous results, it is not difficult to prove the existence of a unique  continuous solution to the integral equation \eqref{freeie} for $t\le T_0$ if $T_0$ is small enough. To that end, denote by $g$ and $\mc G$ the following operators, defined in terms of the Green function $G$,
\begin{equation}
\begin{split}
\label{defcalg}
g u_0(x,t) & = \int\!\rmd y\, G(x,y,t) u_0(y)\;,  \\ \mc G F(x,t) & = \int_0^t\!\rmd s \int\!\rmd y\, \partial_y^2 G(x,y,t-s) F(y,s)\;.
\end{split}
\end{equation}
so that equation \eqref{freeie} reads,
\begin{equation}
\label{suceq}
u=gu_0 + \mc G \big(V'(u)\big)+\sqrt\epsilon\, Y\;.
\end{equation}
 
\begin{proposition}
\label{locuye} Given $u_0$  continuous, $\|u_0\|_\infty =M<\infty$, there exists a time $T_0$ (depending on $\|Y\|_\infty$ and on $M$)  such that the equation \eqref{suceq} has a unique continuous bounded  solution on $\bb R\times [0,T_0]$. 
\end{proposition}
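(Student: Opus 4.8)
The plan is to recast \eqref{suceq} as a fixed-point problem and apply the Banach contraction principle in the Banach space $X_{T_0}$ of bounded continuous functions on $\bb R\times[0,T_0]$ with norm $\|u\| := \sup_{(x,t)\in\bb R\times[0,T_0]}|u(x,t)|$. I would define the map $\Phi(u) = gu_0 + \mc G\big(V'(u)\big) + \sqrt\epsilon\,Y$ and regard the realization of $Y$ as a fixed bounded continuous function (working on the almost sure event of Proposition \ref{supy} where $\|Y\|_\infty<\infty$, consistently with the dependence of $T_0$ on $\|Y\|_\infty$). The goal is to produce $\rho>0$ and $T_0>0$ so that $\Phi$ maps the closed ball $B_\rho=\{u\in X_{T_0}\colon \|u\|\le\rho\}$ into itself and is a strict contraction there.

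Two kernel estimates drive everything. First, although $\int\!\rmd y\,G(x,y,t)=\int\!\rmd z\,\phi(z)=1$, the profile $\phi$ changes sign, so I would instead use $\int\!\rmd y\,|G(x,y,t)| = \|\phi\|_1$, which is finite by the exponential decay \eqref{norma}; hence $\|gu_0\|\le\|\phi\|_1 M$. Second, from \eqref{dyg} and the change of variables $z=(x-y)\tau^{-\frac14}$,
\[
\int\!\rmd y\,\big|\partial_y^2 G(x,y,\tau)\big| = \tau^{-\frac12}\,\|\phi''\|_1\;,
\]
with $\|\phi''\|_1<\infty$ again by \eqref{norma}. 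Since the singularity $\tau^{-\frac12}$ is integrable, for any bounded $F$ we get $\|\mc GF(\cdot,t)\|\le \|\phi''\|_1\int_0^t\!\rmd s\,(t-s)^{-\frac12}\|F(\cdot,s)\|\le 2\|\phi''\|_1\,T_0^{\frac12}\|F\|$, which is the crucial gain of a positive power of $T_0$.

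Applying this with $F=V'(u)=u^3-u$, on $B_\rho$ one has $\|V'(u)\|\le\rho^3+\rho$ and, by the mean value theorem, $\|V'(u)-V'(v)\|\le(3\rho^2+1)\|u-v\|$. Therefore
\[
\|\Phi(u)\|\le \|\phi\|_1 M + 2\|\phi''\|_1 T_0^{\frac12}(\rho^3+\rho) + \sqrt\epsilon\,\|Y\|_\infty\;,
\]
\[
\|\Phi(u)-\Phi(v)\|\le 2\|\phi''\|_1 T_0^{\frac12}(3\rho^2+1)\,\|u-v\|\;.
\]
Choosing first $\rho=2\big(\|\phi\|_1 M+\sqrt\epsilon\,\|Y\|_\infty\big)$ and then $T_0$ small (depending on $\rho$, hence on $M$ and $\|Y\|_\infty$) so that $2\|\phi''\|_1 T_0^{\frac12}(\rho^3+\rho)\le\rho/2$ and $2\|\phi''\|_1 T_0^{\frac12}(3\rho^2+1)\le 1/2$, makes $\Phi$ a contraction of $B_\rho$ into itself; Banach's theorem gives a unique fixed point there, whose continuity follows from that of $gu_0$, $\mc G(V'(u))$, and $Y$.

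For uniqueness among \emph{all} bounded continuous solutions, not merely those in $B_\rho$, I would take two such solutions $u,v$ on $[0,T_0]$, set $\rho'=\|u\|\lor\|v\|$ and $w(t)=\|u(\cdot,t)-v(\cdot,t)\|_\infty$, and subtract the equations to obtain the weakly singular integral inequality $w(t)\le K\int_0^t(t-s)^{-\frac12}w(s)\,\rmd s$ with $K=\|\phi''\|_1(3\rho'^2+1)$; a singular Gronwall (Henry) argument then forces $w\equiv0$. I expect the only genuinely delicate point to be the handling of the $\tau^{-\frac12}$ singularity of $\partial_y^2 G$: it is precisely what forces the restriction to small $T_0$ and what requires the singular Gronwall inequality for the global uniqueness statement, whereas the sign change of $\phi$ is harmlessly absorbed by working with $\|\phi\|_1$ rather than invoking a maximum principle.
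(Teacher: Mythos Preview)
Your proposal is correct and follows essentially the same route as the paper: both set up a fixed-point argument on a ball in the sup-norm, exploiting the integrable $\tau^{-1/2}$ singularity of $\partial_y^2 G$ (via \eqref{dyg} and \eqref{norma}) to gain a factor $T_0^{1/2}$ that makes the map a contraction for small $T_0$. The only cosmetic difference is in the global-uniqueness step: you invoke the singular Gronwall/Henry lemma directly on $w(t)\le K\int_0^t(t-s)^{-1/2}w(s)\,\rmd s$, whereas the paper applies H\"older to convert this into $\|u-\tilde u\|_{\infty,t}^4\le CT\int_0^t\|u-\tilde u\|_{\infty,s}^4\,\rmd s$ and then uses the ordinary Gronwall inequality.
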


\begin{proof}
Denote $q=g u_0+\sqrt\epsilon\,Y$ and consider, for each $T>0$ fixed, the set 
\[
\mc C=\{v\in C(\bb R\times [0,T])\colon \|v\|_{\infty,T}<2\|q\|_{\infty,T}\}\;,
\]
where, for any $t>0$, $\|v\|_{\infty,t} := \sup\{|v(x,s)|\colon (x,s)\in\bb R\times [0,t]\}$. 
Consider on $\mc C$ the function $v\mapsto F(v) = \mc G \big(V'(v)\big)+q$, and observe that if $v\in \mc C$ then $\|V'(v)\|_{\infty,T} \le 8 |q\|_{\infty,T}(|q\|_{\infty,T}^2+1)$. Therefore, from \eqref{dyg} and \eqref{norma}, for any $t\in [0,T]$,
\begin{equation}
\begin{split}
\label{lineq}
|\mc G \big(V'(v)\big)(x,t)| & \le \int_0^t\!\rmd s \int\!\rmd y\, \big|\partial_y^2G(x,y,t-s) V'(u(y,s))\big| \\ & \le C \|q\|_{\infty,T}(\|q\|_{\infty,T}^2+1) t^{\frac 12},
\end{split}
\end{equation}
Then, choosing $T=T_0$ small enough, from \eqref{lineq} we see that $F(\mc C)\subseteq \mc C$.
 Moreover, the Picard iterates given by $v_0=q$, $v_{n+1}=F(v_n)$, form a Cauchy sequence on $\mc C$ (with sup norm) if $T_0$ is small, that converges to a limit $u$, which is a solution to \eqref{suceq}. To prove uniqueness, fix a realization of $Y$ and suppose that $u,\tilde u$ are continuous bounded solutions on $\bb R\times [0,T]$ with the same initial condition $u_0$ and same realization of $Y$. By \eqref{suceq}, \eqref{norma}, and H\"older inequality, for any $t\in [0,T]$,
\[
\begin{split}
\|u-\tilde u\|_{\infty,t} & \le C\int_0^t\!\rmd s\, \frac{1}{\sqrt{t-s}} \|u-\tilde u\|_{\infty,s}
\\
 & \le C\, \Big(\int_0^t\!\rmd s\, (t-s)^{-\frac 23}\Big)^{\frac 34}\Big(\int_0^t\!\rmd s\, \|u-\tilde u\|_{\infty,s}^4 \Big)^{\frac 14}\;,
\end{split}
\]
and therefore,
\[
\|u-\tilde u\|_{\infty,t}^4 \le C T \int_0^t\!\rmd s\, \|u-\tilde u\|_{\infty,s}^4\;,
\]
which implies that $u=\tilde u$ on $\bb R\times [0,T]$. 
\end{proof}

\section {Another integral equation}
\label{sec:4}

We introduce next a different  integral equation equivalent to \eqref{sche}, which is more convenient to analize the stability of $\bam$. Following \cite{BKT}, we consider the kernel arising from a convenient linearization of equation \eqref{sche} around $\bam$. Precisely, let
\begin{equation}
\label{ic}
u_0=\bam+h(x)\quad \mbox{with }\quad h=\partial _x f \mbox{ for some $f$ satisfying  } f(\pm \infty)=0,
\end{equation}
and denote by 
\[
Lu=\frac 12 \partial_x^2\,u -V''(\bam)u\;,
\]
the linearization around $\bam$ of the non-linear operator $u\mapsto \frac 12 \partial_x^2 u-V'(u)$. If $u$ solves \eqref{sche} for $u_0$ satisfying \eqref{ic}, then $v=u-\bam$ satisfies,
\[
\begin{cases} \partial _tv = -\partial_x^2 Lv + \partial_x^2 \big(3\bam v^2+v^3\big) + \sqrt \epsilon\, \nabla\big(a_{\epsilon} \dot W\big)\;,    
\\ v(x,0)=h(x)\;.
\end{cases}    
\]
Again, the previous equation is to be understood as the following integral equation for $v$ in terms of the Green function $K(x,y,t)$ corresponding to the operator $\rme^{-t\,\partial_x\,L\partial_x}$,
\begin{equation}
\label{linie}
\begin{split}
v(x,t) & = \int\!\rmd y\, \partial_x K(x,y,t) f(y) \\ & \quad -\int_0^t\!\rmd s \int\!\rmd y\, \partial_x\partial_y K(x,y,t-s)\big(3\bam v^2+v^3\big) (y,s) \\ & \quad + \sqrt \epsilon \int_0^t\int \partial_x K(x,y,t-s) a_\epsilon(y)\,  \rmd W_{y,s}\;.
\end{split}
\end{equation}
Let us now consider the right hand side above. The following expression for $K(x,y,t)$ can be deduced from the analogous one in \cite[Propositions 3.1 and 3.2]{BKT}, where the case $L = \partial_x^2 -\frac 12 V''(\bam)$ is considered.

\begin{proposition}
\label{BKT}
There exists $t_0>0$ such that the kernel $K(x,y,t)$ satisfies,
\begin{equation}
\label{defK}
K(x,y,t) = \begin{cases} K_\infty(x-y,t)+\tilde K(x,y,t)& \quad \mbox{ if } t\in (0,t_0)\;, \\ K^\ast (x,y,t)+k(x,y,t) & \quad\mbox{ if } t\ge t_0\;, \end{cases}
\end{equation}
where 
$K_\infty(x-y,t)$ is the kernel associated with $\rme^{-t\,\partial_x(\frac 12 \partial_x^2-2)\,\partial_x}$ and  the following estimates hold. For $i,j\in \{0,1\}$, there exists a constant $C>0$ such that  
\begin{align}
\label{Ki}
|\partial_x^i\partial_y^j K_\infty(x-y,t)|&\le C t^{-\frac{1+i+j}{4}}\, \exp(-2^{\frac 14}\, t^{-\frac 14}|x-y|) \;, 
\\ \label {Kt} |\partial_x^i\partial_y^j \tilde K(x,y,t)|&\le C t^{-\frac{i+j}{4}}\,
 \exp(-2^{\frac 14}\, t^{-\frac 14}|x-y|)\;,
\end{align}
and
\begin{equation}
\label{kstar}
\begin{split}
K^\ast (x,y,t) & = \frac{1}{\sqrt{2\pi t}} \Big\{ - \frac 12\,\varphi(x)\varphi(y) \sign (xy) + \frac 12 \rme^{-\frac{y^2}{8t}} \varphi(x)\sign (xy) \\ & \qquad + \frac 12 \rme^{-\frac{x^2}{8t}}\varphi(y) \sign (xy) - \rme^{-\frac{(x+y)^2}{8t}} \id_{\{\sign (xy)=1\}}\Big\}\;, 
\end{split}
\end{equation}
where 
\begin{equation}
\varphi(x) = \begin{cases} 1-\bam(x) \quad \mbox { if } x\ge 0\;, \\ 1+\bam(x) \quad \mbox { if } x\le 0\;.\end{cases}
\end{equation}
while, concerning the kernel $k(x,y,t)$, for $i,j\in \{0,1\}$ there exist $\mu>0$ and $C>0$ such that, 
\begin{equation}
\label{ddt}
|\partial_x^i\partial_y^j k(x,y,t)|\le \frac{C}{t} \exp\big(-\mu t^{-\frac 12} |x-y|\big) \;.
\end{equation}
\end{proposition}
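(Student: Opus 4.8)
The plan is to deduce Proposition~\ref{BKT} from \cite[Propositions 3.1 and 3.2]{BKT} by an exact scaling transformation, exploiting the fact that the operator treated there differs from $\partial_x L\partial_x$ only through the normalization of its coefficients. First I would record the relation between the two linearizations. With $V$ as in \eqref{V} one has $V''(u)=3u^2-1$, and the stationary equation $\frac12\bam''=V'(\bam)$ forces the present kink to be $\bam(x)=\tanh(x)$, whereas the kink attached to the operator $L_{\mathrm{BKT}}=\partial_x^2-\frac12 V''(\cdot)$ of \cite{BKT} is $\tanh(x/2)$. Under the dilation $\xi=2x$ one checks directly that $\frac12\partial_x^2-V''(\tanh x)=2\big(\partial_\xi^2-\frac12 V''(\tanh(\xi/2))\big)$, i.e.\ $L=2L_{\mathrm{BKT}}$, and, carrying the two factors $\partial_x=2\partial_\xi$ through, that $\partial_x L\partial_x$ acts as $8\,\partial_\xi L_{\mathrm{BKT}}\partial_\xi$ on functions of $\xi=2x$.

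I would then transfer this identity to the level of kernels. Writing $K_{\mathrm{BKT}}$ for the Green function of \cite{BKT}, the operator identity yields $\rme^{-t\,\partial_x L\partial_x}=\rme^{-8t\,\partial_\xi L_{\mathrm{BKT}}\partial_\xi}$ in the $\xi$ variable, so that
\[
K(x,y,t)=2\,K_{\mathrm{BKT}}(2x,2y,8t)\;.
\]
This is cleanest to justify a posteriori, by checking that the right-hand side solves $\partial_t K=-\partial_x L\partial_x K$ with $K(\cdot,\cdot,0)=\delta$: the prefactor $2$ compensates the Jacobian in $\delta(2x-2y)=\frac12\delta(x-y)$, while the factor $8$ in the time argument together with $\partial_x^2=4\partial_\xi^2$ reproduces the constant $8$ in the operator identity.

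The decomposition \eqref{defK} then follows on setting $t_0=\tau_0/8$, where $\tau_0$ is the threshold of \cite{BKT}. For $t<t_0$ I would define $K_\infty(x-y,t):=2K_{\mathrm{BKT},\infty}(2(x-y),8t)$ and $\tilde K(x,y,t):=2\tilde K_{\mathrm{BKT}}(2x,2y,8t)$, and for $t\ge t_0$ set $K^\ast(x,y,t):=2K^\ast_{\mathrm{BKT}}(2x,2y,8t)$ and $k(x,y,t):=2k_{\mathrm{BKT}}(2x,2y,8t)$. Here $K_{\mathrm{BKT},\infty}$ is the constant-coefficient kernel of $\rme^{-\tau\,\partial_\xi(\partial_\xi^2-1)\partial_\xi}$, and $2K_{\mathrm{BKT},\infty}(2(x-y),8t)$ is precisely the kernel of $\rme^{-t\,\partial_x(\frac12\partial_x^2-2)\partial_x}$, as required. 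Each bound \eqref{Ki}--\eqref{ddt} is obtained from its \cite{BKT} counterpart by substitution, the derivatives producing $\partial_x^i\partial_y^j=2^{i+j}\partial_\xi^i\partial_\eta^j$ and the powers of $t$ coming from powers of $8t$, with all numerical factors absorbed into $C$. In particular the rate $2^{\frac14}t^{-\frac14}$ in \eqref{Ki}--\eqref{Kt} arises from the coefficient $\frac12$ of $\partial_x^4$ in the asymptotic operator; the Gaussian weights $\rme^{-x^2/(8t)}$ and the prefactor $(2\pi t)^{-\frac12}$ in \eqref{kstar} from the effective diffusion coefficient $2$ of $-2\partial_x^2$; and $\varphi(x)=\varphi_{\mathrm{BKT}}(2x)=1-|\tanh x|$ reproduces the stated definition of $\varphi$.

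The only genuinely delicate point is bookkeeping: one must verify that \cite[Proposition 3.2]{BKT} is stated in a form from which the explicit rank-structured term \eqref{kstar}---with the precise constants in the four Gaussian factors, the signs $\sign(xy)$, and the prefactor---is recovered after $(\xi,\eta,\tau)\mapsto(2x,2y,8t)$, and that the remainder estimate there indeed gives \eqref{ddt} with a relabeled $\mu>0$. The substantive analytic content, namely the short-time parametrix construction and the low-frequency spectral expansion of $\partial_\xi L_{\mathrm{BKT}}\partial_\xi$ near the bottom of its continuous spectrum (which produces the slow $t^{-\frac12}$ decay), is exactly what is carried out in \cite{BKT} and is taken here as given.
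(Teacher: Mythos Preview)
Your proposal is correct and follows exactly the approach the paper indicates: the paper does not give an independent proof of this proposition but states that it ``can be deduced from the analogous one in \cite[Propositions 3.1 and 3.2]{BKT}, where the case $L = \partial_x^2 -\frac 12 V''(\bam)$ is considered.'' You have simply made the deduction explicit by writing out the dilation $\xi=2x$, $\tau=8t$ that converts the normalization of \cite{BKT} into the present one, and by tracking how each piece of the decomposition and each estimate transforms under it; this is precisely the bookkeeping the paper leaves to the reader.
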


For future reference, we compute $\partial_x K^\ast (x,y,t)$ and $\partial_y\partial_x K^\ast (x,y,t)$,
\begin{equation}
\label{ks}
\begin{split}
& \partial_x K^\ast (x,y,t) = \frac{1}{\sqrt{2\pi t}} \Big\{\frac 12 \bam'(x)\varphi(y)\sign(y)-\frac 12 \bam'(x)\rme^{-\frac{y^2}{8t}}\sign(y) \\ & \qquad  - \frac{x}{8t} \rme^{-\frac{x^2}{8t}}\varphi(y)\sign(x) \sign(y) + \rme^{-\frac{(x+y)^2}{8t}}\frac{(x+y)}{4t} \id_{\{\sign (xy)=1\}}\Big\}\;,
\end{split}
\end{equation}
\begin{equation}
\label{dyxk}
\begin{split}
& \partial_x \partial_y K^\ast(x,y,t) = \frac{1}{\sqrt{2\pi t}}\Big\{-\frac 12 \bam'(x)\bam'(y)+\bam'(x) \frac {y}{8t} \rme^{-\frac{y^2}{8t}} \sign(y) \\ & \quad \qquad + \frac {x}{8t} \rme^{-\frac{x^2}{8t}}  \bam'(y)\sign(x)+ \frac{1}{4t} \rme^{-\frac{(x+y)^2}{8t}}\id _{\{\sign (xy)=1\}} \\ & \quad \qquad - \frac{(x+y)^2}{16t^2} \rme^{-\frac{(x+y)^2}{8t}}\id _{\{\sign (xy)=1\}}\Big\}\;. 
\end{split}
\end{equation}
Let us denote by $H$ the Gaussian process on the last line of \eqref{linie},
\begin{equation}
\label{defH}
H(x,t)=\int_0^t\int \partial_x K(x,y,t-s)\,a_{\epsilon}(y)\, \rmd W_{y,s}\;.
\end{equation}
More precisely, $H(x,t)$ is the Gaussian process with covariance,
\begin{equation}
\label{covH}
\begin{split}
& \bb E\big(H(x,t)H(x',t')\big) \\ & \qquad \quad =\int^{t\wedge t'}_0\!\rmd s \int\!\rmd y\, \partial _x K(x,y,t-s) \partial _{x'} K(x',y,t'-s) a_{\epsilon}(y)^2 \;.
\end{split}
\end{equation}  
An expression for $H$ in terms of $Y$ also holds. Indeed, from \eqref{sche},
\begin{equation}
\label{aeq}
\partial _t(H-Y) = - \partial_x^2 \Big(\frac 12 \partial_x^2 (H-Y)-\partial_x^2(V''(\bam) H \Big)\;.
\end{equation} 
Recalling \eqref{defcalg}, and solving in terms of the Green functions $G$ and $K$ respectively we obtain,
\begin{equation}
\label{HK}
\begin{split}
H & = Y + G \partial_x^2 \big(V''(\bam)H) = Y + \mc G \big(V''(\bam) H)\;,
\\ H & = Y + \nabla K \nabla\big(V''(\bam) Y\big)\;. 
\end{split}
\end{equation}
It is not difficult to see that the process  $H$ is bounded and continuous as long as $Y$ is bounded and continuous. Moreover, $u$ is a solution to \eqref{freeie} with initial condition as in \eqref{ic} if and only if $v=u-\bam$ is a solution to \eqref{linie}. The next proposition, whose proof is given in  Appendix \ref{sec:A},  gives estimates on the increments of $H$. 

\begin{proposition}
\label{pcov}
Let $H(x,t)$ be as in \eqref{defH}. Then, for any $h>0$, 
\begin{align}
\label{varH}
&\bb E H(x,t)^2 \le C \Big(t^{\frac 14}\id _{\{t\le 1\}}+\big(1+\bam'(x)^2\, t^{\frac 12}\big) \id_{\{t>1\}}\Big)\;, \\ \label{hxH} & \bb E \big( H(x+h,t)- H(x,t)\big)^2\le C h\big(ht^{\frac 12} \big(1+ h^2+ |x|^2\big) + \log\big(1+h^{-1}t^{\frac 14}\big)\big)\;, \\ \label{htH} &\bb E \big( H(x,t+h)- H(x,t)\big)^2
\le C \big(h^{\frac 14}+h^{\frac 32} + (h+h^{\frac12}) t^{\frac 12}\big)\;.
\end{align}
\end{proposition}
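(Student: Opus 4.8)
The plan is to work entirely from the covariance \eqref{covH}, since each left-hand side is the variance of a centred Gaussian random variable. Setting $\tau=t-s$, the variance \eqref{varH} is $\bb E H(x,t)^2=\int_0^t\!\rmd\tau\int\!\rmd y\,\big(\partial_x K(x,y,\tau)\big)^2 a_\epsilon(y)^2$, and \eqref{hxH} is the same integral with $\partial_x K(x,y,\tau)$ replaced by the spatial increment $\partial_x K(x+h,y,\tau)-\partial_x K(x,y,\tau)$. For the time increment \eqref{htH} I would first use that the white noise restricted to $[t,t+h]$ is independent of its restriction to $[0,t]$, so that by the It\^o isometry
\begin{equation*}
\begin{split}
\bb E\big(H(x,t+h)-H(x,t)\big)^2 & =\int_0^h\!\rmd\tau\int\!\rmd y\,\big(\partial_x K(x,y,\tau)\big)^2a_\epsilon^2 \\ & \quad +\int_0^t\!\rmd\tau\int\!\rmd y\,\big(\partial_x K(x,y,\tau+h)-\partial_x K(x,y,\tau)\big)^2a_\epsilon^2 .
\end{split}
\end{equation*}
The first (fresh-noise) piece is the variance of $H(x,h)$; the second carries the time increments of the kernel. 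In every one of these integrals I would bound $a_\epsilon(y)^2\le 1$ and split the $\tau$-integration at the threshold $t_0$ of \eqref{defK}.

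On the short range $\tau<t_0$ I replace $K$ by $K_\infty+\tilde K$. The point is that the bounds \eqref{Ki} on $\partial_x K_\infty$ have exactly the scaling and exponential decay of the bounds \eqref{dyg}--\eqref{norma} on $\partial_y G$ that underlie the process $Y$, while $\tilde K$ is more regular by \eqref{Kt}; hence the short-time contributions to \eqref{varH}, \eqref{hxH}, \eqref{htH} are estimated by repeating the computations of Proposition \ref{hoy} (using the pointwise increment bounds \eqref{px}, \eqref{pt}, \eqref{2d1}), which reproduce respectively the terms $t^{\frac14}$, $h\log\!\big(1+h^{-1}t^{\frac14}\big)$, and $h^{\frac14}$ (the last from the fresh-noise integral $\int_0^h$). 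These account for all contributions when $t\le 1$.

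On the long range $\tau\ge t_0$ I use $K=K^\ast+k$. The remainder $k$ is harmless: squaring \eqref{ddt}, integrating $\rmd y$ (the exponential gives a factor $\tau^{\frac12}$) and then $\rmd\tau$ over $[t_0,\infty)$ yields a convergent, $O(1)$ contribution, absorbed in the constant term of \eqref{varH}. The genuine work is with the explicit kernel \eqref{ks}. For the variance, squaring $\partial_x K^\ast$ and integrating, the term $\tfrac12\bam'(x)\rme^{-y^2/8\tau}\sign(y)$ contributes $\sim\bam'(x)^2\tau^{-1}\!\int\rme^{-y^2/4\tau}\rmd y\sim\bam'(x)^2\tau^{-\frac12}$, whose $\tau$-integral on $[t_0,t]$ produces exactly the $\bam'(x)^2 t^{\frac12}$ of \eqref{varH}; the term $\tfrac12\bam'(x)\varphi(y)\sign(y)$, with $\varphi$ integrable, gives only $\bam'(x)^2\log t\le C\bam'(x)^2 t^{\frac12}$, and the two Gaussian-prefactor terms, carrying the weights $x\tau^{-1}\rme^{-x^2/8\tau}$ and $(x+y)\tau^{-1}\rme^{-(x+y)^2/8\tau}$, contribute only $O(1)$. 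For the spatial increment I estimate the increment of each term in \eqref{ks}: the factors $\bam'(x)$ have bounded derivative, so $|\bam'(x+h)-\bam'(x)|\le Ch$, and this term produces $h^2 t^{\frac12}$ after the $\rmd y$- and $\rmd\tau$-integrations; the Gaussian-prefactor terms are handled by the mean value theorem, their $x$-derivative being evaluated at an intermediate point $\xi$ with $|\xi|^2\le 2(|x|^2+h^2)$, which is the origin of the weight $(1+h^2+|x|^2)$ in \eqref{hxH}. Finally the long-time part of the time increment is controlled through $\partial_\tau\partial_x K^\ast$, obtained by differentiating \eqref{ks} in $\tau$; the resulting lower-order bound is absorbed into the terms $h^{\frac32}+(h+h^{\frac12})t^{\frac12}$ of \eqref{htH}.

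The main obstacle is precisely this long-time analysis with the explicit kernel $K^\ast$: unlike the short range, it does not reduce to the estimates already proved for $Y$, and one must keep track of the several terms of \eqref{ks}, \eqref{dyxk}, of the polynomial-in-$x$ weights generated by differentiating the Gaussian factors, and of their different growth rates in $\tau$ (with the $\rme^{-y^2/8\tau}$ term carrying the $t^{\frac12}$ growth). A secondary, purely technical, point is the bookkeeping at the interface $\tau\sim t_0$ and the verification that the stated---deliberately non-sharp---bounds \eqref{hxH}, \eqref{htH} indeed dominate all contributions for every $h>0$, including the range $h\ge t_0$ where part of the fresh-noise integral becomes long-time. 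As an alternative one could exploit the identity \eqref{HK}, writing $H=Y+\nabla K\,\nabla\big(V''(\bam)Y\big)$ and estimating the correction from the bounds on $Y$ together with \eqref{Ki}--\eqref{ddt}; but the direct covariance route seems the most transparent, since the estimates mirror the kernel bounds term by term.
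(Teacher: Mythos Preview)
Your treatment of \eqref{varH} matches the paper's: both use the covariance \eqref{covH} directly, split at $t_0$ via \eqref{defK}, and identify the $\bam'(x)^2 t^{1/2}$ growth as coming from the second term of \eqref{ks}.

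For the increments \eqref{hxH} and \eqref{htH}, however, the paper does \emph{not} follow your direct route. It uses the first identity in \eqref{HK}, namely $H=Y+\mc G\big(V''(\bam)H\big)$, to bound for instance the spatial increment by
\[
2\,\bb E\big(Y(x+h,t)-Y(x,t)\big)^2+2\,\bb E\big(\mc G(V''(\bam)H)(x+h,t)-\mc G(V''(\bam)H)(x,t)\big)^2,
\]
and similarly in $t$. The $Y$-increments are already controlled by Proposition~\ref{hoy}, and the correction is bounded by Cauchy--Schwarz, inserting the just-proved \eqref{varH} under the integral; the resulting expression involves only increments of $\partial_y^2 G$, for which Lemma~\ref{propfi} supplies everything needed (\eqref{px}, \eqref{pt}, \eqref{2d1}). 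So the paper bootstraps: \eqref{varH} feeds into \eqref{hxH}--\eqref{htH} through $\mc G$, and the polynomial weight $1+h^2+|x|^2$ arises from squaring the factor $h^2+(1+|x|)h$ produced by integrating \eqref{2d1} against $\bam'$.

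What this buys over your approach is that all increment estimates are routed through $G$, whose pointwise increment bounds are explicitly proved. Your direct route, by contrast, needs increment bounds on $\partial_x K_\infty$, $\partial_x\tilde K$, $\partial_x k$, while Proposition~\ref{BKT} only states pointwise bounds on derivatives of order $i,j\in\{0,1\}$---not the second-derivative or increment information you would need to mimic the computation \eqref{Hx}. For $K_\infty$, a constant-coefficient Fourier integral, such bounds could be derived; for $\tilde K$ and $k$ you would have to go back to \cite{BKT}. So your approach is not wrong, just heavier and dependent on estimates outside the paper's stated toolbox. Note too that the ``alternative'' you mention at the end is the \emph{second} line of \eqref{HK}, which still involves $K$; the paper's device is precisely to use the first line, which replaces $K$ by $G$.
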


The  estimate \eqref{varH} for the variance of the process $H(x,t)$ is uniform in $x$. 
However, we will need more precise estimates for $x$ large, which are considered in the next lemma. Recall that, as was already observed for the process $Y$, the  process $H$ depends on $\epsilon$ through $a_\epsilon$, see \eqref{defH}, although the dependence is  not explicit in the notation.
   
\begin{lemma}
\label{lxg}
For each $\delta>0$, $R>\epsilon^{-\frac{11\gamma}{10}} \delta^{-2}+1$, and any $\epsilon$ sufficiently small, the process $H$ satisfies,
\begin{equation}
\label{ex}
\sup_{|x|\ge R +\epsilon^{-\beta},\, t\le \epsilon^{-\gamma}T} \bb E H(x,t)^2 \le \delta ^2\;.
\end{equation}
\end{lemma}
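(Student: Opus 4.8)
The starting point is the variance, obtained from \eqref{covH} by taking $x'=x$, $t'=t$ and substituting $\tau = t-s$,
\[
\bb E\, H(x,t)^2 = \int_0^t\!\rmd\tau \int\!\rmd y\, \big(\partial_x K(x,y,\tau)\big)^2 a_\epsilon(y)^2\;.
\]
Since $\mathrm{supp}(a_\epsilon)\subset[-\epsilon^{-\beta},\epsilon^{-\beta}]$ and $\|a\|_\infty=1$, the $y$-integration is confined to $|y|\le\epsilon^{-\beta}$ and $a_\epsilon(y)^2\le 1$ there. The crucial geometric remark is that, for $|x|\ge R+\epsilon^{-\beta}$ and $|y|\le\epsilon^{-\beta}$, one has $|x-y|\ge R$; moreover, on the set $\{\sign(xy)=1\}$ also $|x+y|\ge|x|\ge R$. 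Thus throughout the integral the relevant variables are separated by at least $R$, and the plan is to convert this separation, through the exponential and Gaussian decay of $\partial_x K$ recorded in Proposition \ref{BKT}, into a smallness factor. I would split $\int_0^t = \int_0^{t\wedge t_0}+\int_{t_0}^t$ (the second piece being empty when $t\le t_0$) and use the two expressions of \eqref{defK} in the respective ranges.

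For $\tau<t_0$ I write $\partial_x K = \partial_x K_\infty + \partial_x\tilde K$ and use \eqref{Ki}, \eqref{Kt} to get $|\partial_x K(x,y,\tau)|\le C\tau^{-\frac12}\rme^{-2^{\frac14}\tau^{-\frac14}|x-y|}$. Peeling off one factor $\rme^{-c\tau^{-\frac14}R}$ from the squared kernel (using $|x-y|\ge R$) and keeping a residual exponential that integrates in $y$ to $C\tau^{\frac14}$, I am left with $\int_0^{t_0}\!\rmd\tau\,\tau^{-\frac34}\rme^{-c\tau^{-\frac14}R}$. On $\tau\le t_0$ the exponent is at most $-c\,t_0^{-\frac14}R$, so this contribution is bounded by $C\rme^{-c't_0^{-\frac14}R}$, which is negligible since $R\ge\epsilon^{-\frac{11\gamma}{10}}\delta^{-2}\to\infty$.

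For $\tau\ge t_0$ I use $\partial_x K = \partial_x K^\ast + \partial_x k$. The term $k$ is handled by \eqref{ddt}: peeling $\rme^{-\frac\mu2\tau^{-\frac12}R}$ and integrating the rest in $y$ and $\tau\in(t_0,t)$, and invoking $\tau\le\epsilon^{-\gamma}T$ so that $\tau^{-\frac12}\ge\epsilon^{\frac\gamma2}T^{-\frac12}$, yields a bound $C\rme^{-c\epsilon^{\frac\gamma2}R}$; here $\epsilon^{\frac\gamma2}R\ge\epsilon^{-\frac{3\gamma}5}\delta^{-2}\to\infty$. For $\partial_x K^\ast$ I bound the four terms of \eqref{ks} separately. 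The first two carry the factor $\bam'(x)$, and since $\bam'(x)^2\le C\rme^{-4|x|}\le C\rme^{-4R}$ while the accompanying $y$- and $\tau$-integrals are finite, they are negligible. The third and fourth carry the Gaussian factors $\rme^{-\frac{x^2}{8\tau}}$ and $\rme^{-\frac{(x+y)^2}{8\tau}}$; I split each into two halves, bound one half by $\rme^{-R^2\epsilon^\gamma/(8T)}$ (using $|x|,|x+y|\ge R$ and $\tau\le\epsilon^{-\gamma}T$), and check that the remaining half together with the polynomial prefactors $\frac{x^2}{\tau^3}$, $\frac{(x+y)^2}{\tau^3}$ integrates in $y$ and $\tau$ to a quantity bounded uniformly in $x$ (in fact $O(R^{-2})$). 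Since $R^2\epsilon^\gamma\ge\epsilon^{-\frac{6\gamma}5}\delta^{-4}\to\infty$, these are negligible too.

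Summing up, $\bb E\, H(x,t)^2$ is bounded, uniformly over $|x|\ge R+\epsilon^{-\beta}$ and $t\le\epsilon^{-\gamma}T$, by a finite sum of terms each decaying faster than any power of $\epsilon$; in particular it does not exceed $\delta^2$ once $\epsilon$ is small enough, which is \eqref{ex}. I expect the main obstacle to be the analysis of $\partial_x K^\ast$: unlike the exponentially localized kernels $K_\infty,\tilde K,k$, its algebraic form \eqref{ks} interlaces exponential decay in $x$ (via $\bam'$) with Gaussian decay in $x/\sqrt\tau$, so one must verify that in every term the separation $R$---and, for the $(x+y)$ term, the sign constraint guaranteeing $|x+y|\ge R$---delivers the required smallness uniformly across the long time window $\tau\le\epsilon^{-\gamma}T$ and without spurious growth in $x$. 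This is exactly the point where the strong lower bound imposed on $R$ is consumed.
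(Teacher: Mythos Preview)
Your approach matches the paper's: the same split $\tau<t_0$ versus $\tau\ge t_0$, the same term-by-term treatment of $K_\infty,\tilde K,k$ and of the four pieces of $\partial_x K^\ast$ in \eqref{ks}, and the same geometric input $|x-y|\ge R$ (together with $|x+y|\ge |x|\ge R$ on $\{\sign(xy)=1\}$). The only difference is tactical: the paper converts each piece into an explicit polynomial bound of the form $C/R$ or $C/|x|$ (see \eqref{e1}--\eqref{e6}), whereas you retain the exponential factors and invoke $R\to\infty$ as $\epsilon\to 0$; both routes deliver \eqref{ex}.

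One point to tighten. For the first two terms of \eqref{ks} you assert that ``the accompanying $y$- and $\tau$-integrals are finite''. They are finite for each fixed $t$, but not uniformly over $t\le\epsilon^{-\gamma}T$: after integrating in $y$ one is left with $\int_{t_0}^t\tau^{-1}\rmd\tau=\log(t/t_0)$ for the first term and $\int_{t_0}^t\tau^{-1/2}\rmd\tau\sim t^{1/2}$ for the second, which grow like $\log\epsilon^{-1}$ and $\epsilon^{-\gamma/2}$. Your conclusion is still correct, since $\bam'(x)^2\le C\rme^{-4R}$ decays faster than any power of $\epsilon$ and hence beats these polynomial factors, but you should say so explicitly. (Similarly, for the fourth term of \eqref{ks} the residual integral is $O(1)$ rather than $O(R^{-2})$; this is immaterial given the Gaussian prefactor you have already extracted.)
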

 
\begin{proof}
Recalling \eqref{defH} and \eqref{defK}, 
\begin{equation}
\label{EH}
\begin{split}
\bb E H(x,t)^2 & = \int^{t\land t_0}_0\!\rmd s \int\!\rmd y\, \big(\partial_x K_\infty(x,y,s) + \partial_x \tilde K(x,y,s)\big)^2 a_\epsilon(y)^2 \\ & \quad + \int_{t\land t_0}^t\!\rmd s \int\!\rmd y\, \big(\partial_x K^{\ast}(x,y,s)+\partial_x k(x,y,s)\big)^2 a_\epsilon(y)^2\;.
\end{split}
\end{equation}
Therefore, from \eqref{Ki}, for $R>1$ and $|x|>R+\epsilon ^{-\beta}$, 
\begin{equation}
\label{e1}
\begin{split}
& \int^{t\land t_0}_0\!\rmd s \int\!\rmd y\, \big(\partial_x K_\infty(x,y,s)\big)^2   a_\epsilon(y)^2 
\le C \int^{t\land t_0}_0\!\rmd s\, \frac 1s \int^{\epsilon^{-\beta}}_{-\epsilon^{-\beta}}\!\rmd y\, \rme^{-s^{-\frac14}|x-y|} \\& \qquad \qquad \qquad \le C \int^{t\land t_0}_0\!\rmd s\,  \frac{1}{s^{\frac 34}}\rme^{-R s^{-\frac 14}} \le C \frac{(t\land t_0)^{\frac 12}}{R}\;.
\end{split}
\end{equation}
Similar computations yield, from \eqref{Kt} and \eqref{ddt} respectively, 
\begin{align}
\label{e2}
& \int^{t\land t_0}_0\!\rmd s \int\!\rmd y\, \big(\partial_x {\tilde K}(x,y,s)\big)^2  a_\epsilon(y)^2 \le C \frac{t\land t_0}{R}\;,  \\ \label{e3} &\int_{t\land t_0}^t\!\rmd s \int\!\rmd y\, \big(\partial_x k(x,y,s)\big)^2 a_\epsilon(y)^2 \le \frac{C}{R}\;.
\end{align}
With the aid of \eqref{ks} we estimate the term involving $\partial_x K^{\ast}$ in \eqref{EH}. Observe that $\bam '(x)\le \rme^{-|x|}$, and then 
\begin{equation}
\label{e4}
\begin{split}
&\int_{t\land t_0}^t \!\rmd s\, \frac{\bam'(x)^2}{8\pi s} \int\!\rmd y\,\big(\varphi(y)\sign(y)-\rme^{-\frac{y^2}{8s}}\sign(y)\big)^2 a_\epsilon(y)^2\\ & \qquad \qquad \qquad \le C \rme^{-2|x|}\big(\log t-\log (t\land t_0) + t^{\frac 12} \big).
\end{split}
\end{equation}
 Finally, 
\begin{align}
\label{e5}
&\int_{t\land t_0}^t \!\rmd s\, \int\!\rmd y\, \Big(\frac{ \varphi(y)}{\sqrt{2\pi s}}\frac{x}{4s}\,\rme^{-\frac{x^2}{8s}}\Big)^2 a_\epsilon(y)^2 \le C \int_{t\land t_0}^t\!\rmd s\, \frac{x^2}{s^3}\rme^{-\frac{x^2}{4s}} \le C \frac{1}{x^2}\;, \\ \label{e6} & \int_{t\land t_0}^t \!\rmd s\, \int\!\rmd y\, \Big(\frac{(x+y)}{4s\,\sqrt{2\pi s}} \rme^{-\frac{(x+y)^2}{8s}}\Big)^2 \id_{\{\sign (xy)=1\}} a_\epsilon(y)^2 \le C \frac {1}{|x|}\;.
\end{align} 
From \eqref{e1}-\eqref{e6}, \eqref{ex} follows for $R>\epsilon^{-\frac{11\gamma}{10}}\delta^{-2}+1$.  
\end{proof}

The continuity of the process $H(x,t)$ in both variables  follows from Proposition \ref{pcov}. We can also obtain estimates for the supremum of $H$.

\begin{proposition}
\label{suph}
Consider the set $\mc T_\epsilon$ as defined in \eqref{defmct} with $T>0$ and $\gamma>0$. Then the process $H(x,t)$ satisfies the following properties.   
\begin{itemize}
\item[\rm{(i)}]
For each $\xi>0$ there exists a constant $C>0$ such that, for any $\epsilon\in(0,1)$, 
\begin{equation}
\label{esupH}
\bb E \sup_{(x,t)\in \mc T_\epsilon} | H(x,t)| <C \epsilon ^{-\frac{\gamma}{4} -\xi}\;.
\end{equation}
\item [\rm{(ii)}]
\begin{equation}
\label{ih}
\bb P\big(\sup_{(x,t)\in \mc T_\epsilon}\epsilon^{\frac \gamma 4+\xi}| H(x,t)|<\infty\big)=1\;.
\end{equation}
\item[\rm{(iii)}]
There exist $\zeta>0$ and a constant $C>0$ such that,
\begin{equation}
\label{iih}
\bb P \big(\sup_{(x,t)\in \mc T_\epsilon} |H(x,t)|>\epsilon^{-\frac \gamma 4-\xi}\big) \le C \rme^{-\epsilon^{-\zeta}}\;.
\end{equation}
\end{itemize}
\end{proposition}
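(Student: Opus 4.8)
The plan is to transcribe the proof of Proposition~\ref{supy}, replacing the bounds for $Y$ by the corresponding ones for $H$ furnished by Proposition~\ref{pcov} and Lemma~\ref{lxg}; the exponent $\frac\gamma4$ in place of $\frac\gamma8$ reflects the larger variance of $H$. As there, items (ii) and (iii) are deduced from item (i). Indeed, \eqref{esupH} gives $\bb E\sup_{\mc T_\epsilon}|H|<\infty$ for each fixed $\epsilon$, whence \eqref{ih} is immediate. For \eqref{iih}, since $\bam'$ is bounded and $t\le T\epsilon^{-\gamma}$, the variance bound \eqref{varH} yields
\[
\sigma^2_{\mc T_\epsilon}:=\sup_{(x,t)\in\mc T_\epsilon}\mathrm{Var}\big(H(x,t)\big)\le C\epsilon^{-\frac\gamma2}\;.
\]
Applying \eqref{esupH} with $\frac\xi2$ in place of $\xi$, the threshold $\epsilon^{-\frac\gamma4-\xi}$ exceeds $\bb E\sup|H|$ by at least $\tfrac12\epsilon^{-\frac\gamma4-\xi}$ for $\epsilon$ small; Borell's inequality, applied exactly as in \eqref{bineq} to the process on $\mc T_\epsilon$ together with its reflection, then gives $\bb P(\sup|H|>\epsilon^{-\frac\gamma4-\xi})\le 2\exp(-c\,\epsilon^{-2\xi})$, which is \eqref{iih} with $\zeta=2\xi$.

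It remains to prove (i) via the entropy bound \eqref{ent} for the pseudo-metric $d\big((x,t),(y,s)\big)=\big(\bb E(H(x,t)-H(y,s))^2\big)^{1/2}$. By \eqref{varH} the $d$-diameter of $\mc T_\epsilon$ is at most $C\epsilon^{-\frac\gamma4}$, which fixes the upper limit of integration. To bound the covering number $N(\delta)$ I split $\mc T_\epsilon$ as in Proposition~\ref{supy}: on the tail $\{|x|>R+\epsilon^{-\beta}\}$, Lemma~\ref{lxg} shows that the choice $R=\epsilon^{-\frac{11\gamma}{10}}\delta^{-2}+1$ forces $\bb E H(x,t)^2\le\delta^2$ there, so this region lies in a single $d$-ball of radius $\delta$. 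On the bulk $\{|x|\le R+\epsilon^{-\beta}\}$ I cover by the rectangles $Q(x_0,t_0)=\{|x-x_0|\le b,\,|t-t_0|\le b^4\}$, bounding their $d$-diameter by means of \eqref{hxH} and \eqref{htH} and choosing $b$ small enough that $Q\subset B_\delta$; the number of translates needed is of order $b^{-5}\epsilon^{-\gamma}(\epsilon^{-\beta}+R)$.

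The genuinely new feature, and the step I expect to require the most care, is that the tail bound for $H$ in Lemma~\ref{lxg} decays only like $\delta^{-2}$ in $R$ (rather than logarithmically, as for $Y$), so here $R\sim\epsilon^{-\frac{11\gamma}{10}}\delta^{-2}$, and consequently the factor $(1+h^2+|x|^2)$ in the spatial increment \eqref{hxH}—absent for $Y$—is as large as $|x|^2\lesssim\epsilon^{-\frac{11\gamma}5}\delta^{-4}+\epsilon^{-2\beta}$ on the bulk. Together with the growth $t^{\frac12}\le\epsilon^{-\frac\gamma2}$ in \eqref{hxH} and \eqref{htH}, this forces a smaller choice of $b$ (a higher power of $\delta$ times a negative power of $\epsilon$), but all the resulting factors are still powers of $\epsilon^{-1}$ and $\delta^{-1}$. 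One therefore again arrives at a bound $N(\delta)\le 1+C\epsilon^{-A}\delta^{-p}$ valid for $\delta\le\mathrm{diam}\,\mc T_\epsilon$, with suitable $A,p>0$. Substituting into \eqref{ent} and evaluating the integral exactly as in \eqref{enti}—the outer power now being $\epsilon^{-\frac\gamma4}$ from the diameter in place of $\epsilon^{-\frac\gamma8}$—gives $\int_0^{\mathrm{diam}}(\log N(\delta))^{\frac12}\,\rmd\delta\le C\epsilon^{-\frac\gamma4-A\rho}$ for any small $\rho>0$; choosing $\rho<A^{-1}\xi$ establishes \eqref{esupH}. The rest is a routine transcription of the argument for $Y$, the only real work being the bookkeeping of these polynomial-in-$\epsilon^{-1}$ factors in the bulk covering.
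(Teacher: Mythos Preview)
Your proposal is correct and follows essentially the same approach as the paper, which merely states that item (i) is obtained by adapting the proof of Proposition~\ref{supy} with Proposition~\ref{pcov} and Lemma~\ref{lxg} in place of the corresponding estimates for $Y$, and that items (ii) and (iii) follow from \eqref{varH} and \eqref{esupH} via Borell's inequality as before. In fact you have supplied more detail than the paper does, correctly flagging the polynomial (rather than logarithmic) dependence of $R$ on $\delta$ from Lemma~\ref{lxg} and the extra $|x|^2$ factor in \eqref{hxH} as the only points requiring care in the transcription.
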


\begin{proof} We omit the details of the proof: item (i) follows from Proposition \ref{pcov} and Lemma \ref{lxg}, adapting the proof of Proposition \ref{supy}. Items (ii) and (iii) follow from  \eqref{varH} and \eqref{esupH}, proceeding again as in the demonstration of Proposition \ref{supy}. 
\end{proof}
 
\section{Stability of $\bam$}
\label{sec:5}
In this section we prove the stability of the front $\bam$ up to times of the order $\epsilon^{-\gamma}$, with $\gamma<\frac 23$. A precise statement is given in Proposition \ref{gapprox}. 

\begin{lemma}
\label{sup1ddk}
There exists a constant $M>0$ such that  for any $T\ge 0$, 
\begin{equation}
\label{cddk}
\sup_{x\in \bb R} \int_0^T\!\rmd t \int\!\rmd y\, \big| \partial_x\partial_y K(x,y,t) \big|\le M\sqrt {T}\;.
\end{equation}
\end{lemma}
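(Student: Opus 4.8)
The plan is to use the decomposition of the kernel $K$ given in Proposition \ref{BKT} and to split the time integral at the threshold $t_0$, writing $\int_0^T = \int_0^{T\wedge t_0} + \int_{T\wedge t_0}^T$. For each of the resulting pieces I would bound the inner integral $\int\!\rmd y\,|\partial_x\partial_y(\cdot)|$ uniformly in $x$ by a suitable power of $t$, and then carry out the $t$-integration. The point is that every piece yields an integrand that is at worst of order $t^{-\frac12}$ (up to a factor that integrates to a constant or a logarithm), so that integration in $t$ produces the claimed $\sqrt T$ bound.

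For the short-time regime $t\in(0,t_0)$ I would use the Gaussian-type estimates \eqref{Ki} and \eqref{Kt} with $i=j=1$. The change of variables $z = t^{-\frac14}(x-y)$ turns $\int\!\rmd y\, t^{-\frac34}\exp(-2^{\frac14}t^{-\frac14}|x-y|)$ into $Ct^{-\frac12}$ for the $K_\infty$ term and, similarly, the analogous $\tilde K$ integrand into $Ct^{-\frac14}$, both uniformly in $x$. Integrating in $t$ over $(0,T\wedge t_0)$ gives $C(T\wedge t_0)^{\frac12}\le C\sqrt T$ and $C(T\wedge t_0)^{\frac34}\le Ct_0^{\frac14}\sqrt T$ respectively. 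The contribution of the remainder kernel $k$ for $t\ge t_0$ is handled the same way: by \eqref{ddt}, the substitution $z=t^{-\frac12}(x-y)$ gives $\int\!\rmd y\,\tfrac{C}{t}\exp(-\mu t^{-\frac12}|x-y|) = Ct^{-\frac12}$, whose integral over $(t_0,T)$ is $\le C\sqrt T$.

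The genuine work is the long-time leading term $K^\ast$ for $t\ge t_0$, for which I would insert the explicit expression \eqref{dyxk} and estimate its five summands separately, always integrating in $y$ first. The two terms carrying the factor $\bam'(y)$ or $\bam'(x)$ are controlled using $\bam'(x)\le 1$ together with $\int\!\rmd y\,\bam'(y)<\infty$, giving an integrand $\le Ct^{-\frac12}$; the Gaussian factors $\frac{|y|}{t}\rme^{-\frac{y^2}{8t}}$ and $\frac{(x+y)^2}{t^2}\rme^{-\frac{(x+y)^2}{8t}}$ integrate in $y$ to powers of $t$ that again leave a $t^{-\frac12}$ or $t^{-1}$ integrand. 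I expect the delicate points to be: (a) the term $\frac{|x|}{t^{\frac32}}\rme^{-\frac{x^2}{8t}}$, whose $t$-integral is in fact bounded by a constant uniformly in $x$ via the substitution $u=\frac{x^2}{8t}$ (the explicit $x$-dependence cancels); and (b) the two terms supported on $\{\sign(xy)=1\}$, whose $y$-integral is $\le Ct^{-1}$, producing a logarithmic growth $C\log(T/t_0)$ in the $t$-integration. Since the $K^\ast$ contribution is present only when $T\ge t_0$, both a constant and $C\log(T/t_0)$ are dominated by $C\sqrt T$ on that range, so the estimate closes.

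The main obstacle is thus the $K^\ast$ analysis: one must check that no summand in \eqref{dyxk} produces $t$-growth faster than $t^{-\frac12}$ after the $y$-integration, and that all constants are uniform in $x$, in particular the cancellation of the $x$-dependence in the term $\frac{|x|}{t^{\frac32}}\rme^{-\frac{x^2}{8t}}$ and the fact that the worst logarithmic divergence is still beaten by $\sqrt T$ on the relevant range $T\ge t_0$. The short-time and remainder estimates are, by contrast, routine rescaling arguments.
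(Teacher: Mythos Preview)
Your proposal is correct and follows essentially the same route as the paper: split the time integral at $t_0$, use \eqref{Ki}, \eqref{Kt}, \eqref{ddt} for the short-time and remainder pieces via the obvious rescaling, and for $K^\ast$ insert \eqref{dyxk} and bound the five summands term by term, noting that the $\tfrac{|x|}{t^{3/2}}\rme^{-x^2/(8t)}$ term integrates in $t$ to a constant uniformly in $x$ and that the $t^{-1}$ integrands from the last two summands produce only a logarithm, dominated by $\sqrt T$ on $T\ge t_0$. The paper groups the summands slightly differently (as $I_1+I_2$, $I_3$, $I_4+I_5$), but the estimates and the mechanism are identical to yours.
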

\begin{proof}
It follows simply by integration of  the expression \eqref{defK} for the kernel $K$, and estimation of each term with the aid of \eqref{Ki}, \eqref{Kt}, \eqref{ddt}, and \eqref{dyxk}. Recall the definition of $t_0$ in Proposition \ref{BKT}, and observe indeed that, from \eqref{defK}, \eqref{Ki}, and \eqref{Kt}, 
\[
\begin{split}
\int_0^{T\land t_0}\!\rmd t \int\!\rmd y\, \big|\partial_x\partial_y K(x,y,t)\big|
& \le C \int_0^{T\land t_0}\!\rmd t\int\!\rmd y\, \big(t^{-\frac 34} +t^{-\frac 12}\big)
\rme^{-t^{-\frac 14}|x-y|} \\ & \le C \big((T\land t_0)^{\frac 12}+(T\land t_0)^{\frac 34}\big)\;.
\end{split}
\]
If $T>t_0$, from \eqref{defK} and \eqref{ddt}, 
\[
\int^T_{T\land t_0}\!\rmd t \int\!\rmd y\, \big|\partial_x\partial_y k(x,y,t)\big| \le C\int^T_{T\land t_0}\!\rmd t\, \frac{1}{t} \int\!\rmd y\,\rme^{-\mu t^{-\frac 12}|x-y|} \le C T^{\frac 12}\;, 
\]
and we are left with the estimation of the term containing $K^\ast$, which can be done by integration of the five terms in \eqref{dyxk}. Call $I_1,\ldots,I_5$ the resulting integrals. To conclude the proof, it is easy to see that 
\begin{align}
\nn | I_1+I_2|&\le C \int^T_{T\land t_0}\!\rmd t\,t^{-\frac 12} \le C T^{\frac 12}\;, \qquad |I_3| \le C \int^T_{T\land t_0}\!\rmd t\, x\,t^{-\frac 32} \rme^{-\frac{x^2}{8t}} \le C T^{\frac 12}\;, \\ \nn |I_4+I_5| & \le C \int^T_{T\land t_0}\!\rmd t\, \frac{1}{t} \le C T^{\frac 12}\;.
\end{align} 
The lemma is thus proved.
\end{proof}

\begin{proposition}
\label{gapprox}
There exists a time $T_\epsilon (\omega)$ such that the equation \eqref{sche} with initial condition $\bam $ has a unique continuous bounded solution $u(x,t)$ for $t\le T_\epsilon$. Moreover, given $T>0$, $\gamma<\frac23$, and $\xi \in \big(0,\frac 12 -\frac {3\gamma}{4}\big)$, there exists a set $B_\epsilon(T,\xi)\in \mf F$ such that
\[
\bb P \{B_\epsilon(T,\xi)\} \to 1 \mbox{ as } \epsilon \to 0\;,
\]
and for $\omega \in B_\epsilon(T,\xi)$ and $\epsilon$ sufficiently small, $u(x,t)$
satisfies,
\begin{align}
\label{gapp} &\sup_{(x,t)\in \mc T_\epsilon} |u(x,t)-\bam(x)|<\epsilon^{\frac12 -\frac\gamma 4-\frac \xi 3}\;, \\ \label{gapp2} & \sup_{(x,t)\in \mc T_\epsilon}
\big|u(x,t)-\bam(x)-\sqrt \epsilon \,H(x,t)\big|<\epsilon^{1-\gamma-\xi}\;,
\end{align}
where the set $\mc T_\epsilon$ is defined in \eqref{defmct}. In particular, $\bb P \{ T_\epsilon > \epsilon^{-\gamma}T\} \to 1$ as $\epsilon \to 0$.
\end{proposition}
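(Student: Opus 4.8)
The plan is to work throughout with the integral equation \eqref{linie} for $v=u-\bam$. Since the initial datum is $\bam$ we have $h\equiv 0$, i.e. $f\equiv 0$, so the first term in \eqref{linie} disappears and the equation becomes
\[
v(x,t)=\sqrt\epsilon\, H(x,t)-\int_0^t\!\rmd s\int\!\rmd y\,\partial_x\partial_y K(x,y,t-s)\,\big(3\bam v^2+v^3\big)(y,s)\;.
\]
Thus $v$ splits into the linear Gaussian term $\sqrt\epsilon\,H$, whose size is controlled by Proposition \ref{suph}, and a nonlinear correction $w:=v-\sqrt\epsilon\,H$ which I expect to be of strictly higher order in $\epsilon$. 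This is exactly the content of the two conclusions: \eqref{gapp2} says $w$ is negligible, and then \eqref{gapp} follows from the bound on $\sqrt\epsilon\,H$. I would take as good set $B_\epsilon(T,\xi)$ the event $\{\sup_{\mc T_\epsilon}|H|\le\epsilon^{-\frac\gamma4-\frac\xi6}\}\cap\{\sup_{\mc T_\epsilon}\sqrt\epsilon\,|Y|<\infty\}$; by \eqref{iih} (applied with parameter $\tfrac\xi6$) and Proposition \ref{supy}(ii) this has probability at least $1-C\rme^{-\epsilon^{-\zeta}}\to 1$, and the finiteness of $\|Y\|_\infty$ is what legitimates invoking Proposition \ref{locuye}.

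On $B_\epsilon(T,\xi)$ I would run a continuation/bootstrap argument. Let $T_\epsilon$ be the maximal existence time from Proposition \ref{locuye}, and set
\[
\tau_\epsilon=T_\epsilon\wedge\inf\big\{t\ge 0:\|v\|_{\infty,t}\ge\epsilon^{\frac12-\frac\gamma4-\frac\xi3}\big\}\wedge\epsilon^{-\gamma}T\;.
\]
For $t<\tau_\epsilon$ the solution exists and $\|v\|_{\infty,t}<\epsilon^{\frac12-\frac\gamma4-\frac\xi3}$, which is small for small $\epsilon$ since $\frac12-\frac\gamma4>0$; hence $|3\bam v^2+v^3|\le C\|v\|_{\infty,s}^2$. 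The key analytic input is Lemma \ref{sup1ddk}: after the substitution $t-s\mapsto s$, \eqref{cddk} gives $\int_0^t\!\rmd s\int\!\rmd y\,|\partial_x\partial_y K(x,y,t-s)|\le M\sqrt t$, so the nonlinear term is bounded by $CM\sqrt t\,\|v\|_{\infty,t}^2$. With $t\le\epsilon^{-\gamma}T$ this yields
\[
\|w\|_{\infty,t}\le CM\sqrt T\,\epsilon^{-\frac\gamma2}\,\epsilon^{2(\frac12-\frac\gamma4-\frac\xi3)}=CM\sqrt T\,\epsilon^{1-\gamma-\frac{2\xi}3}\;.
\]

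Everything then reduces to comparing exponents. On $B_\epsilon$ the linear term satisfies $\sqrt\epsilon\,\|H\|_{\infty,t}\le\epsilon^{\frac12-\frac\gamma4-\frac\xi6}$, so combining with the previous display,
\[
\|v\|_{\infty,t}\le\epsilon^{\frac12-\frac\gamma4-\frac\xi6}+CM\sqrt T\,\epsilon^{1-\gamma-\frac{2\xi}3}\;.
\]
Both exponents strictly exceed $\frac12-\frac\gamma4-\frac\xi3$: the first because $\frac\xi6<\frac\xi3$, and the second because $1-\gamma-\frac{2\xi}3-\big(\frac12-\frac\gamma4-\frac\xi3\big)=\frac12-\frac{3\gamma}4-\frac\xi3>0$, which is precisely where the hypotheses $\gamma<\frac23$ and $\xi<\frac12-\frac{3\gamma}4$ are used. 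Hence for $\epsilon$ small the right-hand side is $\le\frac12\epsilon^{\frac12-\frac\gamma4-\frac\xi3}$, strictly below the threshold defining $\tau_\epsilon$, so that infimum is not attained on $[0,\epsilon^{-\gamma}T)$; since $\|u\|_\infty\le 2$ there, the solution cannot blow up and is continued past any such time, forcing $T_\epsilon>\epsilon^{-\gamma}T$ and $\tau_\epsilon=\epsilon^{-\gamma}T$. This gives \eqref{gapp} on all of $\mc T_\epsilon$, while $\|w\|_{\infty,t}\le CM\sqrt T\,\epsilon^{1-\gamma-\frac{2\xi}3}<\epsilon^{1-\gamma-\xi}$ (as $\frac{2\xi}3<\xi$) gives \eqref{gapp2}; the final probability statement follows from $\bb P(B_\epsilon)\to 1$.

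I expect the main obstacle to be keeping the continuation argument non-circular: one must open the solution only on $[0,\tau_\epsilon)$, derive the \emph{strict} improvement of the a priori bound there, and only afterwards conclude that $\tau_\epsilon$ can be neither an exit time nor a blow-up time. The genuinely analytic difficulty, namely controlling the accumulation of the nonlinearity over the long interval $[0,\epsilon^{-\gamma}T]$, is already absorbed into the $\sqrt T$ growth of Lemma \ref{sup1ddk}; this is what allows the quadratic smallness of $3\bam v^2+v^3$ to beat the factor $\epsilon^{-\gamma/2}$, and the rest is the exponent bookkeeping sketched above.
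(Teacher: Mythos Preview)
Your proof is correct and follows essentially the same route as the paper: both argue via the integral equation \eqref{linie} with $f=0$, define the good set through the tail bound \eqref{iih} on $H$, run a stopping-time/continuation argument with threshold $\epsilon^{\frac12-\frac\gamma4-\frac\xi3}$ on $\|v\|_\infty$, and control the nonlinear integral by Lemma \ref{sup1ddk}. The only cosmetic differences are that the paper rescales time and works with the normalized quantity $D_\epsilon=\epsilon^{-\frac12+\frac\gamma4+\frac\xi3}v(\cdot,\epsilon^{-\gamma}\cdot)$ rather than with $v$ and $w$ directly, and chooses the margin $\xi/3$ where you take $\xi/6$; the exponent bookkeeping and the logical structure are identical.
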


\begin{proof}
Recall that the equation \eqref{sche} has to be understood in the sense of its integral version \eqref{freeie}.
We already know from Proposition \ref{locuye} that, for each $\omega \in \Omega$, there exists a positive time  $T_\epsilon(\omega)$ such that the equation \eqref{freeie} admits a unique continuous solution $u$ for $t\le T_\epsilon(\omega)$. From \eqref{i}, we also know that $\|\sqrt \epsilon Y\|_{\infty,t}<\infty$ if $t\le \epsilon^{-4}\,T$. We will exhibit a set $ B_\epsilon(T,\xi) \in \mf F$  such that, if $\omega \in  B_\epsilon(T,\xi)$, then the corresponding solution $u$ with initial condition $\bam$ satisfies \eqref{gapp}, which in particular implies (from  the proof of Proposition \ref{locuye} for this particular initial condition and $\omega$) that we can take  $T_\epsilon(\omega)>\epsilon^{-\gamma}\,T$, for $\gamma <\frac 23$. To complete the proof,     we need then to show the existence of the set $B_\epsilon(T,\xi)$, whose probability goes to $1$ as $\epsilon \to 0$, and such that \eqref{gapp} and \eqref{gapp2} hold on $B_\epsilon(T,\xi)$. This will be done for the solution to the integral equation \eqref{linie} with $f=0$, which is equivalent to \eqref{freeie} with $u_0=\bam$.

In the sequel we shall use the rescaled variables $(x,\epsilon^{-\gamma}t)$, with $(x,t)\in \mc T = \{(x,t)\colon x\in \bb R,\; t\in [0,T]\}$, to parametrize the elements in $\mc T_\epsilon$. Define the set 
\[
B_\epsilon(T,\xi) = \Big\{\omega\colon \sup_{(x,t)\in\mc T} \epsilon^{\frac{\gamma}{4}+\frac \xi 3} |H(x,\epsilon^{-\gamma}t)|\le \frac 12\Big\}\;.
\]
From \eqref{iih}, $\bb P\big(B_\epsilon(T,\xi)\big)\to 1$ as $\epsilon \to 0$. To prove that \eqref{gapp} holds on this set, observe that a simple time scaling in equation \eqref{linie} gives, after \eqref{defH}, 
\begin{equation}
\label{tesc}
\begin{split}
v(x,\epsilon^{-\gamma}t) = & - \epsilon^{-\gamma}\,\int_0^t\!\rmd s \int\!\rmd y\, \partial_x\partial_y K(x,y,\epsilon^{-\gamma}(t-s)) \big(3\bam v^2+v^3\big) (y,\epsilon^{-\gamma}s) \\ & + \sqrt \epsilon\, H(x,\epsilon^{-\gamma}t)\;.
\end{split}
\end{equation}
In terms of $D_\epsilon(x,t)=\epsilon^{-\frac12+\frac\gamma 4+\frac \xi 3} v(x,\epsilon^{-\gamma}t)$, the previous equation reads,
\[
\begin{split}
D_\epsilon(x,t) & = \epsilon^{\frac 12 -\frac{5\gamma}{ 4}-\frac \xi 3} \int_0^t\!\rmd s \int\!\rmd y\, \partial_x\partial_y K(x,y,\epsilon^{-\gamma}(t-s)) \\ & \quad \times  \big(3\bam D^2_\epsilon + \epsilon^{\frac 12-\frac{\gamma}{4}-\frac \xi 3} D^3_\epsilon \big)(y,s) + \epsilon^{\frac\gamma 4+\frac \xi 3}\, H(x,\epsilon^{-\gamma}t)\;.
\end{split}
\]
Recall that $D_\epsilon(x,0)=0$ and define next the stopping time, 
\[
T^* =\inf \{t>0\colon \|D_\epsilon(\cdot,t)\|_\infty \ge 1\}\;.
\]
From \eqref{cddk} we obtain then that for any $\tau\le T^*$,
\[
\|D_\epsilon(\cdot,\tau)\|_\infty \le \epsilon^{\frac 12 -\frac{3\gamma}{4}-\frac \xi 3}C \tau^{\frac 12} \big(3\,+\epsilon^{\frac12-\frac \gamma 4 -\frac \xi 3} \big) + \sup_{x\in \bb R,t\le \tau} \epsilon^{\frac\gamma 4+\frac \xi 3} |H(x,\epsilon^{-\gamma}t)|\;.
\]
This inequality implies that $T^* > T$ for $\omega \in B_\epsilon(T,\xi)$, for otherwise, evaluating at $T^*$ we would get, 
\[
 1\le C T^{\frac 12} \epsilon^{\frac 12 -\frac{3\gamma}{4}-\frac \xi 3}+\frac 12\;, 
\] 
which cannot be true for sufficiently small $\epsilon$, under the assumptions on $\gamma$ and $\xi$.  But $T^* > T$ is precisely \eqref{gapp}. 

To prove \eqref{gapp2}, we notice that on the set $B_\epsilon(T,\xi)$, from \eqref{tesc} and since \eqref{gapp} holds on that set,  
\[
\begin{split}
& \sup_{(x,t)\in\mc T} |v(x,\epsilon^{-\gamma}t) - \sqrt \epsilon H(x,\epsilon^{-\gamma}t)| \\ & \qquad \le 4 \epsilon^{-\gamma +1-\frac \gamma 2 -\frac{2 \xi}{3}} \sup_{(x,t)\in\mc T} \Big|\int_0^t\! \rmd s \int\!\rmd y\, \partial_x\partial_y K(x,y,\epsilon^{-\gamma}(t-s)) \Big| \\ & \qquad \le C T^{\frac 12} \epsilon^{1-\gamma -\frac{2 \xi}{ 3}}\;, 
\end{split}
\]
which implies \eqref{gapp2}, for $\epsilon$ sufficiently small. 
\end{proof}

The previous result shows that, with probability going to $1$, the solution $u$ to the stochastic Cahn Hilliard equation with initial condition $\bam$ remains close to it for times of the order of $\epsilon^{-\gamma}$ if $\gamma<\frac 23$. It also gives an idea about the fluctuations. To make it precise, we recall the notion of center of a front, already considered in \cite{BDP,BBDP,BBB}  to study the front fluctuations for the Allen-Cahn and phase field equations. Motivation and properties can be found in those articles and some of the references therein. 
 
Recall the definition \eqref{mcm} of $\mc M $, and  consider
\[
\mc M_\delta = \big\{m\in C^0(\bb R)\colon \mbox{dist}(m,\mc M) = \inf_{x_0\in \bb R}\|m-\bam_{x_0}\| \le \delta \big\}\;,
\]
and for $m\in \mc M_\delta$ define the \emph{center} of $m$ as the real number $\xi$ such that
\[
\langle m-\bam_\xi,\bam_\xi'\rangle = 0\;.
\]
The following result is proved in \cite{BDP}.

\begin{lemma}
\label{32}
There exists $\delta_0>0$ such that, if $\delta \le \delta_0$ and $m\in \mc M_\delta$, then $m$ has a unique center $\zeta\in \bb R$. If $x_0$ is such that  $\|m-\bam_{x_0}\|_{\infty}<\delta$, then there exists a constant $C$ depending only on $\delta$ such that 
\begin{align}
\nonumber
&{\rm (i)}\quad  |x-\zeta|\le C\, \|m-\bam_{x_0}\|_\infty\;, \\ \nonumber & {\rm (ii)}\quad \zeta=x_0-\frac 34 \langle m-\bam_{x_0},\bam_{x_0}'\rangle - \frac {9}{16}\langle m-\bam_{x_0},\bam_{x_0}'\rangle \langle m-\bam_{x_0},\bam_{x_0}''\rangle\,+R\;,
\end{align}
where the remainder $R\le C\,\|m-\bam_{x_0}\|_\infty^3$. 
\end{lemma}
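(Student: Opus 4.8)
The plan is to prove Lemma \ref{32} by an implicit-function / fixed-point argument for existence and uniqueness of the center, followed by a Taylor expansion to obtain the explicit formula (ii) and the Lipschitz bound (i). The center $\zeta$ is defined as the zero of the function
\[
\Phi(\xi) := \langle m - \bam_\xi, \bam_\xi' \rangle\;.
\]
First I would compute the derivative $\Phi'(\xi)$. Using $\partial_\xi \bam_\xi = -\bam_\xi'$ and $\partial_\xi \bam_\xi' = -\bam_\xi''$, one gets
\[
\Phi'(\xi) = \langle \bam_\xi', \bam_\xi'\rangle - \langle m - \bam_\xi, \bam_\xi''\rangle = \|\bam'\|_2^2 - \langle m - \bam_\xi, \bam_\xi''\rangle\;.
\]
Since $\bam' = \mathrm{sech}^2$ has a fixed strictly positive $L^2$ norm independent of $\xi$, and since for $m \in \mc M_\delta$ the inner product $\langle m - \bam_\xi, \bam_\xi''\rangle$ is $O(\delta)$ (because $\bam''$ is integrable and $m - \bam_\xi$ is uniformly close to an element of $\mc M$ when $\xi$ is near the minimizing $x_0$), choosing $\delta_0$ small enough forces $\Phi'(\xi) > 0$ on the relevant range of $\xi$. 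This monotonicity gives the uniqueness of the center.

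For existence, I would evaluate $\Phi$ at the nearly-minimizing point $x_0$ satisfying $\|m - \bam_{x_0}\|_\infty < \delta$: since $m - \bam_{x_0}$ is uniformly $O(\delta)$ and $\bam_{x_0}'$ is integrable, $\Phi(x_0) = O(\delta)$. Combined with the lower bound $\Phi'(\xi) \ge c > 0$ just established, the zero $\zeta$ exists in an $O(\delta)$-neighborhood of $x_0$, which simultaneously yields the bound (i), namely $|\zeta - x_0| \le C\|m - \bam_{x_0}\|_\infty$. (Strictly, one must first argue that the infimum defining $\mathrm{dist}(m,\mc M)$ is attained so that a genuine minimizing $\bam_{x_0}$ exists; this follows from the exponential decay of $\bam'$ and continuity of $m$, ensuring $\Phi(\xi) \to \pm\|\bam'\|_2^2 \ne 0$ as $\xi \to \mp\infty$.)

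To derive the second-order formula (ii), I would write $w := m - \bam_{x_0}$, so $\|w\|_\infty < \delta$, and expand the defining relation $\langle m - \bam_\zeta, \bam_\zeta'\rangle = 0$ in powers of $s := \zeta - x_0$. Writing $\bam_\zeta = \bam_{x_0} - s\,\bam_{x_0}' + \frac{s^2}{2}\bam_{x_0}'' + \cdots$ and similarly for $\bam_\zeta'$, and inserting these Taylor expansions into $\Phi(\zeta)=0$, one solves the resulting scalar equation for $s$ perturbatively. The leading term produces $s \approx -\langle w, \bam_{x_0}'\rangle / \|\bam'\|_2^2$; with the normalization $\|\bam'\|_2^2 = \frac43$ for $\bam = \tanh$, this gives the coefficient $\frac34$ in front of $\langle m - \bam_{x_0}, \bam_{x_0}'\rangle$. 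Carrying the expansion to second order and using $\langle \bam_{x_0}', \bam_{x_0}''\rangle = 0$ (odd integrand) together with the identities for the needed inner products yields the cross term with coefficient $\frac{9}{16}$, and all remaining contributions are cubic in $w$, giving $R = O(\|w\|_\infty^3)$.

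The main obstacle I anticipate is the careful bookkeeping in the Taylor expansion of step (ii): one must track which inner products of the form $\langle \bam', \bam''\rangle$, $\langle \bam', \bam'''\rangle$, $\langle w, \bam''\rangle$, etc., survive and which vanish by parity, and one must justify interchanging the expansion with the inner products using the exponential decay estimates for $\bam$ and its derivatives (the analogue of \eqref{norma} for $\tanh$). The bound on the remainder requires uniform control of third derivatives of $\xi \mapsto \bam_\xi$ in $L^1$, which again follows from exponential decay but must be stated precisely. Everything else — monotonicity of $\Phi$, the lower bound on $\Phi'$, and the Lipschitz estimate (i) — is a direct consequence of the positivity of $\|\bam'\|_2^2$ and the smallness of $\delta_0$, so the analytic heart of the lemma is entirely in the explicit perturbative computation.
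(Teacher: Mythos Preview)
The paper does not give its own proof of this lemma: it simply states ``The following result is proved in \cite{BDP}'' and moves on. So there is no argument in the present paper to compare against.

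Your proposal is the standard implicit-function argument and is correct. The computation sketched for (ii) checks out: with $s=\zeta-x_0$ and $w=m-\bam_{x_0}$, expanding $\Phi(\zeta)=0$ gives
\[
\tfrac43\, s = -\langle w,\bam_{x_0}'\rangle + s\,\langle w,\bam_{x_0}''\rangle + O(\|w\|_\infty^3)\;,
\]
where the $s^2$ terms drop because $\langle \bam',\bam''\rangle=0$; iterating once yields exactly the coefficients $-\tfrac34$ and $-\tfrac{9}{16}$. One small point worth tightening: your uniqueness argument, as written, establishes that $\Phi$ has a unique zero in an $O(\delta)$-neighborhood of $x_0$, but the lemma asserts a \emph{global} unique center. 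To rule out another zero far from $x_0$, observe that for $|\xi-x_0|$ bounded away from zero one has $|\Phi(\xi)| \ge \big|\langle \bam_{x_0}-\bam_\xi,\bam_\xi'\rangle\big| - \|w\|_\infty\|\bam'\|_1$, and the first term is bounded below by a positive constant depending only on a lower bound for $|\xi-x_0|$ (e.g., via $\langle \bam_{x_0}-\bam_\xi,\bam_\xi'\rangle \to \pm\tfrac43$ as $\xi\to\mp\infty$ and strict monotonicity in between). This is routine but should be stated explicitly.
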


We can now prove the following,
\begin{lemma}
\label{33}
Let $u$ be the solution to  \eqref{sche} with initial condition $\bam$, $T>0$, and $\gamma \in\big(0,\frac 23\big)$. Then, on a set whose probability goes to $1$ as $\epsilon\to 0$, $u(\cdot,t)$ has a unique center $z_\epsilon(t)$ for any $t\le \epsilon^{-\gamma}T$ and any $\epsilon$ sufficiently small. It satisfies, \begin{equation}
\label{apzeta}
z_\epsilon(t)=-\frac 34 \langle u(\cdot,t)-\bam,\bam'\rangle + R_\epsilon\;,
\end{equation}
where, for any given $\xi>0$, $\sup_{t\le \epsilon ^{-\gamma}T}|R_\epsilon| \le C\epsilon^{1-\frac{\gamma}2-\xi}$.
\end{lemma}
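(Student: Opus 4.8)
The plan is to combine the sup-norm stability estimate of Proposition~\ref{gapprox} with the center expansion of Lemma~\ref{32}, applied at each fixed time $t\le\epsilon^{-\gamma}T$ with reference front $\bam_{x_0}$ taken at $x_0=0$, so that $\bam_{x_0}=\bam$.

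First I would work on the set $B_\epsilon(T,\xi')$ furnished by Proposition~\ref{gapprox}, where $\xi'>0$ is a small parameter to be fixed below and whose probability tends to $1$ as $\epsilon\to0$. On this set \eqref{gapp} gives $\sup_{(x,t)\in\mc T_\epsilon}|u(x,t)-\bam(x)|<\epsilon^{\frac12-\frac\gamma4-\frac{\xi'}3}$. Since $\gamma<\frac23$, the exponent $\frac12-\frac\gamma4$ exceeds $\frac13$, so for $\epsilon$ sufficiently small and \emph{uniformly} in $t\le\epsilon^{-\gamma}T$ we have $\|u(\cdot,t)-\bam\|_\infty<\delta_0$; hence $u(\cdot,t)\in\mc M_{\delta_0}$ and Lemma~\ref{32} applies, producing a unique center $z_\epsilon(t)$ and the expansion of its part~(ii) with $x_0=0$. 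The point to stress is that the supremum in \eqref{gapp} runs over the whole of $\mc T_\epsilon$, so the center exists simultaneously for every $t$ in the window.

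Next I would estimate the two correction terms in Lemma~\ref{32}(ii). Since $\bam'=1-\bam^2$ and $\bam''$ decay exponentially, both lie in $L^1(\bb R)$, so $|\langle u(\cdot,t)-\bam,\bam'\rangle|\le\|\bam'\|_1\,\|u(\cdot,t)-\bam\|_\infty$ and likewise for $\bam''$; each inner product is therefore $O(\epsilon^{\frac12-\frac\gamma4-\frac{\xi'}3})$. Consequently the quadratic cross term $-\tfrac{9}{16}\langle u-\bam,\bam'\rangle\langle u-\bam,\bam''\rangle$ is $O(\epsilon^{1-\frac\gamma2-\frac{2\xi'}3})$, while the cubic remainder $R$ satisfies $|R|\le C\|u-\bam\|_\infty^3=O(\epsilon^{\frac32-\frac{3\gamma}4-\xi'})$, which for $\gamma<\frac23$ is even smaller than $\epsilon^{1-\frac\gamma2}$ (the gap in exponents being $\frac12-\frac\gamma4>\frac13$). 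Collecting the cross term together with $R$ into $R_\epsilon$ and choosing $\xi'$ small enough that $\frac{2\xi'}3\le\xi$ (always possible for the given $\xi>0$) and $\xi'<\frac12-\frac{3\gamma}4$ (so that Proposition~\ref{gapprox} is applicable), I obtain $\sup_{t\le\epsilon^{-\gamma}T}|R_\epsilon|\le C\epsilon^{1-\frac\gamma2-\xi}$, which is precisely \eqref{apzeta}, the leading $-\tfrac34\langle u-\bam,\bam'\rangle$ being of the larger order $\epsilon^{\frac12-\frac\gamma4}$.

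The argument is essentially a careful bookkeeping of exponents, and there is no serious analytic obstacle once Proposition~\ref{gapprox} and Lemma~\ref{32} are available. The main points requiring attention are that \eqref{gapp} controls $u-\bam$ uniformly over the entire rescaled time interval, which is what makes both the definition of the center and the exponent estimates uniform in $t\le\epsilon^{-\gamma}T$, and that one must keep the $\xi$ of the conclusion distinct from the strictly smaller $\xi'$ used when invoking Proposition~\ref{gapprox}.
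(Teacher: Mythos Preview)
Your proposal is correct and follows essentially the same approach as the paper: work on the high-probability set from Proposition~\ref{gapprox}, use \eqref{gapp} to place $u(\cdot,t)$ in $\mc M_{\delta_0}$, then apply Lemma~\ref{32}(ii) with $x_0=0$ and bound the quadratic and cubic correction terms via \eqref{gapp}. The paper's own proof is a two-line sketch pointing to exactly these ingredients; you have simply carried out the exponent bookkeeping in detail, including the careful distinction between the $\xi'$ fed into Proposition~\ref{gapprox} and the target $\xi$.
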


\begin{proof}
From \eqref{gapp}, in the set $B_\epsilon(T,\xi)$ the solution $u(\cdot,t)$ has a unique center $z_\epsilon(t)$ for any $t\le \epsilon^{-\gamma}T$. Moreover, by item (ii) of Lemma \ref{32} with $x_0=0$ and \eqref{gapp}, the center $z_\epsilon(t)$ satisfies \eqref{apzeta}.
\end{proof}

\section{The process $H$}
\label{sec:6}

We proceed now to establish some properties of the process $H$ defined in \eqref{defH}, that will be useful to study the fluctuations of the center $z_\epsilon(t)$ of $u$, as suggested by \eqref{apzeta} and \eqref{gapp2}. 

\begin{lemma}
The process $H(x,t)$ may be decomposed as 
\begin{equation}
\label{decH}
H(x,t)=H_1(x,t)+H_2(x,t)\;,
\end{equation}
where
\begin{align}
\nn {\rm(i)}\quad  &H_2(x,t)=-\frac 12 \bam '(x) \int_0^t \int  \frac{\rme^{-\frac{y^2}{8(t-s)}}}{\sqrt{2\pi (t-s)}} \sign(y) a_\epsilon(y)\, \rmd W_{y,s}\;.
\\
\nn
{\rm(ii)}\quad & \mbox{For each $\xi>0$, $T>0$, and $\gamma<\frac 23$, the process $H_1(x,t)$ satisfies},
\\
\label{coth1}
& \qquad \qquad \qquad \lim_{\epsilon \to 0}\bb P \big(\sup _{(x,t)\in \mc T_\epsilon}|H_1(x,t)|>\epsilon^{-\xi}\big) = 0\;.
\end{align}
\end{lemma}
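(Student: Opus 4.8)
The plan is to decompose the Gaussian process $H$ by extracting from the kernel $\partial_x K$ precisely the singular ``slow'' piece responsible for the front fluctuations, leaving a remainder $H_1$ that can be controlled uniformly on $\mc T_\epsilon$. Looking at the expression \eqref{ks} for $\partial_x K^\ast$, the term $H_2$ is built from the single contribution $\tfrac12\bam'(x)\rme^{-y^2/8(t-s)}\sign(y)/\sqrt{2\pi(t-s)}$; accordingly I would \emph{define} $H_1 := H - H_2$, so that the decomposition \eqref{decH} holds by construction, and the whole content of the lemma is the estimate \eqref{coth1} on $H_1$. Thus the real task is to show that the process $H_1$, whose stochastic integrand is the kernel $\partial_x K$ with this one term removed, has variance and increments small enough that $\sqrt\epsilon$-free sup bounds of size $\epsilon^{-\xi}$ hold with probability tending to one.

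First I would write the covariance of $H_1$ as in \eqref{covH} but with the reduced kernel, and obtain a variance bound. The point is that the removed term is exactly the one carrying the non-integrable-in-$t$, $O(t^{1/2})$ growth seen in \eqref{varH} and in the computation \eqref{e4}; once it is subtracted, I expect the remaining terms (coming from $K_\infty$, $\tilde K$, $k$, and the other four terms of \eqref{ks}, estimated via \eqref{Ki}, \eqref{Kt}, \eqref{ddt}, \eqref{e1}--\eqref{e6}) to contribute only a bounded, or at worst $\log$-type, variance uniformly in $(x,t)\in\mc T_\epsilon$, rather than a power of $\epsilon^{-\gamma}$. Concretely I would aim for $\sup_{(x,t)\in\mc T_\epsilon}\bb E H_1(x,t)^2 \le C$ (up to logarithmic factors in $\epsilon^{-1}$), which is the crucial improvement over \eqref{varH}.

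Next I would establish increment bounds for $H_1$ in $x$ and $t$ of the same flavor as \eqref{hxH} and \eqref{htH}, again using Lemma \ref{propfi} together with the kernel estimates of Proposition \ref{BKT}; these give H\"older continuity of the paths and, combined with the uniform variance bound, an entropy/metric-entropy estimate exactly as in the proof of Proposition \ref{supy}. The argument then runs in parallel to Proposition \ref{suph}: a Dudley entropy integral \eqref{ent} bounds $\bb E\sup_{\mc T_\epsilon}|H_1|$ by a polylogarithmic factor times a small power of $\epsilon^{-1}$, in particular by $C\epsilon^{-\xi/2}$ say, and Borell's inequality \eqref{bineq} upgrades this to the probabilistic statement \eqref{coth1}. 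The factor $\epsilon^{-\beta}$ from the support of $a_\epsilon$ enters only logarithmically through $\log N(\delta)$, so it is harmless for any $\xi>0$.

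The main obstacle I anticipate is bookkeeping the variance of $H_1$ for large $|x|$ and large $t$ so as to confirm that subtracting $H_2$ genuinely removes the $O(t^{1/2})$ term and nothing slowly decaying survives: the cross terms in the square of the reduced kernel, and the interplay between the $K^\ast$ pieces and the remainder kernel $k$ across the $t_0$ threshold in \eqref{defK}, must all be shown to integrate to a bounded quantity uniformly over $t\le\epsilon^{-\gamma}T$. Once the sharp variance bound $\sup_{\mc T_\epsilon}\bb E H_1^2 \le C\log\epsilon^{-1}$ is in hand, the passage to the supremum is routine, being a verbatim adaptation of Propositions \ref{supy} and \ref{suph}; so I would present the variance and increment estimates in detail and merely indicate that the entropy argument is identical to the earlier ones.
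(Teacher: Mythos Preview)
Your proposal is correct and follows essentially the same route as the paper: isolate the slow piece as $H_2$, show the remainder $H_1$ has variance $O(\log(1+t))$ uniformly in $x$ (plus the analogue of Lemma~\ref{lxg} for large $|x|$), obtain increment bounds, and then repeat the entropy/Borell argument of Propositions~\ref{supy}--\ref{suph}. The only shortcut you miss is that the paper gets the increment bounds for $H_1$ not by direct kernel estimates but by computing those of the explicit $H_2$ (equations \eqref{dh2x}--\eqref{dh2t}) and subtracting from the already-proven bounds \eqref{hxH}--\eqref{htH} for $H$, which saves some work.
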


\begin{proof}
From \eqref{defH} and \eqref{defK} we may write,
\begin{align}
\nn
H(x,t) & = \id _{\{t\le t_0\}}\int_0^t \int \partial_x(K_\infty+\tilde K) (x,y,t-s)a_\epsilon(y)\,\rmd W_{y,s} \\ \nn & \quad + \id _{\{t> t_0\}}\Big(\int^{t-t_0}_0 \int (T_1+T_3+T_4)(x,y,t-s) a_\epsilon(y)\,\rmd W_{y,s} \\ \nn & \qquad\qquad + \int^{t-t_0}_0 \int \partial_x k (x,y,t-s) a_\epsilon(y)\,\rmd W_{y,s} \\ \nn & \qquad\qquad + \int_{t-t_0}^t \int \partial_x (K_\infty+\tilde K) (x,y,t-s) a_\epsilon(y)\,\rmd W_{y,s} \\ \nn & \qquad\qquad + \bam'(x) \int_{t-t_0}^t \frac{1}{2\sqrt{2\pi (t-s)}}\sign (y) \rme^{-\frac{y^2}{8(t-s)}}a_\epsilon(y)\,\rmd W_{y,s}\Big) \\ \nn & \quad + H_2(x,t)\;.
\end{align}
The term $T_j$ above ($j=1,3,4$) is defined as the first, third and fourth term in the kernel $\partial_x K^*$ in \eqref{ks}, respectively. Call $H_1(x,t)$ the process given by all the terms but the last one on the right hand side above and let us  show that $H_1$ so defined  satisfies (ii), thus concluding the proof. 

From the proof of \eqref{varH} one can see that only term  of order $t^{\frac 12}$ is  precisely the one coming from that part of $\partial_x K^*$ which is now in $H_2(x,t)$, see \eqref{dxKs}. Therefore, 
\begin{equation}
\label{svarh1}
\bb E H_1(x,t)^2\le C\big(\log(1+t)+1\big)\;,
\end{equation}
for some constant $C$ independent of $x,t$. Also, from \eqref{e1}-\eqref{e6}, it is clear that the estimate \eqref{ex} is valid for $H_1$ as well: given $\delta>0$, for any $\epsilon$ sufficiently small,
\begin{equation}
\label{varh1}
R>\epsilon^{-\frac{11\gamma}{10}} \delta^{-2}+1 \quad \Longrightarrow \quad  \sup_{|x|\ge R +\epsilon^{-\beta},\, t\le \epsilon^{-\gamma}T} \bb E H_1(x,t)^2 \le \delta ^2\;.
\end{equation} 
On the other hand, given $h>0$,
\begin{equation}
\label{dh2x}
\begin{split}
& \bb E \big(H_2(x+h,t)- H_2(x,t)\big)^2 \\ & \quad = \frac 14 \big({\bam '}(x+h)-{\bam '}(x)\big)^2\,\int_0^t\! \rmd s\int\!\rmd y\, \frac{\rme^{-\frac{y^2}{4s}}}{2\pi s} a_\epsilon(y)^2 \le C ht^{\frac 12}\;,
\end{split}
\end{equation}
\begin{equation}
\label{dh2t}
\begin{split}
& \bb E \big(H_2(x,t+h)- H_2(x,t)\big)^2 \\ & \quad = \frac{{\bam'(x)^2}}{4}\int^{t+h}_t\!\rmd s \int\!\rmd y\, \frac{\rme^{-\frac{y^2}{4(t+h-s)}}}{2\pi (t+h-s)} a_\epsilon(y)^2 \\ & \qquad + \frac{{\bam'(x)^2}}{4} \int^{t}_0\!\rmd s \int\!\rmd y\, \Big(\frac{\rme^{-\frac{y^2}{8(t+h-s)}}}{\sqrt{2\pi (t+h-s)}} -\frac{\rme^{-\frac{y^2}{8(t-s)}}}{\sqrt{2\pi (t-s)}}\Big)^2 a_\epsilon(y)^2  \\ & \quad \le C \Big(\int^h_0\frac{\rmd s}{\sqrt s} +\int_0^t\!\rmd s\, \frac{1}{\sqrt {2(s+h)}} + \frac{1}{\sqrt {2s}}-\frac{2}{\sqrt {2s +h}}\Big) \le C h^{\frac 12}\;.
\end{split}
\end{equation}
From \eqref{decH}, \eqref{hxH}, and \eqref{dh2x} we have,
\begin{equation}
\label{dh1x}
\bb E \big(H_1(x+h,t)- H_1(x,t)\big)^2 \le C h\big(t^{\frac 12} \big(1+ h^3+ h|x|^2\big) + \log\big(1+h^{-1}t^{\frac 14}\big)\big)\;.
\end{equation}
Analogously, from \eqref{htH} and \eqref{dh2t}, 
\begin{equation}
\label{dh1t}
\bb E \big(H_1(x,t+h)- H_1(x,t)\big)^2 \le C \big(h^{\frac 14}+h^{\frac 32} + (h+h^{\frac12}) t^{\frac 12}\big)\;.
\end{equation}
With the aid of \eqref{dh1x}, \eqref{dh1t}, and \eqref{varh1}, proceeding as in the proof of \eqref{esupH}, it follows that, for any $\xi>0$ and sufficiently small $\epsilon$, 
\[
\bb E \big(\sup_{x \in\mc T_\epsilon} | H_1(x,t)|\big) < \epsilon ^{-\frac\xi 5}\;.
\]
Finally, from  \eqref{svarh1} and Borell's inequality (as in \eqref{bineq}), (ii) follows.
\end{proof}

We consider next the asymptotics for the scaled process $\epsilon^{\frac\gamma 4}H(x,\epsilon^{-\gamma}t)$, which yields the  leading term for $u(x,\epsilon^{-\gamma}t) -\bam (x)$ (see Proposition \ref{gapprox}). 

\begin{proposition}
\label{asymh}
For any $x,x'\in \bb R$ and $t>t'\ge 0$,
\begin{equation} 
\label{fdd}
\lim_{\epsilon\to 0}\epsilon ^{\frac{\gamma}{2}} \bb E\big(H(x,\epsilon^{-\gamma}t)H(x',\epsilon^{-\gamma}t')\big)  = \frac{\bam'(x)\bam'(x')}{2\,\sqrt{2 \pi}} \big(\sqrt{t+t'}-\sqrt{t-t'}\big)\;.
\end{equation}
\end{proposition}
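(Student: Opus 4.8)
The plan is to compute the limit of the rescaled covariance directly from the representation \eqref{covH}, isolating the dominant contribution as $\epsilon\to 0$. Writing $H(x,t)$ via \eqref{defH}, the covariance at rescaled times is
\[
\epsilon^{\frac\gamma2}\,\bb E\big(H(x,\epsilon^{-\gamma}t)H(x',\epsilon^{-\gamma}t')\big)
= \epsilon^{\frac\gamma2}\!\int_0^{\epsilon^{-\gamma}t'}\!\!\rmd s\!\int\!\rmd y\, \partial_xK(x,y,\epsilon^{-\gamma}t-s)\,\partial_{x'}K(x',y,\epsilon^{-\gamma}t'-s)\,a_\epsilon(y)^2\;,
\]
and I would first argue that for the relevant regime the time arguments $\epsilon^{-\gamma}t-s$ and $\epsilon^{-\gamma}t'-s$ are large (of order $\epsilon^{-\gamma}$), so the long-time kernel expansion $K=K^\ast+k$ from \eqref{defK} applies. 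The contribution of the remainder $k$ is negligible: using the bound \eqref{ddt}, its self- and cross-terms are controlled by the same $R^{-1}$-type estimates already appearing in \eqref{e3}, and after multiplying by $\epsilon^{\frac\gamma2}$ they vanish in the limit. Hence only $\partial_xK^\ast$ survives, and I replace each kernel by the explicit expression \eqref{ks}.

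Next I would identify which of the four terms in \eqref{ks} produces the $\sqrt{t\pm t'}$ behavior. The leading factor in every term of \eqref{ks} is $\frac{1}{\sqrt{2\pi\tau}}$ with $\tau$ the (large) time argument; the only term whose $y$-integral against $a_\epsilon(y)^2$ does not decay is the one proportional to $\bam'(x)\varphi(y)\sign(y)$ minus its Gaussian correction, exactly the piece that was separated into $H_2$ in \eqref{dh2t} and which alone carried the $t^{\frac12}$ growth in \eqref{varH}. All remaining terms carry extra powers of $x/\tau$ or Gaussian factors $\rme^{-x^2/(8\tau)}$ that make them lower order after integration, as already quantified in \eqref{e4}--\eqref{e6}. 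Thus the surviving contribution factorizes as $\bam'(x)\bam'(x')$ times a scalar time integral. Since $a_\epsilon(y)=a(y\epsilon^\beta)\to 1$ pointwise and the $y$-integral of the relevant kernel product converges (the integrand is $L^1$ in $y$, concentrated near the origin on the scale $\sqrt\tau$), I would replace $a_\epsilon(y)^2$ by $1$ with negligible error for small $\epsilon$ via dominated convergence.

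The core computation is then the scalar limit
\[
\lim_{\epsilon\to0}\,\epsilon^{\frac\gamma2}\,\frac{1}{8\pi}\int_0^{\epsilon^{-\gamma}t'}\!\!\frac{\rmd s}{\sqrt{(\epsilon^{-\gamma}t-s)(\epsilon^{-\gamma}t'-s)}}\;,
\]
where I have used $\int\rmd y\,\varphi(y)^2$-type reductions to collapse the $y$-integral to the constant prefactor. Substituting $s=\epsilon^{-\gamma}\sigma$ turns this into $\frac{1}{8\pi}\int_0^{t'}\frac{\rmd\sigma}{\sqrt{(t-\sigma)(t'-\sigma)}}$, and the explicit antiderivative of $\big((t-\sigma)(t'-\sigma)\big)^{-1/2}$ evaluates to $-2\log\!\big(\sqrt{t-\sigma}+\sqrt{t'-\sigma}\big)$ up to constants; however, the clean closed form $\sqrt{t+t'}-\sqrt{t-t'}$ suggests instead the substitution coming from the representation of $H_2$, so I would more carefully keep the two Gaussian pieces in \eqref{dh2t} together, yielding an integrand of the form $\frac{1}{\sqrt{2(s+h)}}+\frac{1}{\sqrt{2s}}-\frac{2}{\sqrt{2s+h}}$ type (cf.\ the last display in the proof of the decomposition lemma), whose integral produces exactly $\sqrt{t+t'}-\sqrt{t-t'}$ after the scaling. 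Matching the prefactor $\frac{1}{2\sqrt{2\pi}}$ is then a bookkeeping check.

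The main obstacle I anticipate is the rigorous justification that the subleading kernel terms vanish \emph{uniformly enough} after the $\epsilon^{\frac\gamma2}$ rescaling, and in particular controlling the cross-term between the two surviving $H_2$-type pieces at distinct times $t\neq t'$, since the naive integral $\int_0^{t'}\big((t-\sigma)(t'-\sigma)\big)^{-1/2}\rmd\sigma$ gives a logarithm rather than the square-root law. The resolution is that the relevant kernel is not the pointwise product but the Gaussian-corrected difference $\big(\varphi(y)-\rme^{-y^2/(8\tau)}\big)\sign(y)$, whose $y$-integral against itself at two different time arguments produces the correct $\sqrt{t+t'}-\sqrt{t-t'}$ structure; getting this algebra exactly right, and confirming that the $\varphi$–Gaussian cancellation is what kills the logarithm and leaves the square-root, is the delicate heart of the argument.
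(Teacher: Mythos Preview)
Your overall strategy—reduce to the long-time kernel $K^\ast$, discard $k$ and the short-time piece, and identify a single product term carrying the $t^{1/2}$ growth—is exactly the paper's approach. But your ``core computation'' contains a genuine error that you yourself flag and then do not resolve.

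The mistake is in which term of $\partial_xK^\ast$ is leading and what happens in the $y$-integral. You write that the dominant piece is the one ``proportional to $\bam'(x)\varphi(y)\sign(y)$'' and then collapse the $y$-integral to a constant via $\int\varphi(y)^2\,\rmd y$. That is precisely the \emph{sub}leading term: since $\varphi\in L^2$, the product of two such pieces gives $\frac{C}{\sqrt{\tau_1\tau_2}}$ and the $s$-integral becomes $\int_0^{t'}\frac{\rmd\sigma}{\sqrt{(t-\sigma)(t'-\sigma)}}$, which is logarithmic (this is exactly the bound \eqref{tf1} in the paper). The term that actually survives the $\epsilon^{\gamma/2}$ rescaling is the second one in \eqref{ks}, namely $-\tfrac12\bam'(x)\,\rme^{-y^2/(8\tau)}\sign(y)$, taken \emph{alone}. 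The crucial point you are missing is that its $y$-integral is \emph{not} a constant: the product of two Gaussians at time arguments $\tau_1,\tau_2$ integrates to
\[
\int\!\rmd y\,\frac{\rme^{-y^2/(8\tau_1)}}{\sqrt{2\pi\tau_1}}\,\frac{\rme^{-y^2/(8\tau_2)}}{\sqrt{2\pi\tau_2}}
=\frac{2}{\sqrt{2\pi(\tau_1+\tau_2)}}\;,
\]
so with $\tau_1=s$ and $\tau_2=s+\epsilon^{-\gamma}(t-t')$ the $s$-integral becomes
$\int_0^{\epsilon^{-\gamma}t'}\frac{2\,\rmd s}{\sqrt{2\pi(2s+\epsilon^{-\gamma}(t-t'))}}$, which evaluates exactly to $\frac{2}{\sqrt{2\pi}}\big(\sqrt{\epsilon^{-\gamma}(t+t')}-\sqrt{\epsilon^{-\gamma}(t-t')}\big)$ and yields \eqref{fdd} after multiplying by $\tfrac14\bam'(x)\bam'(x')\epsilon^{\gamma/2}$. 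There is no $\varphi$--Gaussian cancellation needed to ``kill the logarithm'': the Gaussian piece by itself already produces the square-root law because its $y$-integral grows like $\sqrt\tau$, while the $\varphi$ piece is separately of lower order. Your proposed fix via the difference $(\varphi(y)-\rme^{-y^2/(8\tau)})$ would eventually recover this, since in the expansion of its self-product the Gaussian--Gaussian term dominates the other three, but you have not identified that this is the mechanism.

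A secondary issue: your claim that both kernel time arguments are ``large (of order $\epsilon^{-\gamma}$)'' is false, since $\epsilon^{-\gamma}t'-s$ ranges down to $0$. The paper handles this by the change of variables $s\mapsto\epsilon^{-\gamma}t'-s$ (so the arguments become $s$ and $s+\epsilon^{-\gamma}(t-t')$, only the first of which can be small) and then splitting the $s$-integral at $t_0$; the short-time piece $\mc E_1$ is bounded uniformly by Cauchy--Schwarz.
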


\begin{proof}
Changing variables in \eqref{covH} we obtain, 
\begin{equation}
\label{covHe}
\begin{split}
& \bb E\big(H(x,\epsilon^{-\gamma}t)H(x',\epsilon^{-\gamma}t')\big) \\ & \quad =\int^{\epsilon^{-\gamma}t'}_0\!\rmd s \int\!\rmd y\, \partial _x K(x,y,s) \partial _{x'} K\big(x',y,s+\epsilon^{-\gamma}(t-t') \big)a_{\epsilon}(y)^2  \;.
\end{split}
\end{equation}
Recall formula \eqref{defK} for $K$, observe that, as $\epsilon \to 0$, we only need to consider $\epsilon ^{-\gamma}t'>t_0$, split the time  integral above according $s\le t_0$ or not, and call $\mc E_1$ and $\mc E_2$ the resulting terms. Let us consider first the latter. From \eqref{defK},  
\begin{equation}
\label{E2}
\begin{split}
& \mc E_2 = \int^{\epsilon^{-\gamma}t'}_{t_0}\!\rmd s \int\!\rmd y\, \big(\partial _x  K^\ast (x,y,s)+\partial _xk(x,y,s)\big) a_{\epsilon}(y)^2  \\ & \qquad \times \big(\partial _{x'} K^\ast (x',y,s+(t-t')\,\epsilon^{-\gamma})+\partial _{x'} k(x',y,s+(t-t')\,\epsilon^{-\gamma})\big)\;.
\end{split}
\end{equation}
Substituting  $\partial _{x}K^\ast$ and $\partial_{x'}K^\ast$ by the expressions in \eqref{ks}, we write  $\mc E_2$ as a sum of integrals. Let us single out the integral corresponding to the product of each second term in the right hand side of \eqref{ks}, and denote it by $\mc E_{22}$,
\begin{equation}
\label{e22}
\begin{split}
\mc E_{22} & = \frac{\bam'(x) \bam'(x') }4\int^{\epsilon^{-\gamma}t'}_{t_0}\!\rmd s \int\!\rmd y\,\frac{\rme^{-\frac{y^2}{8s}}\,\rme^{-\frac{y^2}{8(s+\epsilon^{-\gamma}(t-t'))}}a_{\epsilon}(y)^2}{\sqrt{2\pi s}\,\sqrt{2\pi (s+(t-t')\,\epsilon^{-\gamma})}}\;.
\end{split}
\end{equation}
To conclude the proof, we will show that $\mathcal E_{22}$ is the only term that contributes to the limit in \eqref{fdd}, that is,
\begin{itemize}
\item[(i)]$\lim_{\epsilon\to 0}\epsilon^{\frac{\gamma}{2}}\mc E_{22}=\frac{1}{2\,\sqrt{2 \pi}}\bam'(x)\bam'(x') \big(\sqrt{t+t'}-\sqrt{t-t'}\big)$;
\item[(ii)]$\lim_{\epsilon\to 0}\epsilon^{\frac{\gamma}{2}}\big(\mc E_{2}-\mc E_{22}\big)=0$;
\item[(iii)]$\lim_{\epsilon\to 0}\epsilon^{\frac{\gamma}{2}}\mc E_{1}=0$.
\end{itemize}

\noindent
{\it Proof of} (i). Let us compute the integral in \eqref{e22}, but taking $a_{\epsilon} =1$,
\begin{equation}
\label{e22a1}
\begin{split}
\int^{\epsilon^{-\gamma}t'}_{t_0}\!\rmd s & \int\!\rmd y\,\frac{\rme^{-\frac{y^2}{8s}}\,\rme^{-\frac{y^2}{8(s+\epsilon^{-\gamma}(t-t'))}}}{\sqrt{2\pi s} \sqrt{2\pi (s+ \epsilon^{-\gamma} (t-t') )}}
\\ & =  2 \int^{t'\epsilon^{-\gamma}}_{t_0}\! \frac{\rmd s}{\sqrt{2\pi(2s+\epsilon^{-\gamma}(t-t'))}}
\\ & = \frac{2}{\sqrt{2\pi}}\Big(\sqrt{\epsilon^{-\gamma}(t+t')}-\sqrt{2t_0+\epsilon^{-\gamma}(t-t')}\Big)\;.
\end{split}
\end{equation}
Then, (i) follows once we show that 
\begin{equation}
\label{h2beta}
\lim_{\epsilon\to 0}\epsilon^{\frac{\gamma}{2}}\int^{\epsilon^{-\gamma}t'}_{t_0}\!\rmd s \int\!\rmd y\, \frac{\rme^{-\frac{y^2}{8s}}\,\rme^{-\frac{y^2}{8(s+\epsilon^{-\gamma}(t-t'))}} (1-a_{\epsilon}(y)^2)}{\sqrt{2\pi s}\,\sqrt{2\pi (s+(t-t')\epsilon^{-\gamma})}}= 0\;.
\end{equation}
To do this, we split the spatial integral in \eqref{h2beta} according to $|y|\le \epsilon ^{-\frac \beta 2}$ or not. For the first case, from the properties of $a$, we know that, given $\eta>0$, $|1-a_{\epsilon}(y)^2| <\eta$ for $\epsilon$ sufficiently small. Computing the integral integral as above we get,
\begin{equation}
\label{lbeta}
\epsilon^{\frac \gamma 2} \int^{\epsilon^{-\gamma}t'}_{t_0}\!\rmd s \int_{|y|\le\epsilon ^{-\frac \beta 2}}\!\rmd y\,\frac{\rme^{-\frac{y^2}{8s}}\,\rme^{-\frac{y^2}{8(s+\epsilon^{-\gamma}(t-t'))}}(1-a_{\epsilon}(y)^2) }{\sqrt{2\pi s}\,\sqrt{2\pi (s+(t-t')\epsilon^{-\gamma})}} \le C\eta\;.
\end{equation}
In the other case, we have,
\begin{equation}
\label{gbeta}
\begin{split}
& \epsilon^{\frac \gamma 2} \int^{\epsilon^{-\gamma}t'}_{t_0}\!\rmd s \int_{|y|>\epsilon ^{-\frac \beta 2}}\!\rmd y\,\frac{\rme^{-\frac{y^2}{8s}}\,\rme^{-\frac{y^2}{8(s+\epsilon^{-\gamma}(t-t'))}}(1-a_{\epsilon}(y)^2) }{\sqrt{2\pi s}\,\sqrt{2\pi (s+(t-t')\epsilon^{-\gamma})}} \\ & \quad \le C\epsilon^{\frac \gamma 2} \int^{\epsilon^{-\gamma}t'}_{t_0}\!\rmd s \, \frac{\rme^{-\frac{\epsilon^{-\beta}}{8s}}}{\sqrt{2\pi s}} \le C \rme^{-\epsilon^{\frac{-\beta+\gamma}{8}}} \longrightarrow 0 \quad \mbox{as } \epsilon\to 0\;,
\end{split}
\end{equation}
since $\gamma<\beta$, and (i) follows. 

\smallskip
\noindent {\it Proof of} (ii). We need to consider all the terms in \eqref{E2} except $\mc E_{22}$. From H\"older's inequality and \eqref{sk} we have, 
\begin{equation}
\label{dkk}
\begin{split}
& \Big|\int^{\epsilon^{-\gamma}t'}_{t_0}\!\rmd s \int\!\rmd y\,\partial _x k(x,y,s)\partial _{x'} k(x',y,s+\epsilon^{-\gamma}(t-t'))a_{\epsilon}(y)^2 \Big| \\ & \qquad \le \Big(\int^{\epsilon^{-\gamma}t'}_{t_0}\!\rmd s \int\!\rmd y\,\big(\partial _xk(x,y,s)\big)^2\Big)^{\frac 12 } \\ & \qquad\quad \times \Big(\int^{\epsilon^{-\gamma}t}_{t_0+\epsilon^{-\gamma}(t-t')}\!\rmd s \int\!\rmd y\,a_{\epsilon}(y)^2|\partial _{x'} k(x',y,s)|^2\Big)^{\frac 12}\le C\;.
\end{split}
\end{equation}
Analogously, from \eqref{sk} and \eqref{dxKs},
\begin{equation}
\label{dxkKs}
\begin{split}
& \Big|\int^{\epsilon^{-\gamma}t'}_{t_0}\!\rmd s \int\!\rmd y\,\partial_x K^\ast(x,y,s) \partial_{x'}k(x',y,s+\epsilon^{-\gamma}(t-t'))a_{\epsilon}(y)^2 \Big| \\ & \qquad \le \Big(\Big|\int^{\epsilon^{-\gamma}t'}_{t_0}\!\rmd s \int\!\rmd y\,\big(\partial _x K^\ast(x,y,s)\big)^2a_{\epsilon}(y)^2 \Big)^{\frac 12} \\ & \quad \qquad\qquad
 \times \Big(\int^{\epsilon^{-\gamma}t}_{t_0+\epsilon^{-\gamma}(t-t')}\!\rmd s \int\!\rmd y\, \big(\partial _{x'} k(x',y,s)\big)^2a_{\epsilon}(y)^2 \Big)^{\frac 12} \\ & \qquad \le C \big( \log (\epsilon^{-\gamma}t)+\sqrt{\epsilon^{-\gamma}t}\big)^{\frac 12}\;.
\end{split}
\end{equation}
Finally, to estimate the terms coming from $\partial _x K^\ast\partial _{x'} K^\ast$ in \eqref{E2},
we observe that, from \eqref{ks}, H\"older's inequality and \eqref{dxKs},
\begin{equation}
\label{tf1}
\begin{split}
&\Big|\int^{\epsilon^{-\gamma}t'}_{t_0}\!\rmd s \int\!\rmd y\,\partial _x K^\ast(x,y,s) \frac{ \bam'(x')\varphi(y)a_{\epsilon}(y)^2}{2\sqrt{2\pi (s+\epsilon^{-\gamma}(t-t'))}}\Big|
\\ &\qquad \qquad\le C \big( \log (t'\epsilon^{-\gamma})+(\epsilon^{-\gamma}t')^{\frac 12}\big)^{\frac 12}
\big(\log (\epsilon^{-\gamma}t)\big)^{\frac 12}\;,
\end{split}
\end{equation}
\begin{equation}
\label{tf2}
\begin{split}
&\Big|\int^{\epsilon^{-\gamma}t'}_{t_0}\!\rmd s \int\!\rmd y\,\partial _x K^\ast(x,y,s)\,\frac{x'\,\varphi(y)\rme^{-\frac{{x'}^2}{8(s+\epsilon^{-\gamma}(t-t'))}}a_{\epsilon}(y)^2 }{8\,\sqrt {2\pi} (s+\epsilon^{-\gamma}(t-t'))^{\frac 32}}\Big| \\ &\qquad \qquad\le C \big( \log (\epsilon^{-\gamma}t)+(\epsilon^{-\gamma}t)^{\frac 12}\big)^{\frac 12}\;,
\end{split}
\end{equation}
\begin{equation}
\label{tf3}
\begin{split}
&\Big|\int^{\epsilon^{-\gamma}t'}_{t_0}\!\rmd s \int\!\rmd y\,\partial _x K^\ast(x,y,s) \,\frac{(x'+y)\rme^{-\frac{(x'+y)^2}{8(s+\epsilon^{-\gamma}(t-t'))}}a_{\epsilon}(y)^2 }{4\,\sqrt {2\pi} (s+\epsilon^{-\gamma}(t-t'))^{\frac 32}} \Big|\\ & \qquad \qquad\le C \big( \log (\epsilon^{-\gamma}t)+(\epsilon^{-\gamma}t)^{\frac 12}\big)^{\frac 12}\;.
\end{split}
\end{equation}
Observe that \eqref{dkk}, \eqref {dxkKs}, \eqref{tf1}, \eqref{tf2}, and \eqref {tf3} go to zero when multiplied by $\epsilon ^{\frac \gamma 2}$, as well as the remaining terms (which are similarly estimated), and (ii) follows. 

\smallskip
\noindent{\it Proof of} (iii). From H\"older's inequality, 
\[
\begin{split}
 |\mc E_1| & = \Big| \int^{t_0}_0\!\rmd s \int\!\rmd y\, \partial _x K(x,y,s) \partial _{x'} K\big(x',y,s+\epsilon^{-\gamma}(t-t')\big) a_{\epsilon}(y)^2 \Big| \\ & \le \Big(\int^{t_0}_0\!\rmd s \int\!\rmd y\,  (\partial _x K(x,y,s))^2a_{\epsilon}(y)^2\Big)^\frac 12  \\ & \qquad \times  \Big(\int^{t_0}_0\!\rmd s \int\!\rmd y\,  \big(\partial _{x'}K(x',y,s+\epsilon^{-\gamma}(t-t'))\big)^2a_{\epsilon}(y)^2\Big)^\frac 12\;.
\end{split}
\]
The first factor on the right hand side above is bounded by a constant, as can be seen from \eqref{defK}, \eqref{eki} and \eqref{tildek}. For the last factor, from \eqref{defK} we have that, if $s\ge t_0$,  
\[
|\partial _{x'} K(x',y,s)|^2 \le 2 |\partial _{x'} K^\ast(x',y,s)|^2+2 |\partial _{x'} k(x',y,s)|^2\;.
\]
From \eqref{sk},  we obtain, 
\[
\int_{\epsilon^{-\gamma}(t-t')}^{\epsilon^{-\gamma}(t-t')+t_0}\!\rmd s \int\!\rmd y\,
 \big(\partial _{x'} k(x',y,s)\big)^2a_{\epsilon}(y)^2 \le C\;.
\]
Finally, we need to  integrate $K^\ast$. As in \eqref{dxKs},  from \eqref{ks}, 
\[
\begin{split}
& \int_{\epsilon^{-\gamma}(t-t')}^{\epsilon^{-\gamma}(t-t')+t_0}\!\rmd s \int\!\rmd y\, \big(\partial _{x'} K^\ast(x',y,s)\big)^2 a_{\epsilon}(y)^2 \\ & \qquad \qquad \qquad \le C 
 \int^{\epsilon_{-\gamma}(t-t')}_{\epsilon^{-\gamma}(t-t')+t_0}\!\rmd s\, \Big(\frac 1s+\frac {1}{s^{\frac 12}}+\frac{1}{s^{\frac 32}} \Big) \le C\;.
\end{split} 
\]
From the previous estimates, $|\mc E_1|\le C$, which implies (ii), and  concludes the proof. 
\end{proof}

The previous proposition implies that, as $\epsilon \to 0$, the scaled process $\epsilon^{\frac \gamma 4}H(x,t\epsilon^{-\gamma })$ converges in the sense of finite dimensional distributions  
to the process $(8\pi)^{-\frac 14}\,\bam'(x) r(t)$, where $r(t) $ is the one dimensional Gaussian process with covariance \eqref{dcovr}. Recall now the decomposition of $H$ given in  Lemma \ref{decH}, and denote by $h_2$ the temporal part of $H_2$, i.e.,
\begin{equation}
\label{defh2}
h_2(t)\,=\, \int_0^t \int\! \frac{\rme^{-\frac{y^2}{8(t-s)}}}{2\sqrt{2\pi (t-s)}} \sign (y) \rme^{-\frac{y^2}{8(t-s)}} a_\epsilon(y) \,\rmd W_{y,s}\;.
\end{equation}
We show next that $h_2$,  when suitably scaled, converges weakly in $C(\bb R_+)$ (equipped with the topology of uniform convergence in compacts) to the process $r$. 

\begin{lemma}
\label{hwcr}
As $\epsilon \to 0$, the real process
\begin{equation}
\label {defheps}
h^{(\epsilon)}(t) = \epsilon^{\frac\gamma 4}h_2(\epsilon^{-\gamma}t)
\end{equation}
converges weakly in $C(\bb R_+)$ to $(8\pi)^{\frac 14} r(t)$.
\end{lemma}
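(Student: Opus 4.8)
The plan is to obtain the weak convergence by combining convergence of the finite-dimensional distributions with tightness in $C(\bb R_+)$. The key structural remark is that, for each $\epsilon$, the process $h^{(\epsilon)}$ defined in \eqref{defheps} is a \emph{centered Gaussian} process, being a Wiener integral against the white noise $\dot W$. Hence its law is determined by its covariance, and the whole argument reduces to controlling
\[
c_\epsilon(t,t') := \bb E\big(h^{(\epsilon)}(t)\,h^{(\epsilon)}(t')\big) = \epsilon^{\frac\gamma2}\,\bb E\big(h_2(\epsilon^{-\gamma}t)\,h_2(\epsilon^{-\gamma}t')\big)\;,
\]
first in the limit $\epsilon\to0$ (for the finite-dimensional distributions) and then through its increments (for tightness).

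For the finite-dimensional distributions I would show that, for each fixed $t,t'$, $c_\epsilon(t,t')$ converges to the covariance of $(8\pi)^{\frac14}r$, with $r$ as in \eqref{dcovr}. This is essentially already contained in the proof of Proposition~\ref{asymh}: since $h_2$ is exactly the temporal factor of $H_2$, one has $H_2(x,t)=-\frac12\bam'(x)h_2(t)$, so $\bb E\big(h_2(\epsilon^{-\gamma}t)h_2(\epsilon^{-\gamma}t')\big)$ is, up to the factor $\frac14\bam'(x)\bam'(x')$, precisely the term $\mc E_{22}$ isolated there, and the estimates \eqref{h2beta}--\eqref{gbeta} show that replacing $a_\epsilon$ by $1$ does not affect the limit. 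As both $h^{(\epsilon)}$ and the limit $(8\pi)^{\frac14}r$ are centered Gaussian, convergence of covariances upgrades at once to convergence of all finite-dimensional distributions.

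For tightness I would exploit the exact computability of the covariance when $a_\epsilon$ is replaced by $1$. With $a\equiv1$ the kernel in \eqref{defh2} produces a covariance of the form $c_0\big(\sqrt{t+t'}-\sqrt{t-t'}\big)$, which is self-similar of order $\frac14$; hence the increment of the $a\equiv1$ version of $\epsilon^{\frac\gamma4}h_2(\epsilon^{-\gamma}\cdot)$ coincides in law with that of the process at $\epsilon=1$, and by concavity of the square root it is bounded by $2c_0|t-t'|^{\frac12}$, uniformly in $\epsilon$. Since $0\le a_\epsilon^2\le1$, the increment $\bb E\big(h^{(\epsilon)}(t)-h^{(\epsilon)}(t')\big)^2$, being the $L^2$-norm of a kernel difference weighted by $a_\epsilon^2$, is dominated by this $a\equiv1$ expression, so
\[
\bb E\big(h^{(\epsilon)}(t)-h^{(\epsilon)}(t')\big)^2 \le C\,|t-t'|^{\frac12}\qquad\text{uniformly in }\epsilon\in(0,1)\;.
\]
Because $h^{(\epsilon)}$ is Gaussian, the equivalence of Gaussian moments gives $\bb E\big|h^{(\epsilon)}(t)-h^{(\epsilon)}(t')\big|^{2p}\le C_p|t-t'|^{\frac p2}$ for every $p$; choosing $p>2$ makes the exponent exceed $1$, so the Kolmogorov--Chentsov criterion yields tightness in $C(\bb R_+)$, while $h^{(\epsilon)}(0)=0$ controls the behaviour near the origin.

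The main obstacle is the uniform-in-$\epsilon$ increment bound, and in particular disposing of the spatial cutoff $a_\epsilon$ without losing uniformity. This is exactly where the observation $a_\epsilon^2\le1$, combined with the exact self-similar form of the $a\equiv1$ covariance, is decisive: it replaces a delicate $\epsilon$-dependent estimate by a single explicit computation. Once the two ingredients are in place, $h^{(\epsilon)}\Longrightarrow(8\pi)^{\frac14}r$ in $C(\bb R_+)$ follows from the standard fact that tightness together with convergence of finite-dimensional distributions implies convergence in law.
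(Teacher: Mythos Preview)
Your proposal is correct and follows essentially the same route as the paper: convergence of finite-dimensional distributions via the covariance computation already carried out in Proposition~\ref{asymh} (specifically \eqref{e22a1} and \eqref{h2beta}, here with $t_0=0$), and tightness via the uniform increment bound $\bb E\big(h^{(\epsilon)}(t)-h^{(\epsilon)}(t')\big)^2\le C|t-t'|^{1/2}$, which the paper obtains by citing the previously established estimate \eqref{dh2t} and scaling, while you re-derive it from the explicit $a\equiv1$ covariance together with the domination $a_\epsilon^2\le1$. Your explicit boost to higher moments via Gaussianity in order to invoke Kolmogorov--Chentsov is a detail the paper leaves implicit.
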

 
\begin{proof}
From \eqref{e22a1} and \eqref{h2beta} with $t_0=0$ , it follows that 
\[
\lim_{\epsilon\to 0} \bb E\big(h^{(\epsilon)}(t)\,h^{(\epsilon)}(t')\big) =  (8\pi)^{-\frac 12}
\big( \sqrt {t+t'}- \sqrt {t-t'}\big)\;,
\] 
which implies $h^{(\epsilon)}(t) \longrightarrow r(t)$, in the sense of finite dimensional distributions.  From \eqref{dh2t} we also know that 
\[
\begin{split}
\bb E\big(h^{(\epsilon)}(t+h)-h^{(\epsilon)}(t) \big)^2 & = \epsilon^{\frac\gamma 2} \bb E\big(h_2(\epsilon^{-\gamma}(t+h))-h_2(\epsilon^{-\gamma}\,t)\big)^2 \\ & \le C \epsilon^{\frac \gamma 2}\,(h\, \epsilon^{-\gamma})^{\frac 12} =  C h^{\frac 12} \;,
\end{split}
\]
which, together with the fact  $h^{(\epsilon)}(0)=0$ implies that the corresponding family of laws is tight, and then the weak limit has to be $r$. 
\end{proof}

\noindent {\bf Proof of Theorem \ref{t1}.}
The statement concerning the uniqueness and existence of a solution $u$ to the Cahn Hilliard equation \eqref{sche}, or, equivalently  (as already discussed),  to the integral equation \eqref{linie} follows from  Proposition \ref{gapprox}, just by taking $B(\epsilon,\gamma,T)=B_\epsilon(T,\xi)$ with any $\xi$ small enough. 

\smallskip 
To prove item (i), given $\eta>0$ we fix $\xi \in \big(0,\frac 12 -\frac {3\gamma}{4}\big)$ such that $2\xi <\eta$ and we let $B_\epsilon(T,\xi)$ be as in Proposition \ref{gapprox}. Consider $\delta _0$ as in Lemma \ref{32} and define the process $\zeta_{\epsilon}(t)=z_\epsilon(t\land \tau)$, where 
\[
\tau=\inf \{t>0: u(.,t)\notin \mc M_{\delta_0}\}
\]
and $z_\epsilon (t)$ is the center of $u(\cdot,t)$ (which is well defined, as  follows from Lemma \ref{33} ). The process $\zeta_\epsilon(t)$ is clearly continuous and adapted to $\mf F_t$; note also that $B_\epsilon(T,\xi) \subset  \{\tau>\epsilon^{-\gamma}T\}$ for any $\epsilon$ sufficiently small. Let now,
\[
\nu_\epsilon (t)=-\frac 34 \langle u(\cdot,t)-\bam,\bam'\rangle\;,
\]
which is an approximation to $\zeta_\epsilon(t)$. Indeed, from \eqref{apzeta},  on the set  
$B_\epsilon(T,\xi)$, $\sup_{t\le \epsilon ^{-\gamma}T}|\zeta_\epsilon(t)-\nu_\epsilon(t)|\le \epsilon^{1-\frac{\gamma}{2}-\xi}$, which implies, for any $\epsilon$ small enough, 
\begin{equation}
\label{zmn}
\sup _{t\le \epsilon^{-\gamma}T}\|\bam_{\zeta_\epsilon(t)}-\bam_{\nu_\epsilon(t)}\|_{\infty}
\le \epsilon^{1-\frac{\gamma}{2}-2\xi}\;.
\end{equation}
Observe next that, from \eqref{decH} and \eqref{defh2}, after recalling that $\langle \bam ',\bam '\rangle=\frac 43$,
\[
\nu_\epsilon(t) = - \frac 34 \langle u(\cdot,t) - \bam -\sqrt \epsilon H,\bam'\rangle -\frac 34 \langle \sqrt \epsilon  H_1,\bam'\rangle +\sqrt \epsilon h_2(t)\;.
\]
Therefore, by \eqref{gapp2} and \eqref{coth1}, there is a set $\tilde B_\epsilon(T,\xi) \subset B_\epsilon(T,\xi)$, with $\bb P(\tilde B_\epsilon(T,\xi)) \to 1$ as $\epsilon\to 0$, such that 
\[
\sup_{t\le \epsilon ^{-\gamma}T} |\nu_\epsilon(t)-\sqrt \epsilon \, h_2(t)| \le \epsilon ^{1-\gamma -\xi}+ \epsilon ^{\frac 12-\xi} \qquad \forall\, \omega\in \tilde B_\epsilon(T,\xi)\;,
\]
and then, for any $\epsilon$ small enough,
\begin{equation}
\label{deh}
\sup _{t\le \epsilon^{-\gamma}T} \|\bam_{\nu_\epsilon (t)}-\bam_{\sqrt \epsilon h_2(t)}\|_{\infty} \le \epsilon ^{1-\gamma -2\xi}+ \epsilon ^{\frac 12-2\xi} \qquad \forall\, \omega\in \tilde B_\epsilon(T,\xi)\;.
\end{equation} 

Now, by triangle inequality we may write,
\begin{equation}
\label{ti1}
\begin{split}
\|u(\cdot,t)-\bam _{\zeta_\epsilon(t)}\|_\infty & \le \|\bam_{\zeta_\epsilon(t)}-\bam_{\nu_\epsilon(t)}\|_{\infty} + \|\bam_{\nu_\epsilon (t)}-\bam_{\sqrt \epsilon h_2(t)}\|_{\infty}  \\ & \quad + \|u(\cdot,t)-\bam_{\sqrt \epsilon h_2(t)}\|_\infty\;. 
\end{split}
\end{equation}
For this last term, from \eqref{decH} and \eqref{defh2} we have in turn,
\begin{equation}
\label{ti2}
\begin{split}
& \|u(\cdot,t)-\bam_{\sqrt \epsilon h_2(t)}\|_\infty \le \|u(\cdot, t)-\bam-\sqrt \epsilon H\|_\infty + 
\|\bam +\sqrt \epsilon H-\bam_{\sqrt \epsilon h_2(t)}\|_\infty \\ & \le \|u(\cdot,t)-\bam-\sqrt \epsilon H\|_\infty + \|\bam - \sqrt \epsilon \bam'  h_2-\bam_{\sqrt \epsilon h_2(t)}\|_\infty 
+\sqrt \epsilon \|H_1\|_\infty\;.
\end{split}
\end{equation}
The first and last terms on this last line are bounded with the aid of \eqref{gapp2} and \eqref{coth1}, while for the middle one we have, for some $\theta \in \bb R$, 
\begin{equation}
\label{m2}
|\bam-\bam_{\sqrt \epsilon h_2(t)}-\bam ' \sqrt \epsilon h_2(t)| = \frac \epsilon2 |\bam''(\theta)|h_2^2(t)\;,
\end{equation} 
and, from Lemma \ref{hwcr}, we know that  
\begin{equation}
\label{if}
\lim_{\epsilon\to 0} \bb P \big(\sup_{t\le \epsilon^{-\gamma}T} \epsilon |h_2^2(t)| >\epsilon ^{1-\frac \gamma 2-2\xi}\big) = 0\;.
\end{equation}
Item (i)  follows now from \eqref{ti1}, \eqref{zmn}, \eqref{deh}, \eqref{ti2}, \eqref{m2},  and \eqref{if}.

The convergence of $X_\epsilon (t)=\epsilon ^{-\frac 12 +\frac \gamma 4}\,\zeta_\epsilon(\epsilon ^{-\gamma} t) $ follows from the above estimates and Lemma \ref{hwcr} after writing 
\[
\zeta_\epsilon(t) = (\zeta_\epsilon(t)-\nu_\epsilon(t))+(\nu_\epsilon-\sqrt \epsilon h_2(t))+\sqrt \epsilon h_2(t)\;,
\]
which proves (ii). 

\section{The one dimensional process $r$}
\label{sec:7}

In the next proposition we summarize some properties of $r$, that follow at once from the form of the covariance function.  

\begin{proposition}
The one dimensional Gaussian process with covariance given by \eqref{dcovr} satisfies,
\begin{itemize}
\item[{\rm (i)}] It has (a modification with)  continuous paths.
\item[{\rm (ii)}] It is a self similar process of order $\frac 14$, that is, for any given $a>0$,
\[
\{r(at)\}_{t\ge 0}\overset{\rm law}{=}\{a^{\frac 14} r(t)\}_{t\ge 0}\;.
\]
\end{itemize}
\end{proposition}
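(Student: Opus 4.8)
The plan is to reduce both assertions to elementary properties of the covariance \eqref{dcovr}, exploiting that $r$ is centered and Gaussian and hence completely determined by its two-point function.

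For the self-similarity (ii), I would observe that, for fixed $a>0$, both $\{r(at)\}_{t\ge0}$ and $\{a^{\frac14}r(t)\}_{t\ge0}$ are centered Gaussian processes, so the claimed equality in law follows as soon as their covariance functions coincide. This is immediate from the homogeneity of the square root: for $t\ge t'$, \eqref{dcovr} gives
\[
\bb E\big(r(at)\,r(at')\big) = \sqrt{at+at'}-\sqrt{at-at'} = \sqrt a\,\big(\sqrt{t+t'}-\sqrt{t-t'}\big) = \sqrt a\,\bb E\big(r(t)\,r(t')\big)\;,
\]
while $\bb E\big(a^{\frac14}r(t)\cdot a^{\frac14}r(t')\big) = a^{\frac12}\,\bb E\big(r(t)\,r(t')\big) = \sqrt a\,\bb E\big(r(t)\,r(t')\big)$ as well. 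The two covariances agree, which yields self-similarity of order $\frac14$.

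For the continuity (i), I would appeal to the Kolmogorov criterion, whose only nontrivial input is a H\"older bound on the variance of the increments. Using \eqref{dcovr} and $\bb E\,r(s)^2=\sqrt{2s}$, for $t\ge t'$ one has
\[
\bb E\big(r(t)-r(t')\big)^2 = \sqrt{2t}+\sqrt{2t'}-2\sqrt{t+t'}+2\sqrt{t-t'}\;.
\]
The key observation is that, writing $h=t-t'$ and $f(x)=\sqrt{2t'+x}$, the first three terms equal the second difference $f(2h)+f(0)-2f(h)$, which is nonpositive by concavity of $\sqrt{\,\cdot\,}$. Hence
\[
\bb E\big(r(t)-r(t')\big)^2 = \underbrace{\big(\sqrt{2t}+\sqrt{2t'}-2\sqrt{t+t'}\big)}_{\le\,0} + 2\sqrt{t-t'} \le 2\sqrt{|t-t'|}\;.
\]
Since $r$ is Gaussian, all even moments of the increment are then controlled, $\bb E|r(t)-r(t')|^{2k}=c_k\big(\bb E(r(t)-r(t'))^2\big)^k\le C_k|t-t'|^{k/2}$, and taking $k\ge 3$ makes the exponent $k/2>1$, so the Kolmogorov criterion produces a modification with continuous paths (indeed H\"older continuous of any order strictly less than $\frac14$, consistent with the self-similarity exponent).

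I do not anticipate a genuine obstacle: both points are direct consequences of the explicit covariance. The only mild subtlety is the continuity estimate, where one must recognize that the singular part of the increment variance is exactly $2\sqrt{t-t'}$, the remaining contribution being nonpositive by concavity; this is what gives the clean bound feeding into Kolmogorov.
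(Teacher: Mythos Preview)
Your proof is correct and is exactly the kind of direct verification the paper has in mind: it states only that both items ``follow at once from the form of the covariance function'' and gives no further argument, so your explicit covariance computation for (ii) and the Kolmogorov--continuity estimate $\bb E(r(t)-r(t'))^2\le 2\sqrt{|t-t'|}$ for (i) simply fill in the intended details. The concavity trick isolating $2\sqrt{t-t'}$ as the sole positive contribution is the right observation.
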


We also obtain several representations of the process $r$, in terms of fractional Brownian motion, the solution to the $1-D$ stochastic heat equation and a stochastic integral with respect to Brownian motion, that may be of independent interest. 

Let us recall that a two sided fractional-Brownian motion with Hurst parameter $H$, is a one dimensional Gaussian process $\nu^{(H)}(t)$ characterized by its covariance function, 
\[
\bb E\big(\nu^{(H)}(t)\nu^{(H)}(s)\big) = \frac12 \big(|t|^{2H}+|s|^{2H}-|t-s|^{2H}\big)\;.
\]
Define the odd part of $\nu^{(H)}(t)$ as usual,
\begin{equation}
\label{odd}
\nu_O^{(H)}(t) = \frac 12 \big( \nu^{(H)}(t)-\nu^{(H)}(-t)\big)\;.
\end{equation}
We refer to \cite{MV} for an introduction and properties of fractional Brownian motion. 

Next, consider $h(x,t)$ the solution of the stochastic heat equation  for $x\in \bb R$,  with zero  initial condition,
\[
\begin{cases} \partial_t h = \frac 12 \partial_x^2 h +\dot W\;, \\  h(x,0) = 0\;, \end{cases}
\]
where $\dot W=\dot W_{x,t}$ is a space-time white noise. 

\begin{proposition}
The process $r(t)$ satisfies,
\begin{itemize}
\item[] $\{r(t)\}_{t\ge 0} \stackrel{\rm law}{=} \{2\,\nu_O^{(\frac 14)}(t)\}_{t\ge 0}\;,$
\item [] $\{r(t)\}_{t\ge 0} \stackrel{\rm law}{=} \{(2\pi)^{\frac 14} h(0,t)\}_ {t\ge 0}\;.$ 
\end{itemize}
\end{proposition}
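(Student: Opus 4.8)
The plan is to exploit that $r$, $\nu_O^{(1/4)}$, and $h(0,\cdot)$ are all centered Gaussian processes, so that equality in law is equivalent to equality of covariance functions. The entire proof therefore reduces to two covariance computations, each to be matched against the prescribed covariance $\bb E\big(r(t)r(t')\big)=\sqrt{t+t'}-\sqrt{t-t'}$ for $t\ge t'$ from \eqref{dcovr}.

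For the fractional Brownian motion representation I would compute the covariance of the odd part directly from \eqref{odd} and the defining covariance of $\nu^{(H)}$. Expanding $\bb E\big(\nu_O^{(H)}(t)\nu_O^{(H)}(s)\big)$ into four terms and inserting $\bb E\big(\nu^{(H)}(a)\nu^{(H)}(b)\big)=\frac12(|a|^{2H}+|b|^{2H}-|a-b|^{2H})$, the pure-power contributions $|a|^{2H},|b|^{2H}$ cancel among themselves, and the surviving mixed terms collapse, for $t,s>0$, to $\frac14\big[(t+s)^{2H}-|t-s|^{2H}\big]$. Specializing $H=\frac14$ gives $\frac14\big(\sqrt{t+s}-\sqrt{|t-s|}\big)$, so the prefactor $2$ in $2\,\nu_O^{(1/4)}$ (squared, hence a factor $4$) reproduces exactly $\sqrt{t+s}-\sqrt{|t-s|}$, matching \eqref{dcovr}.

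For the stochastic heat equation representation I would write the mild solution as $h(0,t)=\int_0^t\!\int g_{t-s}(y)\,\rmd W_{y,s}$, where $g_\tau(z)=(2\pi\tau)^{-1/2}\rme^{-z^2/(2\tau)}$ is the kernel of $\frac12\partial_x^2$, using that $g$ is even. By the Wiener (It\^o) isometry the covariance equals $\int_0^{t\wedge t'}\!\rmd s\int\!\rmd y\, g_{t-s}(y)\,g_{t'-s}(y)$. The inner spatial integral is the standard Gaussian product integral $\int g_a(y)g_b(y)\,\rmd y=(2\pi(a+b))^{-1/2}$, which with $a=t-s$, $b=t'-s$ becomes $(2\pi(t+t'-2s))^{-1/2}$. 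The remaining integral over $s\in[0,t\wedge t']=[0,t']$ is elementary (substitute $u=t+t'-2s$) and yields $(2\pi)^{-1/2}\big(\sqrt{t+t'}-\sqrt{t-t'}\big)$. Multiplying by the normalization $\big((2\pi)^{1/4}\big)^2=(2\pi)^{1/2}$ cancels the $(2\pi)^{-1/2}$ and again reproduces \eqref{dcovr}.

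I do not expect a serious obstacle: both identities are pure bookkeeping once the Gaussianity reduction is invoked. The only points requiring care are keeping track of the multiplicative constants $2$ and $(2\pi)^{1/4}$, and, in the heat-equation case, verifying that the spatial Gaussian product integral and the subsequent $\rmd s$ integral are finite and correctly normalized, which they are, since near the diagonal $s\to t\wedge t'$ the integrand behaves like $(t\wedge t'-s)^{-1/2}$ and is integrable. Gaussianity of all three processes together with the matched covariance then closes each equality in law.
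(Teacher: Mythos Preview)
Your proposal is correct and follows exactly the paper's approach: reduce equality in law of centered Gaussian processes to equality of covariances, and then compute. The paper's proof is in fact terser than yours---it only writes out the heat-equation covariance explicitly and merely asserts the fractional Brownian motion identity---so your expansion of the $\nu_O^{(1/4)}$ covariance is a welcome addition rather than a deviation.
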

 
\begin{proof} 
Both statements follow  by computing covariances, since all the processes involved are Gaussian. Indeed, the process $h(x,t)$ is given by 
\[
h(x,t)=\int^t_0 \frac{\rme ^{-\frac{(x-y)^2}{2(t-s)}}}{\sqrt{2\pi (t-s)\big)}}\,\rmd  W_{y,s}\;,
\]
and the covariance is easily computed,
\begin{equation*}
\begin{split}
\bb E \big(h(0,t)\,h(0,t')\big)&=\int _0^{t\land t'}\!\rmd s\int\! \rmd y \, 
\frac{\rme ^{-\frac{y^2}{2(t-s)}}}{\sqrt{2\pi (t-s)\big)}}\, \frac{\rme ^{-\frac{y^2}{2(t'-s)}}}{\sqrt{2\pi (t'-s)\big)}}
\\ &=\frac{1}{\sqrt{2\pi}}\big(\sqrt{t+t'} -\sqrt{t+t'-2(t\land t')}\big)\;.
\end{split}
\end{equation*}
\end{proof}
 
We also obtain a representation as an integral with respect to Brownian motion. 

\begin{proposition}
The process $r$ can be represented as the following integral with respect to Brownian motion $b$, 
\begin{equation}
\label{rep}
r(t)= c \int_0^t\!\frac{u^{\frac14}}{(t^2-u^2)^{\frac14}}\, \rmd b(u)\;,
\end{equation}
where $c= \big(\frac 12 B(\frac 34, \frac 34)\big)^{-\frac12}$, with $B(\cdot,\cdot)$ the usual Euler beta function. 
\end{proposition}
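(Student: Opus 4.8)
The plan is to use that both sides of \eqref{rep} are centered Gaussian processes, so that it suffices to verify they share the covariance \eqref{dcovr}. Write $\tilde r(t) := c\int_0^t u^{\frac14}(t^2-u^2)^{-\frac14}\,\rmd b(u)$ for the right-hand side. For each fixed $t$ the integrand lies in $L^2([0,t],\rmd u)$, since near $u=t$ one has $(t^2-u^2)^{-\frac12}\sim (2t)^{-\frac12}(t-u)^{-\frac12}$, which is integrable, so $\tilde r(t)$ is a well defined centered Gaussian variable. By the It\^o isometry (the integrand for $\tilde r(t')$ is supported on $[0,t']$), for $t\ge t'$,
\[
\bb E\big(\tilde r(t)\,\tilde r(t')\big) = c^2\int_0^{t'}\frac{u^{\frac12}}{(t^2-u^2)^{\frac14}\big((t')^2-u^2\big)^{\frac14}}\,\rmd u =: c^2 J(t,t')\;.
\]

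First I would reduce $J$ to a standard Euler integral. The change of variables $w=u^2$ gives $J = \frac12\int_0^{(t')^2} w^{-\frac14}(t^2-w)^{-\frac14}\big((t')^2-w\big)^{-\frac14}\,\rmd w$, and then $w=(t')^2 x$ yields
\[
J = \tfrac12\,t'\,t^{-\frac12}\int_0^1 x^{\frac34-1}(1-x)^{\frac34-1}\Big(1-\tfrac{(t')^2}{t^2}\,x\Big)^{-\frac14}\,\rmd x\;.
\]
By Euler's integral representation $\int_0^1 x^{\alpha-1}(1-x)^{\beta-1}(1-z x)^{-\kappa}\,\rmd x = B(\alpha,\beta)\,{}_2F_1(\kappa,\alpha;\alpha+\beta;z)$, this equals $\tfrac12\,t'\,t^{-\frac12}\,B(\tfrac34,\tfrac34)\,{}_2F_1(\tfrac14,\tfrac34;\tfrac32;\tfrac{(t')^2}{t^2})$. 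Since $c^2 = \big(\tfrac12 B(\tfrac34,\tfrac34)\big)^{-1}$, the Beta prefactor cancels exactly, leaving $c^2 J = t'\,t^{-\frac12}\,{}_2F_1(\tfrac14,\tfrac34;\tfrac32;\tfrac{(t')^2}{t^2})$.

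The key analytic input is then the elementary reduction ${}_2F_1(a,a+\tfrac12;\tfrac32;z^2) = \frac{(1+z)^{1-2a}-(1-z)^{1-2a}}{2(1-2a)z}$, which for $a=\tfrac14$ reads ${}_2F_1(\tfrac14,\tfrac34;\tfrac32;z^2)=z^{-1}\big((1+z)^{\frac12}-(1-z)^{\frac12}\big)$. Applying it with $z=t'/t$ gives ${}_2F_1(\tfrac14,\tfrac34;\tfrac32;\tfrac{(t')^2}{t^2}) = \frac{t^{1/2}}{t'}\big(\sqrt{t+t'}-\sqrt{t-t'}\big)$, whence $c^2 J = \sqrt{t+t'}-\sqrt{t-t'}$, which is precisely \eqref{dcovr}; this identifies the two covariances and proves the equality in law. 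The only nontrivial step, and the main obstacle, is this special-function identity: I would either cite it from standard tables of ${}_2F_1$ with half-integer lower parameter (e.g.\ DLMF \S15.4), or keep the argument self-contained by comparing Taylor coefficients, i.e.\ checking $\frac{(\frac14)_k(\frac34)_k}{(\frac32)_k\,k!}=2\binom{1/2}{2k+1}$ via the Legendre duplication formula for Pochhammer symbols. As a consistency check for the constant, at $t=t'$ one has $c^2 J(t,t)=c^2\int_0^t u^{\frac12}(t^2-u^2)^{-\frac12}\,\rmd u = c^2\,t^{\frac12}\,\tfrac12 B(\tfrac34,\tfrac12)$, and the reflection and duplication relations give $B(\tfrac34,\tfrac12)/B(\tfrac34,\tfrac34)=\sqrt 2$, so $c^2J(t,t)=\sqrt{2t}$, matching $\sqrt{t+t}-\sqrt{t-t}$.
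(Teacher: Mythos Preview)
Your proof is correct and follows essentially the same route as the paper's: both compute the covariance of the stochastic integral and identify it with \eqref{dcovr} via Euler's integral representation of ${}_2F_1$ together with the closed form ${}_2F_1(\tfrac14,\tfrac34;\tfrac32;z^2)=z^{-1}\big((1+z)^{1/2}-(1-z)^{1/2}\big)$. The only cosmetic difference is the direction of the argument --- the paper starts from $\sqrt{t+t'}-\sqrt{t-t'}$, invokes the Gradshteyn--Ryzhik formula for $(t+t')^\rho-(t-t')^\rho$ with $\rho=\tfrac12$, and unwinds to the integral, whereas you start from the It\^o isometry and reduce to the hypergeometric --- but the substance is identical.
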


\begin{proof}
Suppose that $t\ge t'\ge 0$. Then, Formula 9.121(4), p.1040 in \cite{GR} reads, 
\begin{equation}
\label{F}
{(t+t')}^{\rho}-{(t-t')}^{\rho}=2\rho\,t'\,t^{\rho-1} F\Big(-\frac{\rho-1}{2},-\frac{\rho-2}{2};\frac32,\frac{{t'}^2}{t^2}\Big)\;,
\end{equation}
where $F$ is the hypergeometric Gauss function. Taking $\rho=\frac12$, and using for $F$ the Integral Formula 9.11.1, p.1040 in \cite{GR} we obtain,
\begin{equation}
F\Big(\frac14,\frac34;\frac32,\frac{{t'}^2}{t^2}\Big) =  \frac{1}{B(\frac34,\frac34)}\int_0^1\!\rmd s\, s^{-\frac14}(1-s)^{-\frac14}\Big(1-\frac{s\,{t'}^2}{t^2}\Big)^{-\frac14}\;.
\end{equation}
The change of variables $u=t's^{\frac 12}$ in this last integral, together with \eqref{F} yields,
\begin{equation}
\sqrt{t+t'}-\sqrt{t-t'}=\frac{2}{B(\frac34,\frac34)}\int^{t'}_0\!\rmd u\, \frac{u^{\frac12}}{(t^2-u^2)^\frac14 ({t'}^2-u^2)^\frac14}\;,
\end{equation}
which proves \eqref{rep}, since the integral on the  right hand side above  is just the covariance of the process given by the stochastic integral  in  \eqref{rep}.  
\end{proof}

\appendix

\section{}
\label{sec:A}

\subsection{Proof of Proposition \ref{hoy}}
\label{A:1}

From \eqref{cov} and \eqref{dyg} 
\begin{equation}
\label{ey2}
\begin{split}
\bb E\, Y(x,t)^2  & = \int_0^t\!\rmd s \int\!\rmd y\, a_{\epsilon}(y)^2 \frac{1}{t-s}\phi'\Big(\frac{x-y}{(t-s)^{\frac 14}}\Big)^2 \\  & = \int_0^t\!\rmd s \frac{1}{s^{\frac 34}} \int\!\rmd z\, a_{\epsilon}^2\big(x-s^{\frac 14}z\big) \phi'(z)^2 \le t^{\frac 14} \|\phi'\|^2_2\;, 
\end{split}
\end{equation}
and then \eqref{vary} follows from \eqref{norma}. 

Before proving the remaining estimates, we remark that \eqref{px} implies that
\begin{equation}
\label{2d0}
|\phi'(x+H)-\phi'(x)| \le C \frac{H}{1+H}\;.
\end{equation}
We then have, 
\begin{equation}
\label{Hx}
\begin{split}
& \bb E \big(Y(x+h,t)-Y(x,t)\big)^2 \\ & \qquad = \int_0^t\!\rmd s\,  \frac{1}{s^{\frac 34}} \int\!\rmd z\, a_{\epsilon}^2\big(x-s^{\frac 14}z\big)  \left[\phi'\big(z+s^{-\frac 14} h\big)-\phi'(z)\right]^2 \\ & \qquad \le  \int_0^t\!\rmd s\,  \frac{1}{s^{\frac 34}} \int\!\rmd z\, \left[\phi'\big(z+s^{-\frac 14} h\big)-\phi'(z)\right]^2 \\ & \qquad = 2 \int_0^t\!\rmd s\,  \frac{1}{s^{\frac 34}} \int\!\rmd z\, \phi'(z) \left[\phi'(z) - \phi'\big(z+s^{-\frac 14} h\big)\right]  \\ & \qquad \le C \int_0^t\!\rmd s\, \frac{1}{s^{\frac 34}}\frac{h}{s^{\frac 14}+h} = 4\,C h\log\big(1+h^{-1}t^{\frac 14}\big)\;,
\end{split} 
\end{equation}
where the last inequality follows from \eqref{norma} and \eqref{2d0} with $H=s^{-\frac 14} h$. This proves the estimate \eqref{hx}. 
 
We next have, after changing variables  in the time integrals,
\[
\begin{split}
& \bb E\big(Y(x,t+h)-Y(x,t)\big)^2 = \int_0^h\!\rmd s \int\!\rmd y\, \frac{a_{\epsilon}(y)^2}{s}\phi'\Big(\frac{x-y}{s^{\frac 14}}\Big)^2 \\ & \qquad + \int_0^t\!\rmd s \int\!\rmd y\,  a^2_{\epsilon}(y) \Big[\frac{1}{(s+h)^{\frac 12}}\phi'\Big(\frac{x-y}{(s+h)^{\frac 14}}\Big) - \frac{1}{s^{\frac 12}}\phi'\Big(\frac{x-y}{s^{\frac 14}}\Big)\Big]^2\;.
\end{split}
\]
Let us denote by $I_1$ and $I_2$ the integrals on the right hand side. The first integral can be bounded as in \eqref{ey2} to obtain
\begin{equation}
\label{ht1}
I_1 \le C\,h^{\frac 14}\;.
\end{equation}
For the second one, we observe that $I_2\le I_{21}+I_{22}$ with
\[
\begin{split}
 I_{21} & = 2 \int_0^t\!\rmd s \int\!\rmd y\, \Big(\frac{1}{s ^{\frac 12}}-\frac{1}{(s+h)^{\frac 12}}\Big)^2 \phi'\Big(\frac{x-y}{s^{\frac 14}}\Big)^2\;, \\ I_{22} & = 2 \int_0^t\!\rmd s \int\!\rmd y\, \frac{1}{s+h}\Big[\phi' \Big(\frac{x-y}{(s+h)^{\frac 14}}\Big) - \phi'\Big(\frac{x-y}{s^{\frac 14}}\Big)\Big]^2\;.
\end{split}
\]
Now, after  the change of variables  $z=s^{-\frac 14}(x-y)$ and then $s= h\tau$ we get,
\begin{equation}
\label{ht2}
I_{21} = 2 \|\phi'\|_2^2\;  h^{\frac 14} \int_0^{h^{-1}t}\!\rmd\tau  \frac{1}{\tau^{\frac 34}}\Big(1-\frac{\tau^{\frac 12}}{(\tau+1)^{\frac 12}}\Big)^2 \le C h^{\frac 14}\;,
\end{equation}
where we used that the last integral in the right-hand side is bounded above by $\int_0^\infty\!\rmd\tau\, (1+\tau)^{-1} \tau^{-\frac 34}<+\infty$. Next, denoting $A=s^{-\frac 14}(s+h)^{\frac 14}$ and using \eqref{pt},
\begin{equation}
\label{ht3}
\begin{split}
I_{22} & = 2 \int_0^t\!\rmd s \int\!\rmd z\, \frac{1}{(s+h)^{\frac 34}}\left[\phi' (z) - \phi'\left(A z\right)\right]^2 \le C \int_0^t\!\rmd s\, \frac{(A-1)^2 }{(s+h)^{\frac 34}}\\ & = Ch^{\frac 14} \int_0^{h^{-1}t}\!\rmd \tau\, \frac{\big[(1+\tau)^{\frac 14}-\tau^{\frac 14}\big]^2}{\tau^{\frac 12} (1+\tau)^{\frac 34}} \le C h^{\frac 14}\;,
\end{split}
\end{equation}
having used that the last integral is bounded by $\int_0^\infty\!\rmd\tau\, (1+\tau)^{-\frac 34} \tau^{-\frac 12}<+\infty$. The estimate \eqref{ht} follows from \eqref{ht1}, \eqref{ht2}, and \eqref{ht3}.
\qed

\subsection{Proof of Proposition \ref{pcov}}
\label{A:2}

From \eqref{defK} and recalling $\|a_\epsilon\|_\infty\le 1$, we have that 
\begin{equation}
\label{eh2}
\begin{split}
\bb E H(x,t)^2 & = \int_0^t\!\rmd s \int\!\rmd y\, \big[\partial_x K (x,y,s) {a_\epsilon(y)}\big]^2 \\ & \le 2 \int^{t\land t_0}_0\!\rmd s \int\!\rmd y\, \big[\big(\partial_x K_\infty(x,y,s)\big)^2 + \big(\partial_x\tilde K(x,y,s)\big)^2\big] \\ & \quad + 2 \int_{t\land t_0}^t\!\rmd s \int\!\rmd y\, \big[\big(\partial_x K^{\ast}(x,y,s)\big)^2+\big(\partial_x k(x,y,s)\big)^2\big]\;.
\end{split}
\end{equation}
Now, from \eqref{Ki} we get that 
\begin{equation}
\label{eki}
\begin{split}
\int^{t\land t_0}_0\!\rmd s \int\!\rmd y\, \big(\partial_x K_\infty(x,y,s)\big)^2 
\le C\int^{t\land t_0}_0\!\rmd s \frac 1s \int\!\rmd y\, \rme^{-(\frac 2 s)^{\frac 14}\,|x-y|}\\ \le C\int^{t\land t_0}_0\!\rmd s\, \frac {1}{s^{\frac 34}} \int\!\rmd z\,  \rme^{-|z|} = C (t\land t_0)^{\frac 14}\;.
\end{split}
\end{equation}
Analogously, from \eqref{Kt} and \eqref{ddt}, 
\begin{align}
\label{tildek}
&\int^{t\land t_0}_0\!\rmd s \int\!\rmd y\, \big(\partial_x\tilde K(x,y,s)\big)^2 
\le C (t\land t_0)^{\frac 34}\;, \\ \label{sk} & \int_{t\land t_0}^t\!\rmd s \int\!\rmd y\,  \big(\partial_x k(x,y,s)\big)^2 \le C\big((t_0\land t)^{-\frac 12} - t^{-\frac 12} \big)\;.
\end{align}
We are left with the estimation of the integral of $(\partial_x K^{\ast})^2$, that gives the leading term (as $t\to \infty$). With the aid of formula \eqref{kstar}, after recalling that $\bam '$, $\varphi$, and $z\mapsto z\rme^{-z^2}$ are bounded  and integrable functions, we obtain,  
\begin{equation}
\label{dxKs}
\begin{split}
& \int_{t\land t_0}^t\!\rmd s \int\!\rmd y\, \big(\partial_x K^{\ast}(x,y,s)\big)^2 \\ &  \qquad \qquad \le C\Big(\bam'(x)^2 \Big( \log\frac{t}{t\land t_0} + t^{\frac 12}-(t_0\land t)^{\frac 12} \Big) +\frac{1}{t\land t_0}-\frac{1}{t}\Big)\;,
\end{split}
\end{equation}
and \eqref{varH} follows from \eqref{eh2}, \eqref{eki}, \eqref{tildek}, \eqref{sk}, and the above estimate. 

To prove the rest of the proposition, let us use formula \eqref{HK} for $H$, to write
\begin{equation}
\label{dxH}
\begin{split}
& \bb E \big( H(x+h,t)- H(x,t)\big)^2 \le 2\, \bb E \big( Y(x+h,t)- Y(x,t)\big)^2 \\ & \qquad\qquad\qquad+ 2\, \bb E \big( \mc G (V''(\bam) H)(x+h,t) -  \mc G (V''(\bam) H)(x,t)\big)^2\;.
\end{split}
\end{equation}
Recalling \eqref{defcalg} (with the substitution $s\to t-s$) and using $V''(\bam) \le 3\bam ^2 +1 \le 4$, we estimate the last expectation above by Cauchy Schwartz inequality, 
\begin{equation}
\label{CS}
\bb E\big(\mc G (V''(\bam) H)(x+h,t)-  \mc G (V''(\bam)H)(x,t)\big)^2 \le J(x,t)^2\;,
\end{equation}
where
\[
J(x,t) = 4 \int_0^t\!\rmd s \int\!\rmd y\, \big| \partial_y^2G(x+h,y,s)-\partial_y^2G(x,y,s)\big| \big(\bb E  H(y,t-s)^2\big)^{\frac 12}\;.
\]
From \eqref{varH} we have $\big(\bb E  H(y,t-s)^2\big)^{\frac 12} \le C\,\big( 1 + \bam'(y) (t-s)^{\frac 14} \id_{\{t-s>1\}}\big)$. Then, by formula \eqref{dyg} for $\partial_y^2G$, after changing variable 
$y=x-s^{\frac 14} z$ we get, 
\[
\begin{split}
& J(x,t) \le C \int_0^t\!\rmd s\, \frac{1}{s^{\frac 12}} \int\!\rmd z\, \big|\phi''\big(z+s^{-\frac 14}h\big)-\phi''\big(z\big)\big| \\ & \quad\;  + C t^{\frac 14}\int_{0}^{(t-1)_+}\!\rmd s\, \frac{1}{s^{\frac 12}} \int\!\rmd z\, \big|\phi''\big(z+s^{-\frac 14}h\big)-\phi''\big(z\big)\big| \bam'(x-s^{\frac 14} z)\,,
\end{split}
\]
which implies, as $\bam'$ is bounded,
\begin{equation}
\label{T123}
\begin{split}
& J(x,t) \le C \int_0^t\!\rmd s\, \frac{1}{s^{\frac 12}} \int\!\rmd z\, \big|\phi''\big(z+s^{-\frac 14}h\big)-\phi''\big(z\big)\big| \\ & \quad\;  + C t^{\frac 14}\int_{0}^{1\land (t-1)_+}\!\rmd s\, \frac{1}{s^{\frac 12}} \int\!\rmd z\, \big|\phi''\big(z+s^{-\frac 14}h\big)-\phi''\big(z\big)\big| \\ & \quad\;  + C t^{\frac 14}\int_{1\land (t-1)_+}^{(t-1)_+}\!\rmd s\, \frac{1}{s^{\frac 12}} \int\!\rmd z\, \big|\phi''\big(z+s^{-\frac 14}h\big)-\phi''\big(z\big)\big| \bam'(x-s^{\frac 14} z)\,,
\end{split}
\end{equation}
Now, by \eqref{px} with $H = s^{-\frac 14}h$ and $k=2$, 
\[
\int\!\rmd z\, \big|\phi''\big(z+s^{-\frac 14}h\big)-\phi''\big(z\big)\big| \le C s^{-\frac 14}h\;, 
\]
while, by \eqref{2d1} and using that  $\bam'$ vanishes exponentially fast at $\pm\infty$,
\[
\int\!\rmd z\, \big|\phi''\big(z+s^{-\frac 14}h\big)-\phi''\big(z\big)\big| \bam'(x-s^{\frac 14} z) \le C s^{-\frac 34} \big(h^2 + (1+|x|) h\big)\;.
\]
Using the above estimates in the right hand side of \eqref{T123} we obtain,
\begin{equation}
\label{I<}
J(x,t) \le C  t^{\frac 14} \big(h^2 + (1+|x|) h\big)\;.
\end{equation}
Estimate \eqref{hxH} follows now  from  \eqref{dxH}, \eqref{hx}, \eqref{CS}, and \eqref{I<}.

\smallskip
A similar reasoning proves \eqref{htH}. Indeed, by formula \eqref{HK} for $H$ we have in this case,
\begin{equation}
\label{dtH}
\begin{split}
& \bb E \big( H(x,t+h)- H(x,t)\big)^2 \le 2\, \bb E \big( Y(x,t+h)- Y(x,t)\big)^2 \\ & \qquad\qquad\qquad+ 2\, \bb E \big( \mc G (V''(\bam) H)(x,t+h) -  \mc G (V''(\bam) H)(x,t)\big)^2\;,
\end{split}
\end{equation}
with now (as before, recalling \eqref{defcalg}, that $V''(\bam) \le 3\bam ^2 +1 \le 4$, changing $s\to t-s$, and using Cauchy Schwartz inequality),
\begin{equation}
\label{CSt}
\begin{split}
&\bb E\big(\mc G (V''(\bam) H)(x,t+h)-  \mc G (V''(\bam)H)(x,t)\big)^2 \le 2 J_1(x,t)^2 + 2 J_2(x,t)^2\;,
\end{split}
\end{equation}
where
\[
J_1(x,t) = 4 \int_0^h\!\rmd s \int\!\rmd y\, \big|\partial_y^2 G(x,y,s)\big| \big(\bb E  H(y,t + h - s)^2\big)^{\frac 12}
\]
and 
\[
J_2(x,t) = 4 \int_0^t\!\rmd s \int\!\rmd y\, \big| \partial_y^2G(x,y,s+h)-\partial_y^2G(x,y,s)\big| \big(\bb E  H(y,t-s)^2\big)^{\frac 12}\;.
\]
Proceeding as in \eqref{CS}, from \eqref{dyg}, \eqref{norma} and \eqref{varH} we obtain  
\begin{equation}
\label{CSt1}
\begin{split}
J_1(x,t) & \le 4\! \sup_{z\in \bb R,\, \tau\le t+h}\big(\bb E\,H(y,\tau)^2\big)^{\frac 12} \int_0^h\!\rmd s \frac{1}{s^{\frac 12}} \int\!\rmd z\,  \big|\phi''(z)\big|  \\ & \le  C  h^{\frac 12} \big(1+h^{\frac 14}+t^{\frac 14}\big)\;. 
\end{split}
\end{equation}
Recalling \eqref{varH} implies $\big(\bb E  H(y,t-s)^2\big)^{\frac 12} \le 1 + \bam'(y) (t-s)^{\frac 14} \id_{\{t-s>1\}}$,  we now have,
\[
\begin{split}\;\;
J_2&(x,t) \le 4 \int_0^t\!\rmd s\, \frac{1}{(s+h)^{\frac 34}} \int\!\rmd y\, \Big|\phi''\Big(\frac{x-y}{(s+h)^{\frac 14}}\Big)-\phi''\Big(\frac{x-y}{s^{\frac 14}}\Big)\Big| \\ &   + 4 \int_0^t\!\rmd s\, \Big(\frac{1}{s^{\frac34}} -\frac{1}{(s+h)^{\frac34}} \Big)\int\!\rmd y\,  \Big|\phi''\Big(\frac{x-y}{s^{\frac 14}}\Big)\Big|  \\ &  + 4\,t^{\frac 14}  \int_0^{(t-1)_+}\!\rmd s\, \frac{1}{(s+h)^{\frac 34}} \int\!\rmd y\, \Big|\phi''\Big(\frac{x-y}{(s+h)^{\frac 14}}\Big)-\phi''\Big(\frac{x-y}{s^{\frac 14}}\Big)\Big| \bam'(y) \\ &   + 4\,t^{\frac 14} \int_0^{(t-1)_+}\!\rmd s\, \Big(\frac{1}{s^{\frac34}} -\frac{1}{(s+h)^{\frac34}} \Big)\int\!\rmd y\,  \Big|\phi''\Big(\frac{x-y}{s^{\frac 14}}\Big)\Big| \bam'(y)\;.
\end{split}
\]
We change variables $y=x-(s+h)^{\frac 14}z$ in the first and third integral and 
$y=x-s^{\frac 14}z$ in the second integral, and shorthand $A=\big(\frac{s+h}{s}\big)^{\frac 14}$ to obtain,
\[
\begin{split}
J_2(x,t) & \le 4\int_0^t\!\rmd s\, \frac{1}{(s+h)^{\frac 12}} \int\!\rmd z\, \big|\phi''(z) - \phi''(Az)\big| \\ &   + 4\int_0^t\!\rmd s\, \frac{1}{s^{\frac12}}\Big(1 -\frac{s^{\frac34}}{(s+h)^{\frac34}} \Big) \int\!\rmd z\, \big|\phi''(z) \big|  \\ &  +
4\,t^{\frac 14}  \int_0^{(t-1)_+}\!\rmd s\, \frac{1}{(s+h)^{\frac 12}} \int\!\rmd z\, \big|\phi''(z) - \phi''(Az)\big| \bam'(x-(s+h)^{\frac 14}z)\\ &   +4\,t^{\frac 14} \int_0^{(t-1)_+}\!\rmd s\, \Big(\frac{1}{s^{\frac34}} -\frac{1}{(s+h)^{\frac34}} \Big)\int\!\rmd y\,  \Big|\phi''\Big(\frac{x-y}{s^{\frac 14}}\Big)\Big| \bam'(y)\;.
\end{split}
\]
We denote by $J_{21}, J_{22}, J_{23}, J_{24}$ the four integrals on the right hand side. By \eqref{pt} and noticing \begin{equation}
\label{A-1}
\begin{split}
A-1 & = \frac{(s+h)^{\frac 14}-s^{\frac 14}}{s^{\frac 14}}  = \frac{(s+h)^{\frac 12}-s^{\frac 12}}{s^{\frac 14} \big((s+h)^{\frac 14}+s^{\frac 14}\big)}  \\ & =   \frac{h}{s^{\frac 14}\big((s+h)^{\frac 14}+s^{\frac 14}\big)\big((s+h)^{\frac 12}+s^{\frac 12}\big)}  \le \Big(\frac h s\Big)^{\frac 34}\;,
\end{split}
\end{equation}
\[
J_{21} \le C\int_0^t\!\rmd s\, \frac{1}{(s+h)^{\frac 12}}\Big(\frac{h}{s}\Big)^{\frac 14} = C h^{\frac 12} \int_0^{h^{-1}t} \!\rmd\tau\, \frac{1}{\tau^{\frac 34} (1+\tau)^{\frac 12}} \le Ch^{\frac 12}\;,
\]
while, as $\phi''$ is integrable,
\[
J_{22} \le C\int_0^t\!\rmd s\, \frac{1}{s^{\frac12}}\Big(1 -\frac{s^{\frac34}}{(s+h)^{\frac34}} \Big)  = C h^{\frac 12} \int_0^{h^{-1}t} \!\rmd\tau\, \frac{(1+\tau)^{\frac 34}-\tau^{\frac 34} }{\tau^{\frac 12}(1+\tau)^{\frac 34}} \le Ch^{\frac 12}\;.
\]
To estimate $J_{23}$ we observe that \eqref{pt} implies $\big|\phi''(z) - \phi''(Az)\big| \le C\frac{A-1}A$, and $\rmd z$-integration of $\bam'(x+(s+h)^{\frac 14}z)$ gives an extra factor $(s+h)^{-\frac 14}$. Therefore,
\[
 J_{23} \le C t^{\frac 14} \int_0^t\!  \rmd s\, \frac{1}{(s+h)^{\frac 34}}
  \Big(\frac{h}{s}\Big)^{\frac 34}
 =  C(ht)^{\frac 14} \int_0^{h^{-1}t}\!\rmd\tau\, \frac{1}{\tau^{\frac 34}\,(1+\tau)^{\frac 34}}
  \le C(ht)^{\frac 14}\;. 
\]
Analogously, as $\phi''$ is bounded and $\bam'$ is integrable we finally have,
\[
J_{24} \le  C(ht)^{\frac 14}  \int_0^{h^{-1}t} \!\rmd\tau\, \frac{(1+\tau)^{\frac 34}-\tau^{\frac 34} }{\tau^{\frac 34}(1+\tau)^{\frac 34}} \le C(ht)^{\frac 14} \;.
\]
We conclude that
\begin{equation}
\label{J2}
J_2(x,t) \le C \big( h^{\frac 12} + (ht)^{\frac 14}\big)\;.
\end{equation}
The estimate \eqref{htH} follows from \eqref{dtH}, \eqref{ht}, \eqref{CSt}, \eqref{CSt1}, and \eqref{J2}.
\qed

\section*{Acknowledgements} We are indebted to Errico Presutti for stimulating and helpful discussions on the subject of this article. S.B. gratefully  aknowledges the kind hospitality and support  of the Department of Mathematics of the University of Rome La Sapienza,  and of the SPDEs programme held at the Isaac Newton Institute, where part of this research was conducted.


\begin{thebibliography}{99}

\footnotesize

\bibitem{adler} 
Adler, R.J. (1990). \textit{An introduction to continuity, extrema, and related topics for general Gaussian processes} IMS Lecture Notes - Monograph Series. 12. Hayward, CA: Institute of Mathematical Statistics. vii, 160 p.

\bibitem{ABF} 
Alikakos, N., Bates, P.W. and Fusco, G. (1991). Slow motion for the Cahn-Hilliard equation in one space dimension. \textit{J. Differential Equations}, \textbf{90}, 81--135.

\bibitem{AC} 
Allen, S. and Cahn, J. (1979). A microscopic theory for antiphase boundary motion and its application to antiphase domain coarsening. \textit{Acta Metall.}, \textbf{27}, 1084--1095.

\bibitem{ABK}
Antonopoulou, D. C., Blmker, D. and Karali, G. D. (2012).  Front motion in the one-dimensional stochastic Cahn-Hilliard equation. \textit{SIAM J. Math. Anal.}, \textbf{44}, 3242--3280.

\bibitem{AK} 
Antonopoulou, D. C. and Karali, G. D. (2011). Existence of solution for a generalized stochastic Cahn-Hilliard equation on convex domains. \textit{Discrete Contin. Dyn. Syst. Ser. B.}, \textbf{16}, 31--55.

\bibitem{BX1} 
Bates, P. W. and Xun, J. (1994). Metastable patterns for the Cahn-Hilliard equation, Part I. \textit{J. Differential Equations}, \textbf{111}, 421--457. 

\bibitem{BX2} 
Bates, P. W. and Xun, J. (1995). Metastable patterns for the Cahn-Hilliard equation: Part II. Layer dynamics and slow invariant manifold \textit{J. Differential Equations}, \textbf{117}, 165--216. 

\bibitem{BBB} 
Bertini, L., Brassesco, S. and Butt\`a, P. (2008). Soft and hard wall in a stochastic reaction diffusion equation. \textit{Arch. Ration. Mech. Anal.}, \textbf{190}, 307--345.

\bibitem{BBBP} 
Bertini, L., Brassesco, S., Butt\`a, P. and Presutti, E. (2002). Front fluctuations in one dimensional stochastic phase field equations. \textit{Ann. Henri Poincar\'e}, \textbf{3}, 29--86.

\bibitem{BBDP} 
Brassesco, S., Butt\`a, P., De Masi, A. and Presutti, E. (1998). Interface fluctuations and couplings in the $d=1$ Ginzburg-Landau equation with noise. \textit{J. Theoret. Probab.}, \textbf{11}, 25--80.

\bibitem{BDP} 
Brassesco, S., De Masi, A. and Presutti, E. (1995). Brownian fluctuations of the interface in the $d=1$
Ginzburg-Landau equation with noise. \textit{Annal. Inst. H. Poincar\'e}, \textbf{31}, 81--118.

\bibitem{BKT} 
Bricmont, J., Kupiainen, A. and Taskinen, J. (1999). Stability of Cahn-Hilliard fronts. \textit{Comm. Pure Appl. Math.}, \textbf{LII}, 839--871.

\bibitem{C} 
Cahn, J. W. (1961). On spinodal decomposition. \textit{Acta Metallurgica}, \textbf{9}, 795--801. 

\bibitem{CH1} 
Cahn, J. W. and Hilliard J. E. (1958). Free energy of a nonuniform system. I. Interfacial free energy. \textit{J. Chem. Phys.}, \textbf{28}, 258--267. 

\bibitem{CH2} 
Cahn, J. W. and Hilliard J. E. (1959). Free energy of a nonuniform system II, Thermodynamic basis. \textit{J. Chem. Phys.}, \textbf{30}, 1121--1124.

\bibitem{Cw}  
Cardon-Weber, C. (2001). Cahn-Hilliard stochastic equation: existence of the solution and of its density. \textit{Bernoulli}, \textbf{7}, 777--816. 

\bibitem{DD}  
Da Prato, G. and Debussche, A. (1996). Stochastic Cahn-Hilliard equation. \textit{Nonlinear Anal.}, \textbf{26}, 241--263.

\bibitem{F}
Funaki, T. (1995). The scaling limit for a stochastic PDE and the separation of phases. \textit{Prob. Theory Relat. Fields}, \textbf{102}, 221--288.

\bibitem{GR}
Gradshteyn, I. S. and Ryzhik, I. M. (1980). Table of integrals, series, and products. Corrected and enlarged edition edited by Alan Jeffrey. Incorporating the fourth edition edited by Yu. V. Geronimus [Yu. V. Geronimus] and M. Yu. Tseytlin [M. Yu. Tse\v{\i}tlin]. Translated from the Russian. Academic Press [Harcourt Brace Jovanovich, Publishers], New York-London-Toronto, Ont., xv+1160 pp.

\bibitem{HH} 
Hohenberg, P. C. and Halperin, B. I. (1977). Theory of dynamic critical phenomena. \textit{Rev. Mod. Phys.}, \textbf{49}, 435--479.

\bibitem{H}  
Howard, P. (2007). Asymptotic behavior near transition fronts for equations of generalized Cahn-Hilliard form. \textit{Comm. Math. Phys.}, \textbf{269}, 765--808.

\bibitem{MV}  
Mandelbrot, B. B. and Van Ness, J. W. (1968). Fractional {B}rownian motions, fractional noises and applications. \textit{SIAM Rev.}, \textbf{10}, 422--437.

\end{thebibliography}
\end{document}